\documentclass{article}
\usepackage{graphicx} 
\usepackage{amsmath}
\usepackage{amsthm}
\usepackage{amssymb}
\usepackage{fullpage}
\usepackage{complexity}
\usepackage[hidelinks, colorlinks=true, allcolors=blue!70!black]{hyperref}

\newtheorem{theorem}{Theorem}[section]
\newtheorem{lemma}{Lemma}[section]
\newtheorem{definition}{Definition}[section]
\newtheorem{openproblem}{Open Problem}[section]
\newtheorem{corollary}{Corollary}[section]
\newtheorem{remark}{Remark}[section]

\renewcommand{\epsilon}{\varepsilon}
\newcommand{\Z}{\mathbb{Z}}
\newcommand{\eps}{\varepsilon}
\newcommand{\N}{\mathbb{N}}
\newcommand{\F}{\mathbb{F}}
\renewcommand{\R}{\mathbb{R}}

\DeclareMathOperator{\size}{size}
\DeclareMathOperator{\asize}{asize}

\newlang{\SUM}{SUM}
\newlang{\APSP}{APSP}
\newlang{\TAUT}{TAUT}
\newlang{\MAX}{MAX}

\newclass{\coNTIME}{coNTIME}
\newclass{\POTIME}{POTIME}
\newclass{\PONTIME}{PONTIME}
\newclass{\LS}{LS}

\usepackage{tikz}
\usetikzlibrary{arrows.meta, graphs}

\title{Computations with polynomial evaluation oracle:\\
ruling out superlinear SETH-based lower bounds}
\author{
Tatiana Belova
\thanks{Steklov Institute of Mathematics at St. Petersburg. Email: \texttt{yukikomodo@gmail.com}.}
\and
Alexander~S. Kulikov
\thanks{JetBrains Research. Email: \texttt{alexander.s.kulikov@gmail.com}.}
\and
Ivan Mihajlin
\thanks{JetBrains Research. Email: \texttt{ivmihajlin@gmail.com}.}
\and
Olga Ratseeva
\thanks{St.~Petersburg State University. Email: \texttt{ratsionalno@mail.ru}.}
\and
Grigory Reznikov
\thanks{Steklov Institute of Mathematics at St. Petersburg. Email: \texttt{gritukan@gmail.com}.}
\and
Denil Sharipov 
\thanks{St.~Petersburg State University. Email: \texttt{denil.sharipov@rambler.ru}.}
}
\date{}

\begin{document}

\sloppy
\maketitle

\begin{abstract}
    The field of~fine-grained complexity aims 
    at~proving conditional lower bounds 
    on~the time complexity of~computational problems.
    One of~the most popular and successfully used assumptions, Strong Exponential Time Hypothesis (SETH), implies that \SAT{} cannot be~solved in~$2^{(1-\epsilon)n}$ time. In~recent years, 
    it~has been proved that known algorithms for many problems are
    optimal under SETH. 
    Despite wide applicability of~SETH, 
    for many problems,
    there are no~known 
    SETH-based lower bounds,
    so~the quest for new reductions continues.
    
    Two barriers for proving SETH-based lower bounds are known.
    Carmosino et al. (ITCS 2016) introduced the Nondeterministic Strong Exponential Time Hypothesis 
    (NSETH) stating that
    \TAUT{} cannot be~solved in~time $2^{(1-\epsilon)n}$
    even if~one allows
    nondeterminism. They used this hypothesis to~show 
    that some natural fine-grained reductions would 
    be~difficult to~obtain: proving that, say,
    $3$-\SUM{} 
    requires time $n^{1.5+\epsilon}$ under SETH, breaks NSETH and this, 
    in~turn, implies strong circuit lower bounds.
    Recently, Belova et~al. (SODA 2023) introduced the so-called polynomial formulations to~show that for many
    \NP{}-hard problems, proving any explicit exponential lower bound under SETH also implies strong circuit lower bounds.
    
    In~this paper, we~combine the two barriers above.
    We~prove that for a~range of~problems from~\P{}, including $k$-\SUM{} and triangle detection,
    proving superlinear lower bounds under SETH is~challenging
    as~it~implies new circuit lower bounds.
    To~this end, we~show that these problems
    can be solved in~nearly linear time
    with oracle calls to~evaluating a~polynomial of~constant degree.
    Then, we~introduce a~strengthening of~SETH stating that solving \SAT{} in time $2^{(1-\varepsilon)n}$ is~difficult even if~one has constant degree polynomial evaluation oracle calls. This hypothesis is~stronger and less
    believable than SETH, but refuting~it is~still challenging:
    we~show that this implies circuit lower bounds.
    
    Finally, by~considering computations that make oracle calls to~evaluating constant degree polynomials depending on~a~small number
    of~variables,
    we~show connections between nondeterministic 
    time lower bounds and arithmetic circuit lower bounds.
    Namely, we~prove that if~any
    of~\MAX{}-$k$-\SAT{},
    Binary Permanent,
    {or a~variant of~Set Cover}
    problems cannot 
    be~solved in~co-nondeterministic time $2^{(1-\varepsilon)n}$, for any $\varepsilon>0$,
    then one gets 
    arbitrary large polynomial arithmetic circuit lower bounds.
    \end{abstract}

\clearpage
\tableofcontents
\clearpage



\section{Overview}
\subsection{Background}
\subsubsection{Fine-grained complexity: conditional hardness of~algorithmic problems}
The field of~fine-grained complexity aims 
at~proving conditional lower bounds 
on~the time complexity of~computational problems. 
One of~the most popular and successfully used assumptions, Strong Exponential Time Hypothesis (SETH)~\cite{DBLP:journals/jcss/ImpagliazzoPZ01}, states that, for any $\varepsilon>0$, there exists~$k$ such that $k$-\SAT{} cannot be~solved in~$2^{(1-\epsilon)n}$ time. 
SETH has been used with great success to~prove conditional hardness results for a~variety of~problems from a~wide range of~domains: 
$n^2$ for Orthogonal Vectors~\cite{DBLP:journals/tcs/Williams05} (where $n$~is the number of~vectors), $3/2$-approximate Graph Diameter~\cite{DBLP:conf/stoc/RodittyW13} (where $n$~is the number of nodes in the input graph), and {Edit Distance}~\cite{DBLP:journals/siamcomp/BackursI18} (where $n$~is the length of~the input strings);
$2^{n}$ for Hitting~Set (where $n$~is the size of the universe) and NAE-SAT (where $n$~is the number of the variables)~\cite{DBLP:journals/talg/CyganDLMNOPSW16};
$n^k$ for $k$-Dominating Set~\cite{DBLP:conf/soda/PatrascuW10} (where $n$~is the number of~nodes in~the input graph and $k\geq7$);
$2^{\operatorname{tw}}$ for Independent Set~\cite{DBLP:journals/talg/LokshtanovMS18} (where $\operatorname{tw}$ is the treewidth of the input graph).

Despite wide applicability of~SETH, 
for many problems,
there are no~known 
SETH-based lower bounds,
so~the quest for new reductions continues. For example,
even assuming SETH, it~is not excluded that
\begin{itemize}
    \item $k$-\SUM{} can be~solved in~time $n^{1.1}$ (where $n$~is the number of~input integers);
    \item Collinearity (three points on a~line) can be~solved 
    in~time $n^{1.1}$ (where $n$~is the number of~input points);
    \item 4-Cycle Detection can be~solved in~time $(n+m)^{1.1}$ (for a~graph with $n$~nodes and $m$~edges).
\end{itemize}
In~this paper, we~show that excluding any of~these possibilities under SETH is~difficult as~it~gives
new circuit lower bounds.

\subsubsection{Hardness of~showing hardness: known barriers}
In~2016, 
Carmosino et~al.~\cite{DBLP:conf/innovations/CarmosinoGIMPS16}
gave an~explanation for the lack of~SETH-based lower bounds
for some problems (that can be~solved efficiently
both nondeterministically and co-nondeterministically).
They introduced the Nondeterministic Strong Exponential Time Hypothesis (NSETH) stating that 
for any $\epsilon>0$, there exists~$k$ such that
$k$-\TAUT{} cannot be~solved in \emph{nondeterministic}
time $2^{(1-\epsilon)n}$ (in~the $k$-\TAUT{} problem, one is given a~$k$-CNF formula and is~asked to~check whether it is unsatisfiable).
They used this hypothesis to~show 
that some natural fine-grained reductions would 
be~difficult to~obtain: proving that, say,
$3$-\SUM{} requires time $n^{1.51}$ under SETH breaks NSETH and this, 
in~turn, implies strong circuit lower bounds.
Since then, NSETH has been used to~show hardness of~proving SETH hardness for various problems: Viterbi 
Path~\cite{DBLP:conf/icml/BackursT17}, 
Approximate Diameter~\cite{DBLP:conf/stoc/Li21a},
All-Pairs Max-Flow~\cite{DBLP:journals/corr/abs-2304-04667}.
Also, 
Aggarwal and Kumar~\cite{DBLP:journals/corr/abs-2211-04385}
showed hardness of~showing hardness for variants of~the Closest Vector Problem.

Recently, Belova et~al.~\cite{DBLP:conf/soda/BelovaGKMS23} introduced the so-called polynomial formulations.
Informally, a~polynomial formulation of~size~$s(n)$ for 
a~computational problem~$A$ is~a~family $\mathcal{P}=\{P_n\}_{n \in \mathbb N}$ of~constant degree polynomials such that to~solve any instance of~$A$ of~size~$n$ it suffices to~encode~it
as~a~sequence of~integers and to~evaluate the corresponding polynomial $P_{s(n)}$.
They proved that various \NP{}-hard problems admit 
a~polynomial formulation of~size $\alpha^n$ for any $\alpha>1$
(the degree of~the corresponding polynomial depends on~$\alpha$)
and used this to~show that proving an~explicit exponential lower bound for any of~these problems under SETH implies new circuit lower bounds.

\subsection{New Results}
\subsubsection{An~even stronger version of~SETH that is~still hard to~refute}
A~polynomial formulation may be~viewed as a~many-one reduction
to~polynomial evaluation.
In~this paper, we~generalize this to~Turing reductions
and 
consider computations that make oracle calls
to~polynomial evaluation. 

As~proved in~\cite{DBLP:journals/iandc/JahanjouMV18},
refuting SETH implies Boolean circuit lower bounds. 
In~Section~\ref{sec:poseth},
we~introduce the following stronger version of~SETH.
\begin{quote}
    \textbf{POSETH:} for every~$\epsilon>0$ and every explicit family of~polynomials~$\mathcal Q$, there exists~$k$ such that $k$-\TAUT{} cannot be~solved in time $O(2^{(1-\epsilon)n})$ 
    by~a~deterministic algorithm with oracle calls 
    to~evaluating~$\mathcal Q$.
\end{quote}

We~prove that, perhaps surprisingly, refuting it~still implies interesting circuit lower bounds: if~POSETH fails,
then either $\E^{\NP}$ cannot be~computed by~Boolean series-parallel circuits of~linear size or~there exists an~explicit 
family of~polynomials 
of~constant degree 
whose arithmetic circuit complexity is~truly superlinear. 
Proving any of~these two
circuit lower bounds remains
a~major open problem for the last four decades.
We~also state a~nondeterministic version (PONSETH)
of~the new hypothesis and show that refuting~it
gives the same lower bounds.

The picture below summarizes the discussed
hypotheses together with consequences of~refuting them.
There, $\size(\cdot)$~and~$\size_{sp}(\cdot)$ denote circuit size and series-parallel circuit size, whereas $\asize(\cdot)$~denotes arithmetic circuit size. We~define formally these complexity measures in~Section~\ref{sec:circuits}.

\begin{center}
    \begin{tikzpicture}[scale=.86, transform shape]
        \tikzstyle{l} = [rectangle, draw, inner sep=.5mm, text width=43mm, align=center, minimum height=11mm]
        \tikzstyle{i} = [double, -Implies]
        
        \node[l] (e) at (0,4) {
            $\neg${ETH:}\\ 
            $\text{$3$-\SAT} \in \DTIME[2^{o(n)}]$
        };
        
        \node[l] (s) at (6,4) {
            $\neg${SETH:}\\ 
            $\SAT \in \DTIME[2^{(1-\epsilon)n}]$
        };
        
        \node[l] (n) at (7.5,2) {
            $\neg${NSETH:}\\ 
            $\TAUT \in \NTIME[2^{(1-\epsilon)n}]$
        };
        
        \node[l] (p) at (14,4) {
            $\neg${POSETH:}\\ 
            $\TAUT \in \POTIME[2^{(1-\epsilon)n}]$
        };

        \node[l] (pn) at (12.5,2) {
            $\neg${PONSETH:}\\ 
            $\TAUT \in \PONTIME[2^{(1-\epsilon)n}]$
        };
        
        \foreach \f/\t in {e/s, s/p, s/n, n/pn, p/pn}
            \draw[i] (\f) -- (\t);

        \draw[i] (s) -- (n);
        
        \node[l] (ea) at (0,0) {$\size(\E^\NP) \ge \omega(n)$};
        \node[l] (sa) at (6,0) {$\size_{sp}(\E^\NP) \ge \omega(n)$};
        \node[l] (pa) at (14,0) {$\size_{sp}(\E^\NP) \ge \omega(n)$ or $\asize(\P) \ge n^{1+\Omega(1)}$};

        \foreach \f/\t in {ea/sa, sa/pa}
            \draw[i] (\f) -- (\t);
        
        \tikzstyle{qq} = [midway, rectangle, fill=white, inner sep=.5mm, align=center]
        
        \draw[i] (e) -- (ea) node[qq] {\cite{DBLP:journals/iandc/JahanjouMV18}};
        \draw[i] (s.-160) -- (sa.160) node[qq] {\cite{DBLP:journals/iandc/JahanjouMV18}};
        \draw[i] (n) -- (sa) node[qq] {\cite{DBLP:conf/innovations/CarmosinoGIMPS16}};
        \draw[i] (p.-17) -- (pa.17) node[qq] {Theorem~\ref{thm:pnsethlowerbounds}};
        \draw[i] (pn) -- (pa) node[qq] {Remark~\ref{rem:pn}};
    \end{tikzpicture}
\end{center}

\subsubsection{Hardness of~proving SETH hardness for various problems from~\P{}}
In~Section~\ref{sec:polyp}, we~show that various
problems from~\P{} admit 
a~polynomial formulation of~size $O(n^{1+\varepsilon})$,
for any $\varepsilon>0$ (the degree of~the corresponding polynomial family depends on~$\varepsilon$).
We~do this for all problems
where the goal is~to~find a~subset of~constant size
that can be~verified ``locally'' (we~make it~formal later
in~the text). This includes problems like: $k$-\SUM{}, collinearity (three points on a~line), minimum weight $k$-clique, MAX $H$-SUBGRAPH, $\mathcal{H}$-induced subgraph problem where $\mathcal H$~is a~finite set of~graphs. Finally, we~show
that proving a~truly superlinear (that~is, of~the form $n^{1+\epsilon}$ for $\epsilon>0$) SETH-based lower bound
for any of~these problems breaks POSETH and hence implies 
circuit lower bounds.
This way, we~do~not only obtain new hardness-of-showing-hardness results, but also subsume some of~the existing ones.
For example, Carmosino et~al.~\cite{DBLP:conf/innovations/CarmosinoGIMPS16} show that it~is challenging 
to~prove an~$n^{1.5+\epsilon}$ SETH-based lower bound for $3$-\SUM{}, whereas we~show that it~is in~fact difficult to~get even an~$n^{1+\epsilon}$ SETH-based lower bound and even for
$k$-\SUM{} for any~$k$.

It~is interesting to~compare our hardness-of-hardness results
with known lower bounds. For example, it~is known that the Edit Distance problem is $n^2$-hard~\cite{DBLP:journals/siamcomp/BackursI18} 
where the $k$-Dominating Set problem is~$n^k$-hard~\cite{DBLP:conf/soda/PatrascuW10} under SETH.
Both these problems are indeed ``non-local'':
it~is not clear how to~state any of~them as~a~problem of~finding
a~constant-size subset of~the input that can be~verified
locally (without looking at~the rest of~the input).

\subsubsection{Connections between nondeterministic complexity and arithmetic circuit complexity}

Recall that in~\cite{DBLP:conf/soda/BelovaGKMS23}, it~is proved that 
for any $\alpha>1$, there exists $\Delta=\Delta(\alpha)$ such that
$k$-\SAT{}, \MAX{}-$k$-\SAT{}, Hamiltonian Path, Graph Coloring, Set Cover,
Independent Set, Clique, Vertex Cover, and $3d$-Matching problems
admit a~$\Delta$-polynomial formulation of~complexity~$\alpha^n$.
That~is, at~the expense of~increasing the degree of a~polynomial,
one can make the number of~variables in~this polynomial 
an~arbitrary small exponential function. In~Section~\ref{sec:polyar}, we~generalize this result and consider computations with polynomial evaluation oracle whose all oracle calls compute a~polynomial
that is not too~large.
We~show that \MAX{}-$k$-\SAT{}, 
{Set Cover with $n^{O(1)}$ sets of size at most $O(\log{n})$}, and Binary Permanent problems
can be~solved in~deterministic time $2^{(1-\varepsilon)n}$
by an~algorithm that makes oracle calls to~evaluating a~polynomial
whose number of~variables is~not too large. Using this, we~show that if~any of~these problem is not in~$\coNTIME[O(2^{(1-\delta)n})]$
for any~$\delta>0$, then, for any~$\gamma$, there exists an~explicit polynomial family that cannot be~computed by~arithmetic circuits of~size $O(n^\gamma)$.
It~is interesting to~compare this result with 
a~result by~\cite{DBLP:journals/iandc/JahanjouMV18}
that says that refuting SETH implies Boolean circuit lower bounds:
both results derive circuit lower bounds, but 
\cite{DBLP:journals/iandc/JahanjouMV18} derives it~from
time upper bounds, whereas we~derive~it
from time lower bounds.

\section{General setting}
\subsection{Computational problems and computational model}
We~study the following computational problems. 

\begin{itemize}
    \item Problems from~\P{}. In~all of~them,
    the parameter~$W$ grows polynomially in~$n$
    (that is, $W=n^{O(1)}$) and $R=\{-W,-W+1, \dotsc, W\}$
    denotes the range of~possible values of~the input integers.
    \begin{itemize}
        \item $k$-\SUM{}: given $k$~sets $S_1, \dotsc, S_k \subseteq R$ of~size~$n$, check whether it~is possible to~select a~single integer from each set such that the sum of~the $k$~selected integers is~zero.
        \item Collinearity (three points on a~line): given 
        $n$~points in $R^2$, check whether three of~them
        lie 
        on a~single line.
        \item $H$-induced subgraph: given a~graph with $n$~vertices and $m$~edges, check whether it~contains an~induced 
        subgraph isomorphic to $H$.  
        \item $\mathcal{H}$-induced subgraph: given a~graph with 
        $n$~vertices and $m$~edges, check whether it~contains 
        an~induced subgraph isomorphic to one of the graphs from the finite set~$\mathcal{H}$.  
    
        \item Minimum weight $k$-clique (decision version): given 
        a~graph with $n$~vertices and $m$~edges
        with weights from~$R$
        as~well as~an~integer~$w$,
        check whether 
        it~contains a~$k$-clique of~weight at~most~$w$.
        
        \item Maximum $H$-subgraph (decision version): given a~graph
        with $n$~vertices and $m$~edges with weights on vertices (or edges) from~$R$ as~well~as an~integer~$w$, check whether it~contains an induced subgraph isomorphic to~$H$ of~total weight at least~$w$.
    \end{itemize}
    \item \NP{}-hard problems.
    \begin{itemize}
        \item $k$-\SAT{}: given a~$k$-CNF formula over $n$~Boolean variables, check whether 
        it~is satisfiable.
        \item $k$-\TAUT{}: given a~$k$-CNF formula over $n$~Boolean variables, check whether 
        it~is unsatisfiable.
        \item \MAX{}-$k$-\SAT{}: given a~$k$-CNF formula over $n$~Boolean variables, find the maximum number of~clauses that 
        can be~satisfied simultaneously.
        \item Binary Permanent: given an $n \times n$ matrix $A = (a_{i,j})$ over $\{0, 1\}$, compute \[\operatorname{perm}(A) = \sum\limits_{\sigma \in S_n} \prod\limits_{i \in [n]}a_{i, \sigma(i)}\,,\] where $S_n$ is a set of all permutations of the numbers $1, 2, \dots, n$.
        \item Set Cover: given a set family $\mathcal{F} \subseteq 2^{[n]}$ of size $n^{O(1)}$, find the minimum number of sets from $\mathcal{F}$ that cover $[n]$.
    \end{itemize}
\end{itemize}



For a~decision problem~$A$ with a~complexity parameter~$n$,
by~$A_n$ we~denote
the set of~instances of~$A$ of~size~$n$,
whereas $A'_n \subseteq A$ is a~set of yes-instances.
Thus, an~algorithm for solving~$A$ must be~able to~check, for any~$n$ and any instance $I \in A_n$, whether $I \in A'_n$.
In~all problems considered in~this paper, 
the bit-size of~any instance $I \in A_n$ is~bounded 
by a~polynomial in~$n$.




We~assume that an~algorithm is~performed on a~RAM machine that on~an~input of~size~$n$, can perform standard 
arithmetic operations with $O(\log n)$-bit integers 
in~unit time.

\subsection{Explicit polynomial families}
By a~\emph{family of~polynomials~$\mathcal P$}
we~mean an~infinite sequence
of~polynomials $\{P_n\}_{n \in \mathbb N}$
such that $P_n$ is a~multivariate polynomial
depending on $n$~variables.
We say that $\mathcal P$~has \emph{degree~$d(n)$}
if, for every~$n$, every monomial of~$P_n$ has total degree
at~most~$d(n)$.
In~this paper, we~consider polynomials 
over the rings $\mathbb{Z}$ and $\mathbb{Z}_p$ where $p$ is a~prime number.

\begin{definition}[explicit family of~polynomials]
    \label{def:explicitfamily}
    We~say that a~family~$\mathcal P$ of~polynomials
    is~\emph{explicit}, if there exists a~constant~$\Delta$ such that for all~$n$,
    the degree of~$P_n$ is~at~most $\Delta$, 
    {the absolute value of~every coefficient 
    of~$P_n$ is~at~most $O(n^{\Delta})$,} and
    all of~them can be~computed (simultaneously) in~time $O(n^{\Delta})$.
\end{definition}

If $\mathcal P$ is~explicit, then the number
of~monomials of~$P_n$ is~at~most $\sum_{i=0}^{\Delta}n^i \le (\Delta+1)n^{\Delta}=O(n^{\Delta})$. Thus, 
for any $x_1, \dotsc, x_n$ 
such that $|x_i| \le \rho$ for all $i \in [n]$,
\begin{equation}\label{eq:explicitpolyupper}
    |P_n(x_1, \dotsc, x_n)| \le O(n^{\Delta} \cdot n^{\Delta} \cdot \rho^\Delta) =O((n\rho)^{2\Delta}) \, .
\end{equation}
\color{black}

\subsection{Boolean and arithmetic circuits}
\label{sec:circuits}
\begin{definition}
A~Boolean circuit~$C$ with variables $x_1,\dotsc,x_n$ is a~directed acyclic graph whose every node has in-degree zero or two. The in-degree zero nodes are labeled either by variables $x_i$ or constants~$0$ or~$1$. The in-degree two nodes are labeled by binary Boolean functions that map $\{0,1\}^2$ to $\{0,1\}$. The only gate of out-degree zero is the output of the circuit.
\end{definition}
A Boolean circuit $C$ with variables $x_1,\ldots,x_n$ computes a Boolean function $f\colon\{0,1\}^n\to\{0,1\}$ in a natural way. We define the size of~$C$ (denoted $\size(C)$) as the number of gates in~it, and the Boolean circuit complexity of a function as the minimum size of a circuit computing it.

A circuit is called \emph{series-parallel} if there exists a numbering $\ell$ of the circuit's nodes such that for every wire $(u,v), \ell(u)<\ell(v)$, and no pair of wires $(u,v), (u',v')$ satisfies $\ell(u)<\ell(u')<\ell(v)<\ell(v')$.

The best known lower bound on the size of a Boolean circuit for functions in~\P{} is~${3.1n-o(n)}$~\cite{DBLP:conf/stoc/Li022}. In fact, this bound remains the best known bound even for the much larger class of languages $\E^{\NP}$ even against the~restricted model of~series-parallel circuits. A~long-standing open problem in Boolean circuit complexity is to find an explicit language that cannot be computed by linear-size circuits from various restricted circuit classes (see~\cite{DBLP:conf/mfcs/Valiant77} and \cite[Frontier~3]{DBLP:books/daglib/0023084}).

\begin{openproblem}
Prove a lower bound of $\omega(n)$ on the size of Boolean series-parallel circuits computing a language from $\E^{\NP}$.
\end{openproblem}

\begin{definition}
An arithmetic circuit~$C$ over a ring~$R$ and variables $x_1,\dotsc,x_n$ is a directed acyclic graph whose every node has in-degree zero or two. The in-degree zero nodes are labeled either by variables $x_i$ or elements of $R$. The in-degree two nodes are labeled by either $+$ or $\times$. Every gate of out-degree zero is called an output gate.
\end{definition}
A single-output arithmetic circuit~$C$ over~$R$ computes 
a~polynomial over $R$. We say that $C$ computes a polynomial $P(x_1,\dotsc,x_n)$, if the two polynomials are \emph{identical} (as opposed to saying that $C$ computes $P$ if the two polynomials agree on every assignments of $(x_1,\ldots,x_n)\in R^n$). We define the size of~$C$ (denoted $\asize(C)$) as the number of edges in~it, and the arithmetic circuit complexity of a polynomial as the minimum size of a circuit computing it.

While it~is known~\cite{s73,bs83} that the polynomial $x_1^r+\ldots+x_n^r$ requires arithmetic circuits over $\F$ of size $\Omega(n\log r)$ (if $r$ does not divide the characteristic of~$\F$), one of the biggest challenges in algebraic complexity is to prove stronger lower bounds on the arithmetic circuit complexity of an~explicit family of~polynomials.

\begin{openproblem}
\label{op:ackts}
Find an~explicit polynomial family that requires arithmetic circuits of~size $\Omega(n^{\gamma})$ for
a~constant $\gamma>1$.
\end{openproblem}

\subsection{Minimum arithmetic circuit problem}\label{sec:macp}
Below, we~consider the problem of~finding an~arithmetic circuit
of~minimum size computing the given polynomial. It turns out
that if~the polynomial has constant degree, then one can guess and verify an~arithmetic
circuit of~size close to~optimal in~polynomial time.

\begin{definition}
    Let $p$~be a~prime number, $\mathcal P=\{P_n\}_{n \in \mathbb N}$ be a~family of polynomials over $\Z_p$, and $c \ge 1$ be an~integer parameter.
    The problem $\operatorname{Gap-MACP}_{\mathcal P, c, p}(n, s)$ is: given $n, s \in \mathbb N$, output an~arithmetic circuit over $\Z_p$ of~size at most~$cs$ computing~$P_n$, if $P_n$ can be~computed in~$\Z_p$ by a~circuit of size at~most~$s$;
    otherwise, output anything.
\end{definition}

\begin{lemma}[Lemma~3.5 in~\cite{DBLP:conf/soda/BelovaGKMS23}]
    \label{lemma:macp}
    There exists a~constant $\mu>0$ such that for every $\Delta$-explicit family of~polynomials~$\mathcal P$ and every prime number~$p$, 
    \[\operatorname{Gap-MACP}_{\mathcal P, \mu \Delta^2, p}(n, s) \in \NTIME[O(sn^{2\Delta}\log^2p)] \, .\]
\end{lemma}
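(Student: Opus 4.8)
The plan is to describe a nondeterministic machine that \emph{guesses} a candidate circuit for $P_n$ and then \emph{verifies} it by comparing monomial representations, the point being that if the guessed circuit is forced to be ``low degree'', then the verification stays within the time budget. The role of the constant $\mu\Delta^2$ in the gap factor is to make room for a homogenization step, which is simultaneously what guarantees that a good low-degree guess exists and what makes verification cheap: without restricting to low-degree circuits, checking that an arbitrary circuit computes $P_n$ would amount to a polynomial identity test, which we do not know how to perform deterministically.

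First I would invoke the standard homogenization transformation: any arithmetic circuit of size $s$ computing a polynomial of degree at most $d$ can be converted into a circuit of size $O(d^2 s)$ in which every gate is homogeneous of degree at most $d$ --- each original gate $g$ is split into gates computing its homogeneous parts $g^{(0)},\dots,g^{(d)}$, a product gate $g=uv$ is rebuilt via $g^{(k)}=\sum_{i+j=k} u^{(i)} v^{(j)}$ at a cost of $O(d^2)$ new gates and edges, and the polynomial of the original output gate is recovered as $\sum_{k\le d} g_{\mathrm{out}}^{(k)}$. Applying this with $d=\Delta$ and letting $\mu$ absorb the hidden constant: if $P_n$ has a circuit of size at most $s$, then it has a circuit of size at most $\mu\Delta^2 s$ in which every gate has syntactic degree at most $\Delta$. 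This class of circuits is what the machine searches over, and it is why the gap factor $c=\mu\Delta^2$ suffices.

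The machine then proceeds as follows. It first computes, using explicitness of $\mathcal P$, the list of monomials of $P_n$ with coefficients reduced modulo $p$; there are $O(n^{\Delta})$ of them, at cost $O(n^{\Delta}\log^2 p)$. Next it nondeterministically guesses a well-formed arithmetic circuit $C'$ over $\Z_p$ with at most $\mu\Delta^2 s$ edges together with a degree label on each gate, and rejects unless the labels are locally consistent (inputs labelled $0$ or $1$, a $+$-gate by the maximum of its inputs' labels, a $\times$-gate by their sum) and every label is at most $\Delta$. It then processes the gates in topological order, maintaining at each gate the list of monomials (with coefficients in $\Z_p$) of the polynomial it computes: an input gate is trivial; at a $+$-gate it merges and collects the two lists; at a $\times$-gate it forms all pairwise products of the two lists, adding exponent vectors and multiplying coefficients modulo $p$, and collects like terms. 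Since every gate has syntactic degree at most $\Delta$, each such list has $O(n^{\Delta})$ monomials, so a single gate costs $O(n^{2\Delta}\log^2 p)$ and all $O(\Delta^2 s)$ of them cost $O(s\, n^{2\Delta}\log^2 p)$ in total. Finally it compares the list at the output gate with the list of $P_n$; on a match it outputs $C'$, otherwise the current path produces no output.

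For correctness, completeness holds because, when $P_n$ has a circuit of size at most $s$, its homogenization is a legal guess passing every check; soundness holds because the syntactic-degree check forces the polynomial computed by $C'$ to have degree at most $\Delta$, hence to coincide with the monomial list the machine extracted, so a match with the list of $P_n$ means $C'$ computes $P_n$ as a formal polynomial, while $|C'|\le\mu\Delta^2 s = cs$. The main obstacle --- and the only real idea --- is precisely this verification step: it is affordable only because homogenization confines the search to circuits all of whose gates carry a polynomially long monomial list, and that confinement is lossless only because of the $O(\Delta^2 s)$ homogenization bound, which is also what pins down the gap constant $\mu$.
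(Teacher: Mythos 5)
Your proof matches the approach the paper sketches immediately after the lemma: invoke Strassen's homogenization to reduce the search to circuits of size $O(\Delta^2 s)$ whose gates all compute polynomials of degree at most $\Delta$, guess such a circuit, and verify it by expanding every gate as a monomial list and comparing the output gate's list to that of $P_n$. The extra bookkeeping you add (guessing degree labels and checking local consistency so that every gate provably carries an $O(n^{\Delta})$-term list) is exactly the mechanism that makes the paper's phrase ``verifying a homogeneous circuit of constant degree is particularly easy'' rigorous, and your per-gate cost of $O(n^{2\Delta}\log^2 p)$ over $O(\Delta^2 s)$ gates reproduces the stated $O(s\,n^{2\Delta}\log^2 p)$ bound.
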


The proof of~the lemma proceeds as~follows.
As~proved by~Strassen~\cite{strassen1973vermeidung},
for a~polynomial~$P$ of~degree~$\Delta$
and arithmetic circuit size~$s$,
there exists an arithmetic circuit
of~size $O(\Delta^2s)=O(s)$ all of~whose gates compute polynomials of~degree at~most~$\Delta$.
As~proved by~Strassen~\cite{strassen1973vermeidung},
for a~polynomial~$P$ of~degree~$\Delta$
and arithmetic circuit size~$s$,
there exists a~\emph{homogeneous} arithmetic circuit
of~size $O(\Delta^2s)=O(s)$: 
all its gates compute polynomials of~degree at~most~$\Delta$.
Verifying a~homogeneous circuit of~constant degree 
is~particularly easy: one just expands every gate
as a~polynomial and compares (monomial by~monomial) 
the resulting polynomial to~$P$.
Below, we~show that one can quickly
guess and verify an~arithmetic circuit 
computing a~polynomial from an~explicit family.

\begin{lemma}
    \label{lemma:macptwo}
    Let $\mathcal P=\{P_n\}_{n \in \mathbb N}$ be an~explicit family of~polynomials
    and $\rho$ be a~positive integer.
    For any~$n$, if $P_n$~has arithmetic circuit size~$s$,
    one can find an~arithmetic circuit $C_n$ of~size $O(s)$
    {and a~prime number $p=O((n\rho)^{2\Delta})$}, such that
    \[{|P_n(x)|<p} \text{ and }  P_n(x)=C_n(x) \text{ for all } x=(x_1, \dotsc, x_n) \in \mathbb{Z}^n \text{ satisfying } |x_i| \le \rho \text{ for all $i \in [n]$},\]
    in~nondeterministic time
    \[O\left(sn^{2\Delta}\log^2(n\rho) \right) \, .\]
\end{lemma}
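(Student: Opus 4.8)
The plan is to reduce Lemma~\ref{lemma:macptwo} to Lemma~\ref{lemma:macp} by lifting the problem from $\Z$ to a suitable $\Z_p$. The key observation is that an explicit family $\mathcal P$ over $\Z$, once reduced modulo a prime $p$, becomes a $\Delta$-explicit family over $\Z_p$, so we may invoke Lemma~\ref{lemma:macp} with that modulus. The only thing we must arrange is that $p$ be large enough that evaluations of $P_n$ on the box $\{x : |x_i| \le \rho\}$ do not wrap around modulo $p$: then agreement of $C_n$ and $P_n$ as polynomials over $\Z_p$ forces agreement as integers on that box.

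First I would fix the bound on evaluations. By~\eqref{eq:explicitpolyupper}, for all $x$ with $|x_i| \le \rho$ we have $|P_n(x)| \le O((n\rho)^{2\Delta})$; call this bound $M$. By Bertrand's postulate there is a prime $p$ with $2M \le p \le 4M = O((n\rho)^{2\Delta})$; such a $p$ can be found (nondeterministically guessed and verified, or even deterministically by trial division in time polynomial in $M$) well within the claimed time budget, since $\log p = O(\log(n\rho))$ and primality testing on an $O(\log(n\rho))$-bit number is cheap. With $p > 2M$ we get $|P_n(x)| < p/2 < p$ for every $x$ in the box, which is the first required conclusion, and moreover the reduction $\Z \to \Z_p$ is injective on the interval $(-p/2, p/2)$.

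Next I would apply Lemma~\ref{lemma:macp}. Let $\tilde P_n$ denote $P_n$ with all coefficients reduced modulo $p$; since $\mathcal P$ is explicit with degree bound $\Delta$, coefficient magnitude $O(n^\Delta)$, and computable in time $O(n^\Delta)$, the family $\{\tilde P_n\}$ is a $\Delta$-explicit family of polynomials over $\Z_p$ (the coefficient-size condition is vacuous over $\Z_p$, and reduction mod $p$ costs only a polynomial factor). If $P_n$ has arithmetic circuit size $s$ over $\Z$, then reducing every ring constant on an optimal circuit modulo $p$ yields an arithmetic circuit over $\Z_p$ of the same size $s$ computing $\tilde P_n$. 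Hence $\operatorname{Gap-MACP}_{\tilde{\mathcal P}, \mu\Delta^2, p}(n,s)$ is guaranteed to output an arithmetic circuit $C_n$ over $\Z_p$ of size at most $\mu\Delta^2 s = O(s)$ computing $\tilde P_n$, and by Lemma~\ref{lemma:macp} this takes nondeterministic time $O(s n^{2\Delta}\log^2 p) = O(s n^{2\Delta}\log^2(n\rho))$, matching the claimed bound.

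Finally I would transfer the conclusion back to $\Z$. The circuit $C_n$ (with its constants viewed as integers, say representatives in $[0,p)$ or in $(-p/2,p/2]$) computes, for integer inputs $x$, some integer whose reduction mod $p$ equals $\tilde P_n(x \bmod p) = P_n(x) \bmod p$. For $x$ in the box both $C_n(x)$ and $P_n(x)$ lie in an interval of length $< p$ around $0$ (for $C_n$ one has to be slightly careful: it is cleanest to instead verify, as part of the nondeterministic computation, that the guessed circuit is homogeneous of degree $\le \Delta$ in the sense of Strassen's normal form used in the proof of Lemma~\ref{lemma:macp}, so that its integer evaluations are themselves $O((n\rho)^{2\Delta})$-bounded on the box), and then equality modulo $p$ forces equality over $\Z$. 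This gives $P_n(x) = C_n(x)$ for all $x$ in the box, completing the proof.

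The main obstacle I anticipate is the last step: ensuring $C_n$'s \emph{integer} evaluations stay bounded so that the mod-$p$ equality lifts to $\Z$. A generic minimum-size arithmetic circuit over $\Z_p$ may, when its constants are lifted to $\Z$, compute enormous intermediate integers, so one cannot naively argue $|C_n(x)| < p$. The fix is to work with the homogeneous/bounded-degree normal form already produced inside Lemma~\ref{lemma:macp}'s proof — each gate there computes a polynomial of degree $\le \Delta$ with controlled coefficients — and to carry that normal form through, so that the output circuit $C_n$ is itself bounded-degree; then its evaluations on the box obey the same $O((n\rho)^{2\Delta}) < p$ bound as $P_n$. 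Everything else (Bertrand, modular reduction of an explicit family, primality verification within budget) is routine.
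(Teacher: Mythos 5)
Your proposal follows essentially the same route as the paper's proof: bound $|P_n(x)| \le M = O((n\rho)^{2\Delta})$ on the box via~\eqref{eq:explicitpolyupper}, pick a prime $p$ with $2M \le p \le 4M$, reduce coefficients modulo $p$ to get an explicit family over $\Z_p$, and invoke Lemma~\ref{lemma:macp} to guess and verify a circuit of size $O(s)$ in the stated time. Two small points of divergence are worth noting. First, your fallback of finding $p$ ``deterministically by trial division in time polynomial in $M$'' blows the budget when $\rho$ is large (e.g., $\rho = 2^{n^{o(1)}}$ as in Remark~\ref{rem:rho}, where $M$ is superpolynomial); the paper instead generates $p$ in nondeterministic time $O(\log^7 M)$ via AKS/Pratt-style primality, and your ``nondeterministically guess and verify'' alternative is the one that works. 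Second, your final step is more elaborate than necessary: the paper never lifts $C_n$ back to a circuit over $\Z$, so there is no need to enforce a degree-bounded normal form to control intermediate integers. The output pair $(C_n, p)$ is used with $C_n$ evaluated gate-by-gate modulo $p$, and since $p \ge 2M > 2|P_n(x)|$ the balanced residue in $(-p/2, p/2)$ uniquely recovers the integer $P_n(x)$; this is exactly how $C_t$ and $p_t$ are used later in the proof of Theorem~\ref{thm:pnsethlowerbounds}. Your homogenization fix is sound but does extra work that the intended interpretation of the lemma avoids.
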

\begin{proof}
    By~\eqref{eq:explicitpolyupper}, $|P_n(x_1, \dotsc, x_n)| < M$, where $M=O((n\rho)^{2\Delta})$,
    for any $(x_1, \dotsc, x_n)$ such that $|x_i| \le \rho$ for all $i \in [n]$.
    Hence, it~suffices to~compute $P_n$ modulo a~prime $2M \le p \le 4M$. One can generate~$p$
    in~nondeterministic time $O(\log^7M)=O((\log((n\rho)^{2\Delta}))^7)=O(\log^7(n\rho))$~\cite{aks04,lp19}.
    Let $P'_n$ be~$P_n$ with all coefficients taken modulo~$p$
    and $\mathcal{P}'=\{P_n'\}_{n \in \mathbb N}$.
    Then, $\asize(P'_n)=O(s)$ and a~call to
    \[\operatorname{Gap-MACP}_{\mathcal P', \mu \Delta^2, p}(n, s)\]
    (see Lemma~\ref{lemma:macp})
    returns an~arithmetic circuit~$C_n$ of~size $O(s)$
    computing $P'_n$ in $\mathbb{Z}_{p}$ in~nondeterministic time
    {
    \[ O(sn^{2\Delta}\log^2p)=O(sn^{2\Delta}\log^2(n\rho))\, .\]}
\end{proof}
\begin{remark}\label{rem:rho}
    If {$\rho=2^{n^{o(1)}}$},   
    the upper 
    bound in~Lemma~\ref{lemma:macptwo} can~be
    stated in a~more convenient form:
    \[O\left(sn^{2\Delta+1}\right)\, .\]
\end{remark}

\subsection{\SAT{} complexity hypotheses}
Below, we~state rigorously two 
popular hypotheses on~time complexity of~\SAT{}.

\begin{itemize}
    \item Strong exponential time hypothesis 
    (SETH)~\cite{DBLP:journals/jcss/ImpagliazzoPZ01, DBLP:journals/jcss/ImpagliazzoP01}: for every $\varepsilon>0$, there exists~$k$ such that
        \[\text{$k$-\SAT{}} \not \in \TIME[2^{(1-\varepsilon)n}] \, .\]
    \item Nondeterministic SETH (NSETH)~\cite{DBLP:conf/innovations/CarmosinoGIMPS16}: for every $\varepsilon>0$, there exists~$k$
    such that
    \[\text{$k$-\TAUT{}} \not \in \NTIME[2^{(1-\varepsilon)n}]\,.\] 
\end{itemize}

\cite{DBLP:journals/iandc/JahanjouMV18} proved that if SETH is false, then $\E^{\NP}$ requires series-parallel Boolean circuits of~size~$\omega(n)$. \cite[Corollary B.3]{DBLP:conf/innovations/CarmosinoGIMPS16} extended this result and showed that refuting NSETH is sufficient for such a~circuit lower bound.
\begin{theorem}[{\cite[Corollary B.3]{DBLP:conf/innovations/CarmosinoGIMPS16}}]\label{thm:nseth}
If NSETH is false, then $\E^{\NP}$ requires series-parallel Boolean circuits of size $\omega(n)$.
\end{theorem}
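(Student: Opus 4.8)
The plan is to prove the implication by contradiction in the ``win--win'' style of Williams' algorithmic method, extending the argument by which~\cite{DBLP:journals/iandc/JahanjouMV18} obtained the SETH version; the only conceptual point beyond~\cite{DBLP:journals/iandc/JahanjouMV18} --- and this is the content of~\cite[Corollary~B.3]{DBLP:conf/innovations/CarmosinoGIMPS16} --- is that the satisfiability subroutine used inside the argument is invoked purely as a black box and so may be nondeterministic. Assume $\neg$NSETH: fix $\varepsilon_0>0$ with $k$-\TAUT{}$\in\NTIME[2^{(1-\varepsilon_0)n}]$ for every~$k$, and note we will only need this for the constant width $k_0$ produced by the reduction below. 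Assume also, towards a contradiction, that every language in $\E^{\NP}$ is computed by series-parallel Boolean circuits of size $\le cn$ for some constant~$c$.

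First I would fix a hard language. Since $\NTIME[2^{O(n)}]\subseteq\E^{\NP}$ and, by the nondeterministic time hierarchy theorem, $\NTIME[2^n]\not\subseteq\NTIME[2^{(1-\varepsilon_0/2)n}]$, there is an $L\in\E^{\NP}$ with $L\notin\NTIME[2^{(1-\varepsilon_0/2)n}]$; moreover, using the local reductions of~\cite{DBLP:journals/iandc/JahanjouMV18} one may take $L$ to be a specific language, complete under tight local reductions, that is additionally ``locally checkable'' and downward self-reducible while still having this hardness. By our circuit assumption, $L$ has series-parallel circuits of size $\le cn$ at every input length~$n$. Now the algorithm to decide $x\in L$ with $|x|=n$: guess a series-parallel circuit $C$ of size $\le cn$; verify that $C$ agrees with $L$ on every input of length~$n$; output $C(x)$. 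Everything rides on the verification step.

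The verification is the heart of the argument and the place where the restriction to series-parallel linear-size circuits is indispensable. Combining the local reduction of~\cite{DBLP:journals/iandc/JahanjouMV18} with the downward self-reduction for~$L$, the statement ``$C$ agrees with $L$ on length~$n$, given that the already-verified circuit agrees with $L$ on the previous length'' is rewritten as the unsatisfiability of a $k_0$-CNF formula $F_C$ with only $(1+o(1))n$ variables --- linear, not $2^{\Theta(n)}$ --- constructible in time $\operatorname{poly}(n)$; it is exactly the series-parallel, linear-size structure of $C$ that keeps $F_C$ in this parameter regime. Running the assumed nondeterministic $k_0$-\TAUT{} algorithm on $F_C$ costs $2^{(1-\varepsilon_0)(1+o(1))n}$; inducting over the $O(n)$ input lengths keeps the total at $2^{(1-\varepsilon_0)(1+o(1))n}\cdot\operatorname{poly}(n)$, and evaluating $C(x)$ adds only $\operatorname{poly}(n)$. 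Since $(1-\varepsilon_0)(1+o(1))<1-\varepsilon_0/2$ for all large~$n$, this shows $L\in\NTIME[2^{(1-\varepsilon_0/2)n}]$, contradicting the choice of~$L$. Therefore the circuit assumption fails, i.e.\ $\size_{sp}(\E^{\NP})\ge\omega(n)$.

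The step I expect to be the main obstacle is the construction and the parameter analysis of the local reduction that produces $F_C$: it must be tight in two ways at once --- the formula $F_C$ should have $(1+o(1))n$ variables (not merely $\operatorname{poly}(n)$), and the underlying completeness/self-reduction machinery should blow the complexity parameter up by only a factor $1+o(1)$ --- so that, once these $o(1)$ losses sit in the exponent, they remain dominated by the fixed saving $\varepsilon_0$. This is precisely what is imported from~\cite{DBLP:journals/iandc/JahanjouMV18}, and it is also the reason the conclusion is about series-parallel circuits: no reduction of comparable efficiency is known from unrestricted linear-size circuits. The remaining point --- that $\neg$NSETH provides exactly what the argument consumes --- is immediate: the verification has to confirm \emph{un}satisfiability of $F_C$, which is what a fast algorithm for \TAUT{} does, and it is used purely as a subroutine, so its being merely nondeterministic is harmless. (Since a deterministic $2^{(1-\varepsilon)n}$ algorithm for \SAT{} yields a deterministic one for \TAUT{}, this theorem in particular subsumes the SETH version of~\cite{DBLP:journals/iandc/JahanjouMV18}.)
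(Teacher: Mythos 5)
The paper does not prove this theorem: it is quoted verbatim from \cite[Corollary~B.3]{DBLP:conf/innovations/CarmosinoGIMPS16}, so there is no in-paper proof to compare your argument against. I will therefore comment on whether the proposed argument is sound on its own terms.

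Your high-level win--win structure is right (assume $\neg$NSETH and small series-parallel circuits for $\E^{\NP}$, derive a fast nondeterministic algorithm for some $L$ that is hard by the nondeterministic time hierarchy theorem, contradiction), and you correctly identify both that the nondeterministic \TAUT{} subroutine may be used as a black box and that the series-parallel restriction is what keeps the variable count near~$n$. However, the concrete mechanism you propose is a Karp--Lipton--style downward self-reduction (``guess a circuit $C_m$ computing $L$ on length $m$, verify it against the previously verified circuit $C_{m-1}$''), and this is not the mechanism in \cite{DBLP:journals/iandc/JahanjouMV18} or \cite{DBLP:conf/innovations/CarmosinoGIMPS16}, which use the \emph{easy-witness} method: one guesses a single small circuit $C$ encoding the nondeterministic \emph{witness} for the specific input $x$, and the \TAUT{} instance expresses ``no position of the computation tableau of $V(x,C)$ is inconsistent.'' The two approaches are not interchangeable, and the gap is precisely where you flag trouble: you assert, without construction, that one ``may take $L$ to be a specific language, complete under tight local reductions, that is additionally locally checkable and downward self-reducible while still having this hardness.'' It is not clear such an $L$ exists with all four properties simultaneously at the tightness required (the generic language produced by the hierarchy theorem is not downward self-reducible, and the complete languages of \cite{DBLP:journals/iandc/JahanjouMV18} are chosen for locality, not self-reducibility). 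Moreover, the $(1+o(1))n$-variable bound for $F_C$ that you import from \cite{DBLP:journals/iandc/JahanjouMV18} is established there for the easy-witness encoding; it does not follow automatically that the same bound holds for the formula arising in your inductive length-by-length verification, which also must express a universally quantified agreement $\forall x\colon C_m(x) = [x\in L]$ and fold in the auxiliary evaluation of $C_{m-1}$ at self-reduction query points. The easy-witness route avoids both problems: it does not need downward self-reducibility, and it has only one circuit to evaluate inside the formula, which is exactly the setting in which the series-parallel/linear-size structure is exploited in \cite{DBLP:journals/iandc/JahanjouMV18}.
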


\subsection{Fine-grained reductions and SETH hardness}\label{sec:fine-grained-reductions}
Below, we~define fine-grained reductions and SETH-hardness. 
Since SETH-hardness is~defined as a~sequence of~reductions
from $k$-\SAT{} for every value of~$k$,
we~introduce a~function~$\delta$
in~the standard definition (\cite[Definition~6]{DBLP:conf/iwpec/Williams15}) of~fine-grained reductions.

\begin{definition}[Fine-grained reductions]\label{deg:fg}
    Let $P,Q$ be~problems, $p, q \colon \mathbb{Z}_{\ge 0} \to \mathbb{Z}_{\ge 0}$ be non-decreasing functions and ${\delta\colon\R_{>0}\to\R_{>0}}$.
    We say that $(P, p(n))$ \emph{$\delta$-fine-grained reduces} to~$(Q, q(n))$
    and write $(P,p(n)) \le_{\delta} (Q,q(n))$,
    if~for every $\varepsilon>0$ and $\delta=\delta(\eps)>0$, there exist 
    an~algorithm~$\mathcal A$ for~$P$ with oracle access to~$Q$,
    a~constant~$d$, a~function $t(n) \colon \mathbb{Z}_{\ge 0} \to \mathbb{Z}_{\ge 0}$, such that on~any instance of~$P$ of size~$n$, the algorithm~$\mathcal A$
    \begin{itemize}
        \item runs in time at~most $d(p(n))^{1-\delta}$;
        \item produces at~most $t(n)$~instances of~$Q$ adaptively: every instance depends on~the previously produced instances
        as~well as~their answers of~the oracle for~$Q$;
        \item the sizes $n_i$ of the produced instances (of~the problem~$Q$) satisfy the inequality
        \[\sum_{i=1}^{t(n)}q(n_i)^{1-\varepsilon} \le d(p(n))^{1-\delta} \, .\]
    \end{itemize}
\end{definition}
We say that $(P, p(n))$ \emph{fine-grained reduces} to~$(Q, q(n))$
    and write $(P,p(n)) \le (Q,q(n))$ if $(P,p(n)) \le_{\delta} (Q,q(n))$ for some function $\delta\colon\R_{>0}\to\R_{>0}$.

It is~not difficult to~see that
if $(P,p(n)) \le (Q,q(n))$, then any improvement over the running time $q(n)$ for the problem~$Q$
implies an~improvement over the running time~$p(n)$ for the problem~$P$: for any $\varepsilon>0$, there is $\delta>0$, such that if $Q$~can be~solved in time $O(q(n)^{1-\varepsilon})$, then $P$~can be~solved in time $O(p(n)^{1-\delta})$.

\begin{definition}[SETH-hardness]\label{def:seth-hardness}
For a~time bound $t \colon \mathbb{Z}_{>0} \to \mathbb{Z}_{>0}$, we say that a problem~$P$ is \emph{$t(n)$-SETH-hard} if there exists 
a~function $\delta\colon\R_{>0}\to\R_{>0}$ and for every $k\in\N$, 
\[(\text{$k$-\SAT{}},2^n)\leq_{\delta} (P,t(n)) \,.\]
\end{definition}
If a~problem~$P$ is $t(n)$-SETH-hard, then any algorithm 
solving~$P$ in time $t(n)^{(1-\varepsilon)}$ implies an algorithm solving $k$-\SAT{} in time $2^{(1-\delta(\eps))n}$ for all~$k$, thus, breaking SETH.

\subsection{Polynomial formulations}
Informally, a~polynomial formulation of a~computational problem~$A$ is an~explicit family of polynomials ${\mathcal P}=\{P_s\}_{s \in \mathbb N}$ and a~family of functions $\phi=(\phi_n)_{n\geq1}$ where $\phi_n\colon A_n \to \Z^{s(n)}$ 
such that to~check if $x\in A_n$ is a~yes-instance of~$A$, it suffices to map $y=\phi(x) \in \Z^{s(n)}$ and evaluate the corresponding polynomial $P_{s(n)}(y)$.
\begin{definition}[Polynomial formulations]
    Let $T \colon \mathbb N \to \mathbb N$ be a~time bound, and $\mathcal P=\{P_s\}_{s \in \mathbb N}$ be a~$\Delta$-explicit family of~polynomials over $\Z$. We say that $\mathcal P$
    is~a~\emph{$\Delta$-polynomial formulation of~$A$} of complexity~$T$, if there exist
    \begin{itemize}
        \item a~non-decreasing function $s \colon \mathbb N \to \mathbb N$ satisfying $s(n) \le T(n)$, and an algorithm computing $s(n)$ in~time $T(n)$;
        \item a~family of mappings $\{\phi_n\}_{n \in \mathbb N}$, where $\phi_n \colon A_n \to \Z^{s(n)}$, and an algorithm evaluating $\phi_n$ at any point in time $T(n)$
    \end{itemize}
    such that the following holds.
    For every~$n\in\N$ and every $x \in A_n$,
    \begin{itemize}
    		\item $P_{s(n)}(\phi_{n}(x)) \neq 0 \Leftrightarrow x \text{ is a~yes-instance of } $A$\,;$
                \item {$|\phi_n(x)| < 2^{(s(n))^{o(1)}}$. }
    \end{itemize}
\end{definition}

In~\cite{DBLP:conf/soda/BelovaGKMS23}, it~is proved that 
for any $\alpha>1$, there exists $\Delta=\Delta(\alpha)$ such that
$k$-\SAT{}, Hamiltonian Path, Graph Coloring, Set Cover,
Independent Set, Clique, Vertex Cover, and $3d$-Matching problems
admit a~$\Delta$-polynomial formulation of~complexity~$\alpha^n$.
In~this paper, 
for a~wide range of~problem from~\P{},
we~prove that for any $\alpha>0$
there exists $\Delta=\Delta(\alpha)$
such that the corresponding problem
admit a~$\Delta$-polynomial formulation 
of~complexity $n^{1+\alpha}$.

\section{New hypothesis (POSETH): solving \SAT{} with polynomial evaluation oracle}
\label{sec:poseth}

Carmosino et~al.~\cite{DBLP:conf/innovations/CarmosinoGIMPS16} have shown that proving an upper bound on~both nondeterministic and co-nondeterministic complexity of a~problem can be used to show the implausibility of SETH-based lower bounds. They also raised the question if studying some other computational models can be used to distinguish the fine-grained complexity 
of~various problems. We follow this suggestion and 
 introduce a~new computational resource $\POTIME$. 

\begin{definition}[\POTIME{} and \PONTIME{}]
Let $\mathcal{Q}=\{Q_n\}_{n=1}^{\infty}$ be~an~explicit family of~polynomials 
and $t \colon \mathbb{N} \to \mathbb{N}$ be~a~time bound. By~$\POTIME[t, \mathcal{Q}]$ ($\PONTIME[t, \mathcal Q]$)
we~denote the class of~problems that can be~solved
in~time~$t(n)$ by~a~deterministic (nondeterministic, respectively) algorithm that
makes oracle calls to~$\mathcal Q$:
for any $a_1, \dotsc, a_s \in \mathbb{Z}$,
such that  {$|a_i| = 2^{s^{o(1)}}$}    
for all~$i$,  
it~gets the value $Q_s(a_1, \dotsc, a_s)$ in~time~$s$.
\end{definition} 

The usefulness of this notion comes from the fact that fine-grained reductions transfer savings for \POTIME{}.

\begin{lemma}
    If $(A,p(n)) \le_{\delta} (B,q(n))$ and $B \in \POTIME(q(n)^{1-\epsilon},Q)$, then $A \in \POTIME(p(n)^{1-\delta},Q)$. 
\end{lemma}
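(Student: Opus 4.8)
We have a fine-grained reduction $(A,p(n)) \le_\delta (B,q(n))$, and we assume $B \in \POTIME(q(n)^{1-\epsilon}, Q)$. We want $A \in \POTIME(p(n)^{1-\delta}, Q)$. The idea is that the reduction's algorithm $\mathcal{A}$ for $A$ makes oracle calls to $B$, and each oracle call can be replaced by running the $\POTIME$-algorithm for $B$, which in turn makes its own calls to the polynomial-evaluation oracle $Q$. So the combined algorithm for $A$ is a $Q$-oracle algorithm, and we just need to bound its running time.

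**The approach.** Fix $\epsilon > 0$ and let $\delta = \delta(\epsilon) > 0$ be as given by the fine-grained reduction. Let $\mathcal{A}$ be the reduction algorithm for $A$ with oracle access to $B$: on an instance of size $n$, it runs in time at most $d\cdot p(n)^{1-\delta}$, produces at most $t(n)$ instances of $B$ of sizes $n_1, \dots, n_{t(n)}$ adaptively, and these satisfy $\sum_i q(n_i)^{1-\epsilon} \le d\cdot p(n)^{1-\delta}$. Let $\mathcal{B}$ be the deterministic $Q$-oracle algorithm for $B$ running in time $q(m)^{1-\epsilon}$ on instances of size $m$ (absorbing constants). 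I would build the algorithm $\mathcal{A}'$ for $A$ by simulating $\mathcal{A}$, but whenever $\mathcal{A}$ issues an oracle query to $B$ on an instance of size $n_i$, instead run $\mathcal{B}$ on that instance; whenever $\mathcal{B}$ issues an oracle call to $Q$, pass it through as a $Q$-oracle call of $\mathcal{A}'$. The output of $\mathcal{A}'$ is whatever $\mathcal{A}$ outputs.

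**Key steps and the time bound.** First, $\mathcal{A}'$ is a legitimate deterministic algorithm with oracle calls only to $Q$ (the $B$-calls have been internalized). Second, one must check that the $Q$-oracle calls made inside the simulation of $\mathcal{B}$ are all on inputs of the allowed magnitude $2^{s^{o(1)}}$ — but this is exactly the guarantee that comes with $\mathcal{B}$ being a $\POTIME$-algorithm, so nothing new is needed. Third, the running-time accounting: the time spent directly by $\mathcal{A}$ (outside oracle calls) is at most $d\cdot p(n)^{1-\delta}$; the time spent running $\mathcal{B}$ on the $i$-th produced instance is at most $q(n_i)^{1-\epsilon}$ (counting each $Q$-oracle call as taking time $s$, consistent with the $\POTIME$ cost model, which matches how $\mathcal{B}$'s running time was measured). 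Summing, the total time is at most
\[
d\cdot p(n)^{1-\delta} + \sum_{i=1}^{t(n)} q(n_i)^{1-\epsilon} \le d\cdot p(n)^{1-\delta} + d\cdot p(n)^{1-\delta} = 2d\cdot p(n)^{1-\delta} = O(p(n)^{1-\delta}),
\]
using the third property of the fine-grained reduction. Hence $A \in \POTIME(p(n)^{1-\delta}, Q)$, possibly after a harmless adjustment of $\delta$ to absorb the constant $2d$ (e.g. replace $\delta$ by a slightly smaller positive function, which is fine since $p(n)^{1-\delta'} = \Omega(p(n)^{1-\delta})$ for $\delta' < \delta$ only helps; more carefully, $O(p(n)^{1-\delta})$ is already the form we want).

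**Main obstacle.** There is no serious obstacle — this is essentially a bookkeeping lemma — but the one point requiring care is making sure the two cost models line up: the $\POTIME$ model charges $s$ for a $Q_s$-evaluation, and the fine-grained reduction's time budget $\sum_i q(n_i)^{1-\epsilon}$ must be understood as bounding the $\POTIME$-cost (not the plain time) of running $\mathcal{B}$ on the produced instances. Since $B \in \POTIME(q(n)^{1-\epsilon}, Q)$ means precisely that $\mathcal{B}$'s $\POTIME$-cost on size-$m$ inputs is $q(m)^{1-\epsilon}$, the accounting goes through verbatim. The other subtlety — that the reduction is adaptive, so later $B$-instances depend on earlier answers — causes no trouble, because the simulation produces those answers on the fly by actually running $\mathcal{B}$, exactly as the oracle would.
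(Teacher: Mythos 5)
Your proof is correct and takes essentially the same route as the paper: simulate the reduction from $A$ to $B$, replace each $B$-oracle call by running the $\POTIME$-algorithm for $B$ and forwarding its $Q$-calls, and use the $\sum_i q(n_i)^{1-\epsilon} \le d\cdot p(n)^{1-\delta}$ guarantee to bound total time. The paper states this in three sentences; you have merely spelled out the same bookkeeping.
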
 

\begin{proof}
    We run the reduction from~$A$ to~$B$. Then we will substitute oracle calls to~$B$ from the reduction with the algorithm for~$B$ that can make oracle calls to~$Q$. 
    By the properties of the fine-grained reductions, we~have the desired running time. 
\end{proof}
 
Now we~introduce an~even stronger version of~SETH and show that refuting~it is~still challenging as~it~implies circuit lower bounds.

\begin{definition}[POSETH]
For every~$\epsilon>0$ and every explicit polynomial family~$\mathcal Q$, there exists~$k$ such that $k$-\TAUT{} cannot be~solved in time $2^{(1-\epsilon)n}$ by 
a~deterministic algorithm that makes oracle calls 
to~evaluating~$\mathcal Q$:
\[\forall \varepsilon>0,\, \forall 
\text{ explicit $\mathcal Q$},\, \exists k \colon \text{$k$-\TAUT{}}\not \in \POTIME[2^{(1-\epsilon)n}, \mathcal Q] .\]
\end{definition}

\begin{theorem}
    \label{thm:pnsethlowerbounds}
    If POSETH is~false, then either there exists a~family 
    of~Boolean functions
    from $\E^{\NP}$ that cannot be~computed 
    by~series-parallel Boolean circuits of~size $O(n)$ or~there exists
    an~explicit family of~polynomials
    that cannot be~computed by~arithmetic circuits of~size $n^{1+\alpha}$ for some $\alpha>0$.
\end{theorem}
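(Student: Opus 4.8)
Assume POSETH fails: fix $\epsilon>0$ and an explicit family $\mathcal Q=\{Q_s\}$ of some degree $\Delta$ such that for every $k$ there is a deterministic $O(2^{(1-\epsilon)n})$-time algorithm $\mathcal A_k$ for $k$-\TAUT{} making oracle calls to $\mathcal Q$. I would argue by a dichotomy on the arithmetic circuit complexity of $\mathcal Q$. If $\asize(Q_s)>s^{1+\alpha}$ for some fixed $\alpha>0$ and infinitely many $s$, then $\mathcal Q$ is itself an explicit constant-degree family that cannot be computed by arithmetic circuits of size $n^{1+\alpha}$, and the second alternative of the theorem holds. So assume from now on that $\asize(Q_s)=s^{1+o(1)}$; the task is to derive the first alternative, i.e., that $\E^{\NP}$ has no linear-size series-parallel circuits.

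In this case I would re-run the proof of Theorem~\ref{thm:nseth} — the Williams/easy-witness argument of~\cite{DBLP:conf/innovations/CarmosinoGIMPS16}, which takes a fast algorithm for \TAUT{} (equivalently, via the variable-efficient reduction used there, for unsatisfiability of linear-size series-parallel circuits) and, inside an $\E^{\NP}$ machine, diagonalizes to build a function with no linear-size series-parallel circuit — except that I would have it invoke $\mathcal A_k$, with the $\mathcal Q$-oracle eliminated on the fly using Lemma~\ref{lemma:macptwo}. Concretely, a call $Q_s(a_1,\dots,a_s)$ with $|a_i|\le\rho:=2^{s^{o(1)}}$ (as the definition of the \POTIME{} oracle guarantees) can be answered by evaluating a size-$O(s^{1+o(1)})$ arithmetic circuit $C_s$ over $\Z_p$, with prime $p=O((s\rho)^{2\Delta})=2^{s^{o(1)}}$, that agrees with $Q_s$ on all such inputs; and, crucially, ``$C$ is a valid such $C_s$'' is decidable deterministically in $\mathrm{poly}(s)$ time (this is the constant-degree, expand-monomial-by-monomial content of the lemma), hence an \NP{} predicate once $C$ is guessed. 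So whenever $\mathcal A_k$ issues an oracle call, the $\E^{\NP}$ machine guesses $C_s$ and $p$ through its \NP{} oracle, certifies them, and evaluates $C_s$ at the query point; since $\mathcal A_k$ spends total ``oracle time'' at most $O(2^{(1-\epsilon)n})$, the evaluations alone cost $2^{(1-\epsilon)n(1+o(1))}=2^{(1-\epsilon')n}$ for some $\epsilon'>0$, which is the bound the easy-witness argument needs. This is exactly where the hypothesis $\asize(Q_s)=s^{1+o(1)}$ is used: it keeps \emph{evaluating} $C_s$ near-linear in the cost $s$ of the oracle call; if $\mathcal Q$ required superlinear arithmetic circuits this would fail, which is why the theorem allows the second alternative.

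The step I expect to be the crux is the accounting around \emph{certifying} $C_s$. By Lemma~\ref{lemma:macptwo} this costs $\mathrm{poly}(s)$, and for a single oracle call of size $s\approx 2^{(1-\epsilon)n}$ this is $2^{\Theta(n)}$, which can exceed $2^n$ for any $\Delta\ge1$; so one cannot simply de-oracle-ize $\mathcal A_k$ into a stand-alone $2^{(1-\epsilon')n}$-time (co-)nondeterministic algorithm for \TAUT{} — that would refute NSETH outright, a strictly stronger conclusion — and block-splitting $\mathcal A_k$ does not help, since the polynomial blow-up scales together with the oracle-call size. Working inside the $\E^{\NP}$ diagonalization is what resolves this: there, certification is discharged by the machine's \NP{} oracle and its $2^{O(m)}$ clock, while only the guessing and evaluating of $C_s$ are charged in the time-sensitive part of the argument. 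To make this rigorous one must check: (a) every oracle call that arises when $\mathcal A_k$ is run on the series-parallel Circuit-SAT instances produced inside the easy-witness argument has parameter $s$ bounded by a fixed polynomial in $2^{(1-\epsilon)m}$, where $m$ is the diagonalization length, so the \NP{} queries obtaining and certifying $C_s$ have legal length and are affordable; (b) repeated calls to a given $Q_s$ are cached; (c) $\mathcal A_k$ run deterministically with an \NP{} oracle still plugs into the argument of~\cite{DBLP:conf/innovations/CarmosinoGIMPS16}, whose conclusion is about $\E^{\NP}$ anyway; and (d) the variable-efficient Circuit-SAT reduction composes with $\mathcal A_k$ losing only a $1+o(1)$ factor in the exponent, keeping $\epsilon'>0$.
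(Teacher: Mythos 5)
Your dichotomy on $\asize(\mathcal Q)$ and your use of Lemma~\ref{lemma:macptwo} are the right ingredients, but you abandon the direct route prematurely and for the wrong reason. You claim one cannot ``simply de-oracle-ize $\mathcal A_k$'' into a nondeterministic $2^{(1-\epsilon')n}$-time algorithm for \TAUT{}, and that ``block-splitting does not help, since the polynomial blow-up scales together with the oracle-call size,'' and you then propose a much more delicate surgery inside the $\E^{\NP}$ easy-witness diagonalization. In fact block-splitting \emph{is} what the paper does, and it works; you missed two points in the accounting. First, after branching on all but $\gamma n$ variables one gets $2^{(1-\gamma)n}$ subformulas on only $\gamma n$ variables each, and running $\mathcal A_k$ on one of them costs just $2^{(1-\epsilon)\gamma n}$ time, so \emph{every} oracle call it issues has size $t\le 2^{(1-\epsilon)\gamma n}\le 2^{\beta n}$ once $(1-\epsilon)\gamma\le\beta$. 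The oracle-call sizes shrink with $\gamma$ — they are not pinned near $2^{(1-\epsilon)n}$.

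Second, and this is the crux you missed, the certification of circuits is \emph{not} paid per oracle call or per subformula. A single preprocessing pass nondeterministically finds one circuit $C_t$ (and one prime $p_t$) for each $t\le 2^{\beta n}$, at a one-time cost of roughly $2^{\beta n(3+\alpha+2\Delta)}$ by Lemma~\ref{lemma:macptwo} and Remark~\ref{rem:rho}; this is $\le 2^{(1-\delta)n}$ for any fixed $\Delta$ and $\alpha$ simply by taking $\beta$ small enough. During solving one merely \emph{evaluates} the precomputed $C_t$ at the query point, at cost about $t^{1+2\alpha}$, and summing over all $2^{(1-\gamma)n}$ subformulas (using $\sum_i t_i\le 2^{(1-\epsilon)\gamma n}$ and Jensen) gives total time $2^{((1-\gamma)+(1-\epsilon)\gamma(1+2\alpha))n}\le 2^{(1-\delta)n}$ once $\alpha$ satisfies $(1-\epsilon)(1+2\alpha)<1$. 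The net result is an honest $\NTIME[2^{(1-\delta)n}]$ algorithm for $k$-\TAUT{} for every $k$ — NSETH does fail outright — and Theorem~\ref{thm:nseth} then yields the series-parallel $\E^{\NP}$ lower bound. There is no need to re-open the easy-witness argument, cache oracle calls, or reason about $\NP$-oracle query lengths and the $\E^{\NP}$ clock; those complications arise only because you discarded the simple route. (Parameter order: pick $\alpha$ with $(1-\epsilon)(1+2\alpha)<1$; then $\delta$ small; then $\beta$ with $\beta(3+\alpha+2\Delta)<1-\delta$; then $\gamma$ with $(1-\epsilon)\gamma\le\beta$.)
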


\begin{proof}
    Assume that POSETH is~false:
    \[\exists \varepsilon>0,\, \exists
\text{ explicit~$\mathcal Q$},\, \forall k \colon \text{$k$-\TAUT} \in \POTIME[2^{(1-\epsilon)n}, \mathcal Q] .\]
    Let $\mathcal A$ be the corresponding deterministic algorithm with oracle access to~$\Delta$-explicit polynomial 
    family~$\mathcal Q$.
    Let $\alpha=\alpha(\varepsilon)$, $\beta=\beta(\varepsilon)$, $\gamma=\gamma(\varepsilon)$, and $\delta=\delta(\varepsilon)$
    be~positive parameters whose values will be~specified later. 
    Given a~$k$-CNF formula with $n$~variables, 
    $\mathcal A$~makes a~number of~oracle calls
    to~evaluating~$\mathcal Q$. 
    Let $t_1, \dotsc, t_m$ denote
    the size of~$m$~oracle calls (hence, the $i$-th call evaluates $Q_{t_i}$). Since the running time of~$\mathcal A$ is at~most $O(2^{(1-\epsilon)n})$,
    we~have that
    \[t_1+\dotsb+t_m \le 2^{(1-\epsilon)n} .\]

    If~$\mathcal Q$ cannot be~computed by~arithmetic circuits of~size $O(n^{1+\alpha})$, then we~are done. Assume that $\asize(Q_n) \le O(n^{1+\alpha})$. 
    Below, we~describe 
    a~nondeterministic algorithm
    that guesses the corresponding circuit, verifies~it, and uses~it to~solve $k$-\TAUT{} in~time $O(2^{(1-\delta)n})$ for all~$k$. By~Theorem~\ref{thm:nseth}, this gives 
    Boolean circuit lower bounds. The corresponding algorithm consists of~two stages. Below, 
    we~describe each stage and estimate its running time. Then, 
    we~choose the values for the parameters~$\alpha$, 
    $\beta$,~and~$\gamma$ to~ensure that the total running time 
    is~at~most $O(2^{(1-\delta)n})$.

    \begin{description}
    \item[Preprocessing stage.]
    Find an~arithmetic circuit $C_t$ of~size $O(t^{1+\alpha})$ computing $Q_t$ for all $t \le 2^{\beta n}$ and the corresponding prime number~$p_t$ using Lemma~\ref{lemma:macptwo} (and Remark~\ref{rem:rho}).
    Since we~do this for all $t \le 2^{\beta n}$, the total time needed for this stage is at~most 
    \[O\left(2^{\beta n} \cdot 2^{\beta n(1+\alpha)} \cdot 2^{\beta n(2\Delta+1)}\right)=O\left(2^{\beta n (3+\alpha+2\Delta)}\right) \, .\]
    This is~at~most $O(2^{(1-\delta)n})$, if
    \begin{equation}\label{eq:guessing}
        \beta(3+\alpha+2\Delta)\le 1-\delta.    
    \end{equation}
    
    \item[Solving stage.]
    Take an~input formula~$F$ of~$k$-\TAUT{} on~$n$~variables
    and branch on~all but $\gamma n$ variables.
    For each of~the resulting $2^{(1-\gamma)n}$
    formulas~$F'$ on $\gamma n$ variables do~the following. Assume that $\mathcal{A}(F')$~makes 
    $m$~oracle calls to~evaluating 
    $Q_{t_1}, \dotsc, Q_{t_m}$. The running time 
    of~$\mathcal{A}(F')$ is at most $2^{(1-\epsilon)\gamma n}$. In~particular, 
    \begin{equation}\label{eq:oraclecallsizes}
        t_1+\dotsb+t_m \le 2^{(1-\epsilon)\gamma n}.    
    \end{equation}
    To~solve~$F'$, we~apply~$\mathcal A$,
    but each time $\mathcal A$ makes a~query 
    to~$Q_t$, we~use $C_t$ to~find the result.
    This is~possible (we~have a~precomputed circuit
    for any input size), if
    \begin{equation*}
        \max\{t_1, \dotsc, t_m\} \le 2^{\beta n}.
    \end{equation*}
    This is true, if
    \begin{equation}\label{eq:beta}
        (1-\epsilon)\gamma \le \beta.
    \end{equation}

    {To~compute the value of~$C_t(a_1, \dotsc, a_t)$, we~simply compute the value of~each gate. The value of~each gate can be~computed (from the values of~its two predecessors) in~time~
    {$O(\log^2 p_t)=O(\log^2 2^{t^{o(1)}})=t^{o(1)}$.} 
    Multiplying this by~the total number of~gates, we~get that the time it~takes
    to~compute $C_t(a_1, \dotsc, a_t)$ is
    {\[O(t^{1+\alpha}t^{o(1)})=O(t^{1+2\alpha}) \, .\]}}

    Now, we~estimate the running time of this stage. The running time of~solving~$F'$ is at~most
    \begin{align*}
    &\phantom{\le\,\,\,} O\left(2^{(1-\epsilon)\gamma n}+(t_1^{1+2\alpha}+\dotsb+t_m^{1+2\alpha}) \right)\tag{Jensen's inequality}\\
    &\le  O\left(2^{(1-\epsilon)\gamma n}+(t_1+\dotsb+t_m)^{1+2\alpha} \right)\tag{by \eqref{eq:oraclecallsizes}}\\
    &\le O\left(2^{(1-\epsilon)\gamma n}+
    2^{(1-\epsilon)\gamma(1+2\alpha)n} \right)\\
    &=O\left(2^{(1-\epsilon)\gamma(1+2\alpha)n}\right).
    \end{align*}
    Then, the running time of~solving all $2^{(1-\gamma)n}$ formulas, each on~$\gamma n$
    variables, is~at~most \[O(2^{((1-\epsilon)\gamma(1+2\alpha)+(1-\gamma))n}).\]
    To~make this at~most $O(2^{(1-\delta)n})$, it~suffices to~ensure that
    \begin{equation}\label{eq:f}
        \delta \le \gamma(1-(1-\epsilon)(1+2\alpha)).
    \end{equation}
    \end{description}

    Finally, we~set the parameters $\alpha, \beta, \gamma, \delta$ to~ensure that the running
    time of~the resulting algorithm is $O(2^{(1-\delta)n})$.
    To~guarantee this, it~suffices to~satisfy 
    inequalities~\eqref{eq:guessing}--\eqref{eq:beta}.
    First, we~choose a~small enough $\alpha>0$ such that $(1-\varepsilon)(1+2\alpha)<1$. 
    This ensures that~\eqref{eq:f} holds for small enough $\delta>0$. Then, we~choose a~small enough $\beta$ such that $\beta(3+\alpha+2\Delta)<1$.
    Then, \eqref{eq:guessing} holds for small enough~$\delta>0$. Finally, we~set $\gamma$
    to~be small enough to~ensure~\eqref{eq:beta}.
\end{proof}

\begin{remark}
    \label{rem:pn}
    It~is not difficult to see that the proof goes~by
    if~one replaces POSETH by~PONSETH in~the theorem above.
\end{remark}

\begin{theorem}
\label{thm:pf-potime}
    If~a~problem~$A$ admits a~polynomial
    formulation of~complexity $t(n)$, then $A \in \POTIME[O(t(n)), \mathcal{Q}]$ for some explicit polynomial family~$\mathcal{Q}$. 
\end{theorem}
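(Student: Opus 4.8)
The plan is to take the oracle family $\mathcal Q$ to be exactly the $\Delta$-explicit polynomial family $\mathcal P=\{P_s\}_{s\in\mathbb N}$ witnessing the polynomial formulation of $A$ (of complexity $T(n)=t(n)$); since $\mathcal P$ is explicit, this is a legal choice. The $\POTIME$ algorithm for $A$ then just runs the polynomial formulation verbatim: on an instance $x\in A_n$, it first computes $s(n)$ (possible in time $t(n)$ by definition of a polynomial formulation), then evaluates the mapping $\phi_n$ to obtain the integer vector $y=\phi_n(x)\in\Z^{s(n)}$ (again in time $t(n)$), makes a \emph{single} oracle call of size $s(n)$ to obtain $Q_{s(n)}(y)=P_{s(n)}(y)$, and outputs ``yes'' if and only if this value is nonzero. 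Correctness is immediate from the defining property $P_{s(n)}(\phi_n(x))\neq 0 \Leftrightarrow x$ is a yes-instance of~$A$.

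Next I would verify that the single oracle call is an admissible $\POTIME$ query and that the total running time is $O(t(n))$. The oracle-call convention requires the $s(n)$ integers passed to it to have magnitude at most $2^{(s(n))^{o(1)}}$, which is precisely the guarantee $|\phi_n(x)|<2^{(s(n))^{o(1)}}$ built into the definition of a polynomial formulation; hence the query is admissible and costs time $s(n)$. Since $s(n)\le T(n)=t(n)$, computing $s(n)$, evaluating $\phi_n$, and performing the query each cost $O(t(n))$. The only remaining bookkeeping is the final comparison with zero: by the explicit-family estimate~\eqref{eq:explicitpolyupper}, with $\rho=2^{(s(n))^{o(1)}}$ we get $|Q_{s(n)}(y)|\le O\big((s(n)\rho)^{2\Delta}\big)=2^{(s(n))^{o(1)}}$, an integer of $(s(n))^{o(1)}$ bits, so testing whether it is nonzero takes $(s(n))^{o(1)}=O(t(n))$ time. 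Summing the three contributions gives running time $O(t(n))$ with oracle access to the explicit family $\mathcal Q=\mathcal P$, which is exactly the claim.

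There is no serious obstacle here: the statement is a routine unfolding of the two definitions, the conceptual point being that a polynomial formulation is already a many-one reduction to polynomial evaluation, hence in particular a $\POTIME$ computation that makes one oracle call. The closest thing to a subtlety — and the one place that deserves a moment's care — is matching the size and magnitude conventions: the polynomial formulation promises input entries of magnitude $2^{(s(n))^{o(1)}}$ and guarantees $s(n)\le t(n)$, and one must confirm that these line up with the $\POTIME$ oracle's input-magnitude requirement and with the demand that the whole computation, including producing and passing the (super-logarithmically long) integers and comparing the oracle's answer to zero, fits within $O(t(n))$ RAM steps. Once those bounds are checked against the definitions, the proof is complete.
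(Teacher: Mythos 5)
Your proof is correct and follows essentially the same route as the paper's: take $\mathcal Q=\mathcal P$, compute $s(n)$ and $\phi_n(x)$ in time $O(t(n))$, make one oracle call, and accept iff the value is nonzero. You are slightly more careful than the paper in explicitly checking the oracle's input-magnitude convention and the bit-length of the returned value, but this is additional bookkeeping rather than a different argument.
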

\begin{proof}
    The corresponding algorithm for~$A$ uses its polynomial formulation
    $\mathcal P=\{P_n\}_{n \in \mathbb N}$ of~complexity $t(n)$ as an~oracle.
    Given an~instance $I \in A_n$, the algorithm computes the number~$s(n)$ 
    of~variables and
    their values in~time $O(t(n))$ and then asks the oracle to compute the value of~$P_{s(n)}$. Then, $I$
    $I$~is a yes-instance if~and only~if the oracle returns a~non-zero value.
\end{proof}

\begin{theorem}
\label{thm:potime-poseth}
    Let $A \in \POTIME[t(n)^{1 - \delta}, \mathcal{Q}]$ ($\PONTIME[t(n)^{1 - \delta}, \mathcal{Q}]$) for some $\delta > 0$ and some explicit polynomial family $\mathcal{Q}$. Then $A$ is not $t(n)$-SETH-hard under POSETH (PONSETH).
\end{theorem}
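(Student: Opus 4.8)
The plan is to argue by contraposition: I assume that $A$ is $t(n)$-SETH-hard and, simultaneously, that $A\in\POTIME[t(n)^{1-\delta},\mathcal Q]$ for the given $\delta>0$ and explicit family $\mathcal Q$, and from this I refute POSETH. To refute POSETH it is enough to exhibit a single $\epsilon>0$ (and I will use the very family $\mathcal Q$ from the hypothesis) for which $\text{$k$-\TAUT}\in\POTIME[2^{(1-\epsilon)n},\mathcal Q]$ for \emph{every}~$k$. First I unpack $t(n)$-SETH-hardness via Definition~\ref{def:seth-hardness}: it provides one function $\delta'\colon\R_{>0}\to\R_{>0}$, not depending on~$k$, such that $(\text{$k$-\SAT},2^n)\le_{\delta'}(A,t(n))$ for every~$k$; keeping $\delta'$ uniform in~$k$ is exactly what will make the resulting $\epsilon$ uniform in~$k$ as well.

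Next, for a fixed~$k$ I instantiate the reduction of Definition~\ref{deg:fg} with the parameter choice $\varepsilon:=\delta$ and set $\epsilon_0:=\delta'(\delta)>0$. This yields a deterministic $A$-oracle algorithm $\mathcal R_k$ for $k$-\SAT{} running in time $O(2^{(1-\epsilon_0)n})$ that produces $A$-instances of sizes $n_1,\dots,n_m$ with $\sum_{i=1}^{m}t(n_i)^{1-\delta}\le O(2^{(1-\epsilon_0)n})$. Now I replace every oracle call of $\mathcal R_k$ to~$A$ by a run of the deterministic algorithm $\mathcal B$ witnessing $A\in\POTIME[t(n)^{1-\delta},\mathcal Q]$; this is precisely the composition behind the lemma, stated just above the definition of POSETH, that fine-grained reductions transfer savings for $\POTIME$. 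Since $\mathcal B$ is deterministic it returns the correct answer to each query, and since each invocation $\mathcal B(I_i)$ is itself a legitimate $\mathcal Q$-oracle computation (its oracle inputs obey the magnitude restriction built into the definition of $\POTIME$), the composed object is again a deterministic $\mathcal Q$-oracle algorithm. Its total running time is $O(2^{(1-\epsilon_0)n})+\sum_{i=1}^{m}t(n_i)^{1-\delta}=O(2^{(1-\epsilon_0)n})$, so $\text{$k$-\SAT}\in\POTIME[O(2^{(1-\epsilon_0)n}),\mathcal Q]$, and negating the output (harmless because the algorithm is deterministic) gives $\text{$k$-\TAUT}\in\POTIME[O(2^{(1-\epsilon_0)n}),\mathcal Q]$.

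Because $\epsilon_0=\delta'(\delta)$ depends only on~$\delta$, this bound holds for every~$k$ with the same exponent. Absorbing the hidden constant factor by slightly decreasing the savings (answering the finitely many formulas on few variables by brute force in $O(1)$ time), I obtain $\text{$k$-\TAUT}\in\POTIME[2^{(1-\epsilon_0/2)n},\mathcal Q]$ for every~$k$ — which is exactly the negation of POSETH witnessed by $\epsilon:=\epsilon_0/2$ and the family $\mathcal Q$. This contradiction establishes the POSETH part, so under POSETH the problem $A$ cannot be $t(n)$-SETH-hard.

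For the $\PONTIME$/PONSETH statement the outline is the same, with $\mathcal B$ now a nondeterministic $\mathcal Q$-oracle algorithm for~$A$ and the target being $\text{$k$-\TAUT}\in\PONTIME[2^{(1-\epsilon)n},\mathcal Q]$ for all~$k$. Here I expect the main obstacle: one cannot feed the answers emitted by the nondeterministic $\mathcal B$ straight back into the adaptive reduction, since a spurious ``no'' answer of $\mathcal B$ on a true yes-instance would corrupt the remainder of the run and could produce a false accept on some computation branch. The fix, following Carmosino et al.~\cite{DBLP:conf/innovations/CarmosinoGIMPS16}, is to first guess the whole transcript of answers $b_1,\dots,b_m$ to the $A$-queries, run the deterministic reduction on that guessed transcript, and then \emph{certify} every $b_i$ nondeterministically: a claimed ``yes'' by guessing and checking an accepting run of $\mathcal B(I_i)$, and a claimed ``no'' by an analogous nondeterministic algorithm for the complement of~$A$ — which is automatic for the problems to which this theorem is applied, where $A$ is in fact solved \emph{deterministically} with a polynomial-evaluation oracle and hence $A$ together with its complement both lie in $\PONTIME[\cdot,\mathcal Q]$. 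Granting this, the time accounting above is unchanged and yields $\text{$k$-\TAUT}\in\PONTIME[2^{(1-\epsilon)n},\mathcal Q]$ for every~$k$ (any further nondeterministic guesses, such as those used in Theorem~\ref{thm:pnsethlowerbounds}, fold in without affecting correctness), contradicting PONSETH.
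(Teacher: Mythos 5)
Your argument for the $\POTIME$/POSETH direction is correct and follows the same route as the paper: assume $t(n)$-SETH-hardness, compose the fine-grained reduction from $k$-\SAT{} with the $\POTIME[t(n)^{1-\delta},\mathcal Q]$ algorithm for~$A$, observe that the resulting savings $\epsilon_0=\delta'(\delta)$ is independent of~$k$, and flip the output to move from $k$-\SAT{} to $k$-\TAUT{}. You are considerably more explicit than the paper about the bookkeeping (instantiating Definition~\ref{deg:fg} with $\varepsilon:=\delta$, the uniformity in~$k$, negating the deterministic output, absorbing the constant in the exponent), but it is the same proof.

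Where you add genuine value is the $\PONTIME$/PONSETH direction, which the paper dispatches in a single clause (``we can use the reduction and solve $k$-\SAT{} with the same oracle in nondeterministic time'') without confronting the issue you raise: the reduction of Definition~\ref{deg:fg} is \emph{adaptive} and needs the exact answers to its $A$-queries, while a nondeterministic $\mathcal Q$-oracle algorithm for~$A$ only certifies yes-instances on its accepting branches, so naive substitution would corrupt the transcript and could produce false accepts. Your fix — guess the full transcript of answers, run the (deterministic) reduction on it, and then certify each $b_i$, yes-answers via the $\PONTIME$ algorithm for~$A$ and no-answers via one for~$\overline{A}$ — is the standard Carmosino-et-al.\ move and is the right repair; moreover, replacing $k$-\SAT{} by $k$-\TAUT{} in the final step only works at all once both sides can be certified, a point the paper glosses over (solving $k$-\SAT{} in $\PONTIME$ does not, by itself, contradict a hypothesis phrased about $k$-\TAUT{}). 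The honest caveat you attach is correct: as stated, the theorem's $\PONTIME$ hypothesis does not by itself provide a certifier for $\overline{A}$, so the argument needs the extra assumption $\overline{A}\in\PONTIME[t(n)^{1-\delta},\mathcal Q]$ — which is automatic in the paper's applications because there $A$ is actually placed in $\POTIME$ (Theorem~\ref{thm:pf-potime}), and indeed Corollary~\ref{cor:pf}'s follow-up only invokes POSETH, not PONSETH. So: your $\POTIME$ half is fully correct and matches the paper; your $\PONTIME$ half correctly identifies (and patches, under a mild extra hypothesis) a gap the paper's own two-sentence proof leaves unaddressed.
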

\begin{proof}
    Assume that $A$~is $t(n)$-SETH-hard and consider the reduction $(\text{$k$-\SAT{}}, 2^n)\leq_{\delta} (P,t(n))$. Since we can solve~$A$ with oracle for~$\mathcal{Q}$ in~deterministic (nondeterministic, respectively) time $t(n)^{1 - \delta}$, we can use the reduction and solve $k$-\SAT{}  with the same oracle in~deterministic (nondeterministic, respectively) time $2^{(1 - \varepsilon)n}$ for some $\varepsilon > 0$ and every~$k$ which breaks POSETH (PONSETH, respectively). 
\end{proof}

\section{Problems from~\P{} that can be~solved efficiently
with polynomial evaluation oracle}
\label{sec:polyp}

In~this section, 
we~show that a~wide range of~problems from~\P{}
admit a~$\Delta(\varepsilon)$-polynomial formulation of~complexity $O(n^{1+\varepsilon})$ (or $O((n + m)^{1 + \varepsilon})$ for graph problems),
for any $\varepsilon>0$. 
To~capture many such polynomial-time solvable problems,
we~define a~subclass of~problems from~\P{}
where the goal is to~find a~constant size
subset of an~input that can be~verified locally.

Informally, we~say that a~decision problem~$A$
belongs to~the class~\LS{},
if, for any instance $I \in A_n$, it is a~yes-instance if and only if
there is a~constant size set $Y \subseteq I$
and a~constant size set~$N \subseteq \overline{I}$
such that the pair $(Y,N)$
satisfies some
easily verifiable property.
This is~best illustrated with examples. The $3$-\SUM{} problem clearly satisfies this property as~the goal is~just to~find three 
elements of~the input whose sum is~equal to~zero. Another example
of~a~problem with this property is Induced $4$-Cycle: in~this case,
one needs to~check whether the input graph contains
four edges and two non-edges that span four nodes. On~the other hand, for the Hamiltonian Path problem it~is not immediate whether 
it~satisfies this property as~one looks for a~subset of~edges 
of~non-constant size. Another example of a~problem whose straightforward statement does not put~it into the considered class
is~$k$-Vertex Cover (where the goal is~to~check whether the input graph contains a~set of~$k$~nodes that cover all edges). In~this problem, one 
is~looking for a~set of nodes (or~edges) of~constant size,
but one cannot ensure that it~covers the whole graph just 
by~checking a~constant number of~non-edges.

One technical remark is~in~order before we~proceed to~the
formal definition of~the class of~problems. Throughout the whole paper, by~$n$ we~denote the natural complexity parameter
of a~problem that is~used to~measure both the size 
of an~instance and the running time of~an~algorithm
for this problem. 
In~this chapter however it~is important
that we~measure the running time of~graph algorithms
in~terms of $n+m$, the number of~nodes plus the number of~edges.
To avoid naming conflicts, below we~use an~additional parameter~$s$ to~denote the size of~an~instance.

We start with some preliminary definitions.
For every $n \in \mathbb{N}$, we introduce a~\emph{universe} $U_n$, which is a set of size $n^{O(1)}$ that can be generated in polynomial time with a bijection $B_n \colon U_n \rightarrow [|U_n|]$ that can be computed in time~$O(1)$ for any element from $U_n$. 
For some fixed positive integers $\alpha$ and $\beta$, we introduce a~\emph{verification algorithm} $V: \mathbb{N}^{\alpha + \beta} \rightarrow \{0, 1\}$ that works in polynomial time. For $u_1, \dotsc, u_{\alpha+\beta} \in U_n$, by $V(u_1, \dotsc, u_{\alpha+\beta})$ we denote $V(B_n(u_1), \dotsc, B_n(u_{\alpha+\beta}))$. 

The common form of a~problem from \LS{} is a triple $(\alpha, \beta, V)$ and a~family $\{U_n\}_{n \in \mathbb N}$. It is 
a~decision problem accepting a~tuple $(n, m, S)$, where $S$ is a~subset of $U_n$ of size $m$, as an input and checking whether there exist a subset $X$ of $S$ of size $\alpha$ and a subset $Y$ of $U_n \setminus S$ of size $\beta$ such that $V$ returns $1$ on $(X, Y)$.

Before we finally define the class, let us illustrate that definition on the Induced 4-Cycle problem. 
For this problem, $\alpha = 4$, $\beta = 2$, corresponding to the number of edges and non-edges of 4-Cycle, respectively.
Translatable universe $U_n$ will be $[n]\times[n]$ that stands for all pairs of nodes and $B_n$ will be a~natural bijection that maps $[n] \times [n]$ into $[n^2]$ in such a way that for all $u \in [n - 1] \times [n - 1]$, $B_n(u) = B_{n - 1}(u)$. 
The verification algorithm~$V$ will check whether edges $x_1, \dots, x_4 \in U_n$ and non-edges $y_1, y_2 \in U_n$ span a 4-cycle. 

Now we can formulate Induced 4-Cycle as follows. Given 
a~graph $G$ as $(n, m, S)$, where $n$ is the number of nodes, $m$ is the number of edges and $S \subseteq U_n$ is a set of edges, check whether there exist $x_1, \dots, x_4 \in S$ and $y_1, y_2 \in U_n \setminus S$ such that $V(x_1, \dots, x_4, y_1, y_2) = 1$.

Note that we run $V$ on the result of encoding of pairs of nodes, and $V$ does not depend on $n$, so we have to ensure that $V$ interprets same pairs of nodes the same way even for different $n$. Fortunately, when we deal with cartesian products of integer intervals (such as $[n]\times[n]$), such a bijection always exists.

Now we are ready to define the class \LS{} formally.

\begin{definition}[Local Subset class] 
    The class \LS{} contains all
    decision problems~$A$ for which
    there exist positive integers $\alpha$~and~$\beta$,
    a family of universes $\{U_n\}_{n \in \mathbb N}$ ($|U_n|=n^{O(1)}$)
    and a~verification algorithm $V$ such that 
    for every instance $(n, m, S) \in A_{n + m}$ where $n, m \in \mathbb N$ and $S$ is a subset of $U_n$ of size $m$, $(n, m, S)$ is a~yes-instance if and only if there exist $x_1, \dotsc, x_{\alpha} \in S$ and $y_1, \dotsc, y_{\beta} \in U_n \setminus S$ such that
    $V(x_1, \dotsc, x_{\alpha}, y_1, \dotsc, y_{\beta})=1$.

    The size of an instance is defined as $n + m$, so $A_s = \{(n, m, S) \colon n + m = s, S \subseteq U_n, |S| = m\}$.
\end{definition}

    

It~is not difficult to~see that $\LS{} \subseteq P$. Indeed, if 
$A \in \LS{}$, then it~can be~solved by~the following polynomial-time algorithm. On an~input $I = (n, m, S) \in A_{n + m}$, generate a~set $U_n$; then, enumerate all $x_1, \dotsc, x_{\alpha} \in S$ and all $y_1, \dotsc, y_{\beta} \in U_n \setminus S$ and check whether $V(x_1, \dotsc, x_{\alpha}, y_1, \dotsc, y_{\beta})=1$.

\begin{theorem}
    Each of~the following problems belong to~the class \LS{}: 
    $k$-\SUM{}, 
    Collinearity,
    $H$-induced subgraph for any fixed graph~$H$,
    $\mathcal{H}$-induced subgraph for any fixed finite set $\mathcal{H}$ of graphs,
    the decision version of minimum weight $k$-clique,
    the decision version of MAX $H$-SUBGRAPH (with weights on vertices or on edges).
\end{theorem}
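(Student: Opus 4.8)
The plan is to handle all six problems by a single template: for each, exhibit the constants $\alpha$ and $\beta$, the family of universes $\{U_n\}_{n\in\mathbb N}$ together with a bijection $B_n$ that agrees across different values of~$n$ on the same combinatorial object (so that the verifier's reading of an object is independent of~$n$), and the polynomial-time verifier~$V$, and then check that ``$(n,m,S)$ is a yes-instance $\iff$ there exist $x_1,\dots,x_\alpha\in S$ and $y_1,\dots,y_\beta\in U_n\setminus S$ with $V$ accepting'' holds for every instance. In each case $U_n$ will be a Cartesian product of integer intervals (pairs or triples of vertices, points of~$R^2$, $(\text{set-index},\text{value})$ pairs), or a small disjoint union of such products decorated with weights; for the former the excerpt already notes that a consistent, $n$-independent bijection exists, and for the latter one gets one by interleaving the blocks round-robin. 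Then $|U_n|=n^{O(1)}$ because $W=n^{O(1)}$, $B_n$ is $O(1)$-time computable, and each~$V$ inspects only $O(1)$ universe elements while doing arithmetic on $O(\log n)$-bit numbers, hence runs in polynomial time; so the whole proof reduces to choosing the right encoding and verifier for each problem.

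For the purely combinatorial problems this is immediate. For $k$-\SUM{}, set $U_n=[k]\times R$, encode the input as $S=\{(i,a):a\in S_i\}$, take $\alpha=k$, and let~$V$ accept $(i_1,a_1),\dots,(i_k,a_k)$ iff $\{i_1,\dots,i_k\}=[k]$ and $a_1+\dots+a_k=0$. For Collinearity, set $U_n=R^2$, let $S$ be the set of input points, take $\alpha=3$, and let~$V$ accept three points iff the corresponding $2\times2$ determinant vanishes. For $H$-induced subgraph, with~$H$ on~$h$ vertices and~$e$ edges, set $U_n=\binom{[n]}{2}$, $S=E(G)$, $\alpha=e$, $\beta=\binom{h}{2}-e$, and let~$V$ accept iff the chosen $e$ pairs and $\binom{h}{2}-e$ non-pairs touch exactly~$h$ vertices and the graph on those vertices whose edge set is precisely the chosen pairs is isomorphic to~$H$. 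For $\mathcal H$-induced subgraph with $\mathcal H=\{H_1,\dots,H_r\}$, keep $U_n=\binom{[n]}{2}$, set $\alpha=\max_i e(H_i)$ and $\beta=\max_i\bigl(\binom{h_i}{2}-e(H_i)\bigr)$, and let~$V$ accept iff some sub-collection of the chosen edges and non-edges is an induced copy of some~$H_i$; since $\alpha,\beta$ are constants and the $x_j$ (resp.\ $y_j$) need not be distinct, padding with repeats to reach the fixed counts is harmless.

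The one genuine obstacle is the two weighted problems carrying a numerical threshold~$w$ — the decision versions of minimum weight $k$-clique and of MAX $H$-SUBGRAPH — because~$V$ is a fixed algorithm that is never handed~$w$, yet its verdict must compare a weight with~$w$. The plan is to fold~$w$ into the universe. The weight of any candidate substructure lies in a range $R'=\{-cW,\dots,cW\}$ of size $n^{O(1)}$, with~$c$ depending only on~$k$ (resp.\ on~$H$), so I would take $U_n$ to be the disjoint union of weighted objects (vertices and/or edges, i.e.\ pairs $(\text{object},\text{weight})$), plain pairs of vertices (to mark non-edges), one copy of~$R'$ of ``threshold slots'', and a dummy. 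Encode the input so that~$S$ contains all weighted objects of~$G$, the plain pairs marking its edges, and exactly one threshold slot, namely~$w$ itself; then let $\alpha$ pick the weighted objects of the candidate substructure together with that single threshold element, let $\beta$ pick the required non-edges, and let~$V$ accept iff the picked objects form a $k$-clique (resp.\ an induced copy of~$H$) and the sum of their weights is $\le w^\star$ (resp.\ $\ge w^\star$), where $w^\star$ is the picked threshold element. Since every picked weighted object lies in~$S$, its recorded weight is genuine, so reading $w^\star$ off~$S$ faithfully implements the comparison with~$w$; conversely a real solution is certified by presenting its weighted objects, its non-edges, and the threshold element.

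Two pieces of bookkeeping then close the argument. First, the definition of~\LS{} forces $\alpha$ and $\beta$ to be positive, whereas $k$-\SUM{}, Collinearity and minimum weight $k$-clique carry no non-edge constraints, and a complete (or edgeless, or single-vertex)~$H$ makes $\beta$ (or $e(H)$) vanish; in each such case I would adjoin one dummy element to~$U_n$ declared never (resp.\ always) to lie in~$S$, raise the offending parameter to~$1$, and have~$V$ ignore the extra arguments — sound because the $y_j$'s (resp.\ $x_j$'s) may repeat and the dummy is always present. For the most degenerate graphs ($H=K_1$ or edgeless~$H$) it is convenient to also adjoin an extra $[n]$-block to~$U_n$ and put all vertices of~$G$ into~$S$, so that a copy of~$H$ can be certified by naming its vertices directly. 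Second, in each case one verifies the routine side conditions $|U_n|=n^{O(1)}$, $B_n$ computable in $O(1)$ time via the $n$-independent enumeration above, and~$V$ polynomial-time, all of which are immediate. I expect the threshold-encoding step for the two weighted problems to be where the real care is needed; everything else is a mechanical unwinding of the definition of~\LS{}.
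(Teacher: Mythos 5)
Your construction matches the paper's in all essentials: the same universes ($[k]\times R$ for $k$-\SUM{}, $R^2$ for Collinearity, pairs of vertices for the graph problems, with weighted objects and a dedicated ``threshold slot'' folded into $S$ for the two weighted problems) and the same verifiers, so the key move — encoding the numerical threshold $w$ as an element of the instance set rather than hard-coding it into $V$ — is exactly the paper's. The extra bookkeeping you do with dummy elements to keep $\alpha,\beta\ge 1$ and to cover degenerate $H$ actually patches a small internal inconsistency: the paper's definition of \LS{} requires $\alpha$ and $\beta$ to be \emph{positive}, yet its own proof sets $\beta=0$ for $k$-\SUM{}, Collinearity, and minimum weight $k$-clique.
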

\begin{proof}
    An instance of a problem from the class \LS{} is of the form $(n, m, S)$. For the sake of simplicity, in this proof, we omit $n$ and $m$ in the problems descriptions.
    Recall that in all of~the following problems, 
    $R=\{-n^{O(1)}, \dotsc, n^{O(1)}\}$
    denotes the range 
    of~possible values of~the input integers.
    In~graph problems, we~assume that the set 
    of~nodes of~an~input graph is~$[n]$.
    For several problems, we fix the values of some coordinates of an input for the algorithm $V$, by that we mean that $V$ also checks those coordinates.
    \null\hfill\\
    \begin{itemize}
        \item $k$-\SUM{}. Let $U_n=[k] \times R$, $\alpha=k$, $\beta=0$. An~instance of~the problem is a~list of $kn$ pairs: 
        \[(1, a_1^1), (1, a_2^1), \dotsc, (1, a_n^1), (2, a_1^2),\dotsc, (2, a_n^2), \dotsc, (k, a_1^k), \dotsc, (k, a_n^k) \, .\]
        Given a~list $((1, b_1), \dotsc, (k, b_k))$ of~$\alpha=k$ elements from the input, the algorithm~$V$ checks that $b_1+\dotsc+b_k=0$.
        
        \item Collinearity. Let $U_n=R^2$, $\alpha=3$, $\beta=0$.
        Given $\alpha=3$ points $p_1, p_2, p_3$, the algorithm~$V$
        checks whether they are different and collinear.
        
        \item $H$-induced subgraph. Let $U_n=[n]^2$ and let $\alpha$~and~$\beta$ be the number of~edges and non-edges of~$H$, respectively. An~instance of~the problem 
        is~the set of~edges of the input graph~$G$.
        Given a~list $((u_1, v_1), \dotsc, (u_{\alpha}, v_{\alpha}))$ of~$\alpha$~edges and a~list 
        $((s_1, t_1), \dotsc, (s_{\beta}, t_{\beta}))$
        of~$\beta$~non-edges of~$G$, the algorithm~$V$ checks that they
        span a~graph with $|V(H)|$ nodes that is~isomorphic
        to~$H$.
        
        \item $\mathcal{H}$-induced subgraph. Let $U_n=[n + 1]^2$ and let $\alpha$~and~$\beta$ be the maximum number of~edges of a graph from $\mathcal{H}$ and the maximum number of non-edges of a graph from $\mathcal{H}$, respectively. An~instance of~the problem 
        is~the set of~edges of the input graph~$G$ with $n$ vertices, and an edge $(n + 1, n + 1)$. We add an additional vertex with a loop to ensure that $S, (U_n \setminus S) \neq \emptyset$.
        Given a~list $((u_1, v_1), \dotsc, (u_{\alpha}, v_{\alpha}))$ of~$\alpha$~edges and a~list 
        $((s_1, t_1), \dotsc, (s_{\beta}, t_{\beta}))$
        of~$\beta$~non-edges of~$G$, the algorithm~$V$ checks that 
        there exists a graph $H \in \mathcal{H}$ such that 
        edges $((u_1, v_1), \dotsc, (u_{|E(H)|}, v_{|E(H)|}))$ together with non-edges $((s_1, t_1), \dotsc, (s_{|\overline{E}(H)|}, t_{|\overline{E}(H)|}))$
        span a~graph with $|V(H)|$ nodes that is~isomorphic
        to~$H$ and contains only vertices from $[n]$.
        
        \item Minimum weight $k$-clique (decision version). 
        Let $U_n = [2] \times n \times n \times R$ and let $\alpha = \binom{k}{2} + 1$ and $\beta = 0$.
        An instance of the problem is a set $S = \{(1, u, v, w_{u, v}) \colon (u, v) \in E(G)\} \cup \{(2, W)\}$, describing the weighted edges of a graph $G$ and a proper encoded upper bound $W$ on the total weight of the $k$-clique. 
        Given a list $((1, u_1, v_1, w_{1}), \dots, (1, u_{\binom{k}{2}}, v_{\binom{k}{2}}, w_{\binom{k}{2}}), (2, W))$, the algorithm $V$ checks that the edges span a $k$-clique and $\sum w_{i} \le W$. The first coordinate helps us to distinguish edges of $G$ and $W$.
        
        \item MAX $H$-SUBGRAPH (decision version, weights on edges). Let $U_n = [3] \times n \times n \times R$, let $\alpha$ be the number of edges of $H$ plus 1, and let $\beta$ be the number of non-edges of $H$. An instance of the problem is a set $S = \{(1, u, v, w_{u, v}), (2, u, v) \colon (u, v) \in E(G)\} \cup \{(3, W)\}$, describing the weighted edges of a graph $G$ and a proper encoded lower bound $W$ on the total weight of the induced $H$. Given a list $((1, u_1, v_1, w_1), \dots, (1, u_{\alpha - 1}, v_{\alpha - 1}, w_{\alpha - 1}))$ of weighted edges of $G$, weight $(3, W)$ and a list $((2, s_1, t_1), \dotsc, (2, s_{\beta}, t_{\beta}))$ of non-edges of $G$, the algorithm $V$ checks that those edges and non-edges span a graph isomorphic to $H$ and that $\sum w_{i} \ge W$. 

        \item MAX $H$-SUBGRAPH (decision version, weights on vertices) 
        Let $U_n = [3] \times n \times R$, let $\alpha$ be the total number of vertices and edges in $H$ plus 1, and let $\beta$ be the number of non-edges of $H$. 
        An instance of the problem is a set $S = \{(1, u, v) \colon (u, v) \in E(G)\} \cup \{(2, v, w_w) \colon v \in V(G)\} \cup \{(3, W)\}$, describing the edges of $G$, the weights of vertices of $G$ and the lower bound $W$ on the total weight of the induced $H$. 
        Given a list $((1, u_1, v_1), \dots, (1, u_{|E(H)||}, v_{|E(H)|}))$ of edges of $G$, a list $((2, a_1, w_1), \dots, (2, a_{|V(H)|}, w_{|V(H)|})$ of vertices of $G$ with their weights, weight $(3, W)$ and a list $((1, s_1, t_1), \dotsc, (1, s_{\beta}, t_{\beta}))$ of non-edges of $G$, the algorithm $V$ checks that those edges and non-edges span a graph isomorphic to $H$ on vertices $a_1, \dots, a_{|V(H)|}$, and that $\sum w_{i} \ge W$. 
    \end{itemize}
\end{proof}

Before proving that every problem from \LS{} admit 
a~polynomial formulation of~size 
$(n + m)^{1 + \varepsilon}$, we~provide an~instructive
example.
Consider the triangle detection problem.
For an input graph~$G$, we assume that its vertex set $V = [n]$ and its edge set $E = \{e_1, \dotsc, e_m\} \subseteq [n]\times[n]$. To solve this problem, we can iterate over all tuples $(v_1, v_2, v_3) \in [n]^3$ of different vertices and all tuples $(i_{12}, i_{23}, i_{31}) \in [m]^3$, and check whether $e_{i_{12}} = (v_1, v_2)$, $e_{i_{23}} = (v_2, v_3)$ and $e_{i_{31}} = (v_3, v_1)$. It can be expressed as the following polynomial: 
\[\sum\limits_{\substack{v_1, v_2, v_3 \in [n] \\ i_{12}, i_{23}, i_{31} \in [m]}} 
x^{i_{12}}_{v_1, v_2} x^{i_{23}}_{v_2, v_3} x^{i_{31}}_{v_3, v_1} \; ,\]  
where a~variable $x^{i}_{u_1, u_2}$ is $1$ if $e_{i} = (u_1, u_2)$, and~$0$, otherwise. The issue with this polynomial is that it requires $mn^2$ variables, and we are limited with only $O((n + m)^{1 + \varepsilon})$. To solve this, we consider binary representation of the pairs of vertices, break them into $\theta = O(1)$ blocks and instead of checking that $e_i = (u_1, u_2)$, 
for every $q \in [\theta]$, we independently check that the $q$-th block of $e_i$ is equal to the the $q$-th block of $(u_1, u_2)$.
Let us denote the $q$-th block of the binary representation of $e$ as $e(q)$. Then, we can rewrite our polynomial in the following way: 
\[\sum\limits_{\substack{v_1, v_2, v_3 \in [n] \\ i_{12}, i_{23}, i_{31} \in [m]}} 
\prod\limits_{q = 1}^{\theta} 
x^{i_{12}, q}_{(v_1, v_2)(q)} \cdot x^{i_{23}, q}_{(v_2, v_3)(q)} \cdot x^{i_{31}, q}_{(v_3, v_1)(q)} \; ,\] 
where $x^{i, q}_{a}$ is 1 if $e_i(q) = a$, and~$0$, otherwise. The degree of the polynomial is still constant, but the number of variables is now $O(m \cdot \theta \cdot 2^{2 \log n / \theta}) = O(m n^{2 / \theta}) = 
O((n + m)^{1 + \varepsilon})$ for $\theta > 2 / \varepsilon$, and we can compute all of them also in time $O((n + m)^{1 + \varepsilon})$.
The only remaining issue is that we could construct different polynomials for $A_s$ since $s = n + m$ for different variations of $n$ and $m$. To resolve that, we can change $[n]$ and $[m]$ to $[n + m]$ in the polynomial and make the variables also verify that $i_{12}$, $i_{23}$, and $i_{31}$ are in fact in $[m]$.

We can use that technique for every \LS{} problem with $\beta = 0$. For example, for 
detecting a~(not necessarily induced) $4$-cycle,
we can construct the following polynomial (we use addition modulo 4):
\[\sum\limits_{\substack{v_1, \dots, v_4 \in [s] \\ i_{12}, i_{23}, i_{34}, i_{41} \in [s]}}
\prod\limits_{\ell = 1}^4 \prod\limits_{q = 1}^{\theta}  x^{i_{\ell, \ell + 1}}_{v_{\ell}, v_{\ell + 1}(q)} \; .
\]
However, if we want to check the existence of an \emph{induced} 4-cycle, we also need to verify non-edges. Since the number of non-edges can be large, we cannot iterate over all of them. To resolve that issue, we consider the edge set of $G$ as a sorted list of edges $e_1 < \dotsb < e_m$. Then, $(u, v)$ is a non-edge if $(u, v) < e_1$, $(u, v) > e_m$ or there exists $j \in [m - 1]$ such that $e_j < (u, v) < e_{j + 1}$. That means that to verify a non-edge it is sufficient to also iterate over the edges of $G$ and check some inequalities, that we also can do by dividing their binary representations into $\theta$ blocks. As a result, we obtain a polynomial formulation of complexity $O((n + m)^{1 + \varepsilon})$. 

\begin{theorem}
    For every $A \in \LS{}$ and every $\varepsilon>0$, there exist a~$\Delta$-polynomial formulation of complexity  $O(s^{1+\epsilon})$ of~$A$.
\end{theorem}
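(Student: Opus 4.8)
The plan is to lift the construction from the triangle / induced‑4‑cycle examples above to an arbitrary $A\in\LS{}$ given by $(\alpha,\beta,V)$ and a universe family $\{U_n\}_{n\in\N}$ with $|U_n|\le C\,n^{c}$. Fix $\varepsilon>0$, let $\widetilde s=n+m$ denote the instance size (to avoid clashing with the symbol $s$ for the number of variables of the formulation, which I write $s'(\widetilde s)$), and pick a constant $\theta=\theta(\varepsilon,A)\ge 2c/\varepsilon$. I will produce a $\Delta$-explicit polynomial family $\mathcal P=\{P_{s'}\}$ in $s'=O(\widetilde s^{\,1+\varepsilon})$ variables of constant degree $\Delta=O(\theta(\alpha+\beta))$, together with maps $\phi_{\widetilde s}\colon A_{\widetilde s}\to\Z^{s'}$, such that $P_{s'(\widetilde s)}(\phi_{\widetilde s}(x))\neq 0$ iff $x$ is a yes-instance.

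First the variables. View the input set $S$ as the list $a_1<a_2<\dots<a_m$ of its elements sorted by $B_n$, write each value $B_n(a_i)\in[|U_n|]$ in binary, cut it into $\theta$ blocks of at most $\lceil\log|U_{\widetilde s}|\rceil/\theta$ bits, and let $B_n(a_i)(q)$ be the $q$-th block. Introduce $0/1$ variables $z^{i,q}_{b}$ for $i\in[\widetilde s]$, $q\in[\theta]$ and every block value $b$, meant to be $1$ iff $i\le m$ and $B_n(a_i)(q)=b$; padding variables $\mathrm{exists}_i$ ($i\in[\widetilde s]$) equal to $1$ iff $i\le m$; and a constant number of block-indicator variables that encode the threshold $|U_n|$. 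The number of block values is $O(|U_{\widetilde s}|^{1/\theta})=O(\widetilde s^{\,c/\theta})\le O(\widetilde s^{\,\varepsilon/2})$, so in total there are $O(\widetilde s\cdot\theta\cdot\widetilde s^{\,\varepsilon/2})=O(\widetilde s^{\,1+\varepsilon})$ variables; $\phi_{\widetilde s}$ only has to sort $S$ and read off blocks, so it runs in time $O(\widetilde s^{\,1+\varepsilon})$, and all values are $0/1$, hence $|\phi_{\widetilde s}(x)|<2<2^{(s')^{o(1)}}$.

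Now the polynomial. $P_{s'}$ is the sum, over all choices of member indices $i_1,\dots,i_\alpha\in[\widetilde s]$, non-membership certificate indices $j_1,\dots,j_\beta\in\{0,1,\dots,\widetilde s\}$, and candidate values $c_1,\dots,c_{\alpha+\beta}\in[|U_{\widetilde s}|]$, of the coefficient $V(c_1,\dots,c_{\alpha+\beta})$ times: (i) $\prod_{l\le\alpha}\mathrm{exists}_{i_l}\prod_{q}z^{i_l,q}_{c_l(q)}$, which on evaluation is $1$ iff $c_l=B_n(a_{i_l})$; (ii) for each $l\le\beta$, a ``gap certificate'' $\mathbf 1[a_{j_l}<c_{\alpha+l}<a_{j_l+1}]$, with the conventions $a_0=-\infty$, $a_{m+1}=+\infty$ and $\mathrm{exists}$-factors killing out-of-range $j_l$, which summed over $j_l$ equals $1$ iff $c_{\alpha+l}\notin B_n(S)$; (iii) a ``$c_{\alpha+l}\le|U_n|$'' gadget built from the encoded threshold. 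Every comparison ``$x<y$'' between block-encoded numbers is written, exactly as in the worked example, as $\sum_q\bigl(\prod_{q'<q}\mathbf 1[x(q')=y(q')]\bigr)\mathbf 1[x(q)<y(q)]$, where an equality is a single $z$-variable (one side is the hard-coded constant $c_l(q')$) and a strict block inequality is the fixed linear form $\sum_{b<y(q)}z^{\cdot,q}_{b}$; all these expressions have nonnegative coefficients, so on any $0/1$ input the whole polynomial evaluates to a sum of nonnegative integers, ruling out cancellations. Carrying out the sums over the $i$'s and the $j$'s exactly as in the discussion preceding the theorem, $P_{s'}(\phi_{\widetilde s}(x))$ equals the number of tuples $(c_1,\dots,c_{\alpha+\beta})$ with $c_1,\dots,c_\alpha\in B_n(S)$, $c_{\alpha+1},\dots,c_{\alpha+\beta}\in[|U_n|]\setminus B_n(S)$ and $V(c)=1$, which is nonzero iff $x$ is a yes-instance.

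Finally the bookkeeping. Each monomial involves $O(\theta(\alpha+\beta))$ variables, so $\deg P_{s'}=\Delta=O(\theta(\alpha+\beta))=\Delta(\varepsilon,A)$; the number of $(i,j,c)$-tuples, which bounds both the number of monomials and the magnitude of every coefficient, is $\widetilde s^{\,O(1)}\,|U_{\widetilde s}|^{O(1)}=(s')^{O(1)}$; and $P_{s'}$ is writable in time $(s')^{O(1)}$ (one call to $V$ per tuple), so taking $\Delta$ a large enough constant makes $\mathcal P$ explicit (put $P_{s'}:=0$ for $s'$ not of the right form). The whole construction refers only to $\widetilde s$ — padding the index ranges to $[\widetilde s]$ and encoding $|U_n|$ by blocks are precisely what make the polynomial independent of the particular split $\widetilde s=n+m$ (and, via a fixed compatible/nested choice of the bijections $B_n$ for the cartesian-product universes at hand, independent of which $n\le\widetilde s$ a value comes from) — so $\mathcal P$ together with $\{\phi_{\widetilde s}\}$ is a $\Delta$-polynomial formulation of $A$ of complexity $O(\widetilde s^{\,1+\varepsilon})$. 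I expect the $\beta>0$ part to be the main obstacle: the ``positive'' part is a cosmetic generalization of the triangle polynomial, but certifying membership in $U_n\setminus S$ with a constant-degree gadget forces the sorted-list-with-gaps idea plus block-wise comparisons, and one has to check carefully that the gap certificate is unique for a genuine non-member and absent for members, that out-of-range candidates $c>|U_n|$ are excluded, and that padding all index ranges introduces no spurious nonzero term — none deep, but all must be arranged so that every evaluation stays a cancellation-free sum of nonnegative integers.
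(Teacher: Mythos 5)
Your proposal is correct and takes essentially the same approach as the paper: sort the input set $S$, reduce membership and non-membership to block-wise comparisons of candidate values against the sorted list (with boundary sentinels and padding to make the polynomial depend only on $\widetilde s=n+m$), and sum over all $(\alpha+\beta)$-tuples accepted by $V$ with a unique index witness per tuple so the evaluation is a cancellation-free count. The only differences are cosmetic implementation choices (block-value indicators $z^{i,q}_b$ and a separate $\le|U_n|$ gadget, versus the paper's three-way comparison variables $x^c_{i,q,a}$ and sentinel $s_{m+1}=n^r+1$ which absorbs the range check).
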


\begin{proof}
The idea is the following. We go through all possible
$a_1, \dotsc, a_{\alpha}$, $b_1, \dotsc, b_{\beta} \in U_n \colon$ $V(a_1, \dotsc, a_{\alpha}, b_1, \dotsc, b_{\beta}) = 1$ and verify that every $a_i \in S$ and every $b_i \notin S$. So we obtain a polynomial of the following form:
\[
\sum\limits_{\substack{
a_1, \dots, a_{\alpha} \in U_n, \\
b_1, \dots, b_{\beta} \in U_n \colon \\
V(a_1, \dots, a_{\alpha}, b_1, \dots b_{\beta}) = 1
}}
\prod\limits_{\ell = 1}^{\alpha} P^{\in}_{a_{\ell}}(X)
\prod\limits_{\ell = 1}^{\beta} P^{\notin}_{b_{\ell}}(X),
\]
where $P^{\in}_{a}(X)$ is the polynomial that verifies that $a \in S$, and $P^{\notin}_{b}(X)$ verifies that $b \notin S$.

Notice that we cannot introduce a separate variable for every element of $U_n$ because of its size. 
In order to resolve that problem, we do the following. We can assume that $U_n = [n^r]$ and $S = \{s_1, \dots, s_m\}$, where $m \le n$, $\forall i \in [m] \; s_i \in [n^r]$ and $s_1 < \dots < s_m$. We also consider $s_0 = 0$ and $s_{m + 1} = n^r + 1$. Then, for every $a \in S$ there exists $1 \le i \le m$ such that $a = s_i$, and for every $b \notin S$ there exists $0 \le j \le m$ such that $s_j < b < s_{j + 1}$.

Now, let us also iterate over all possible $1 \le i_1, \dots, i_{\alpha} \le m$ and $0 \le j_1, \dots, j_{\beta} \le m$ and check, whether each $a_{\ell} = s_{i_{\ell}}$ and each $s_{j_{\ell}} < b_{\ell} < s_{j_{\ell} + 1}$. The desired polynomial is now of the following form:
\[
\sum\limits_{\substack{
1 \le a_1, \dots, a_{\alpha} \le n^r, \\
1 \le b_1, \dots, b_{\beta} \le n^r \colon \\
V(a_1, \dots, a_{\alpha}, b_1, \dots b_{\beta}) = 1
}}
\sum\limits_{\substack{
1 \le i_1, \dots, i_{\alpha} \le m, \\
0 \le j_1, \dots, j_{\beta} \le m
}}
\prod\limits_{\ell = 1}^{\alpha} P^=_{i_{\ell}, a_{\ell}}(X)
\prod\limits_{\ell = 1}^{\beta} P^<_{j_{\ell}, b_{\ell}}(X)
\prod\limits_{\ell = 1}^{\beta} P^>_{j_{\ell} + 1, b_{\ell}}(X),
\]
where $P^{comp}_{i, a}(X)$ verifies that the result of comparison of $s_i$ with $a$ is $comp$.

Though this polynomial can be~used to~solve the problem, it 
is not yet a~polynomial formulation, as it depends on~$n$ and~$m$, and not on the instance size $s = n + m$. Therefore, we change $n$~and~$m$ to $n + m$ in the polynomial and make $P^{comp}_{i, a}(X)$ also verify that $i \le m + 1$ ($a_i \le n^r$ and $b_i \le n^r$ are automatically verified in $P^=_{i_{\ell}, a_{\ell}}(X)$ and  $P^>_{j_{\ell} + 1, b_{\ell}}(X)$, respectively).

It is left to construct polynomials $P^{comp}_{i, a}(X)$ of constant degree such that all their variables can be computed in time $O((n + m)^{1 + \varepsilon})$.

Let $\overline{s_i}$ and $\overline{a}$ be binary representations of $s_i$ and $a$, respectively.
The length of those two strings is $r \log n$. Let us split both of the strings into $\theta$ parts of length $r \log n / \theta$, where $\theta$ is a constant which we will define later. 

Let $\overline{s_i} = s_i^1, \dots, s_i^{\theta}$ and $\overline{a} = a^1, \dots, a^{\theta}$, and let $c^1, \dots, c^{\theta} \in \{<, =, >\}$ be the results of comparison of the corresponding $s_i^q$ and $a^q$. Notice that, having $c^1, \dots, c^{\theta}$, we can conclude the result of comparison of $s_i$ and $a$. 

Let $C_{\theta}^{comp}$ be a set of all tuples $(c^1, \dots, c^{\theta})$ that give $comp$ as an overall result of comparison. Formally, we define those sets as follows:
\[C^{=}_{\theta} := \{=\}^{\theta}, \quad
C^{<}_{\theta} := \bigcup\limits_{q = 1}^{\theta} (\{=\}^{q - 1} \times \{<\} \times \{<, =, >\}^{\theta - q}), \quad
C^{>}_{\theta} := \{<, =, >\}^{\theta} \setminus C^{=}_{\theta} \setminus C^{<}_{\theta} .\]

Now we can iterate through all $(c^1, \dots, c^{\theta}) \in C^{comp}_{\theta}$ and for every $1 \le q \le \theta$ independently check, whether the result of comparison of $s_i^q$ and $a^q$ is $c^q$.

The final polynomial $P^{comp}_{i, a}(X)$ is the following.
\[
P^{comp}_{i, a}(X) = \sum\limits_{(c^1, \dots, c^{\theta}) \in C^{comp}_{\theta}}
\prod\limits_{q = 1}^{\theta} x^{c_q}_{i, q, a^q},
\]
where \[X = \{x^c_{i, q, a} \colon 
    c \in \{<, =, >\},\, 
    0 \le i \le n + m + 1,\,
    1 \le q \le \theta,\,
    a \in \{0, 1\}^{r \log (n + m) / \theta}\},
\] 
and $x^c_{i, q, a} = 1$ if~and only~if $i \le m + 1$ and the result of~comparison of $s_i^q$ and~$a$ is~$c$.

The number of variables is $3 (n + m + 1) \theta (n + m)^{r / \theta} = {O}((n + m)^{1 + r / \theta})$.
Each variable can be computed in time $O(1)$.
Take $\theta > r / \varepsilon$.
Then, all variables can be computed in time ${O}((n + m)^{1 + \varepsilon})$.
The polynomial degree is $\Delta=\theta (\alpha + 2\beta)=O(1)$. 
The whole polynomial can be constructed in time $(n + m)^{{O}(1)}$.
\end{proof}



\begin{corollary}
\label{cor:pf}
    For every $\varepsilon > 0$, the following problems admit polynomial formulations of complexity $s^{1 + \varepsilon}$ (where $s$ is the size parameter of the problem):
    \begin{itemize}
        \item $k$-\SUM{},
        \item Collinearity,
        \item $H$-induced subgraph for any fixed graph~$H$,
        \item $\mathcal{H}$-induced subgraph for any fixed finite set $\mathcal{H}$ of graphs,
        \item the decision version of minimum weight $k$-clique,
        \item the decision version of MAX $H$-SUBGRAPH (with weights on vertices or on edges).    
    \end{itemize}
\end{corollary}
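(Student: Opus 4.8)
The plan is to chain the two preceding theorems with essentially no new work. First I would note that the theorem asserting membership in \LS{} already shows that each of the six problems in the list---$k$-\SUM{}, Collinearity, $H$-induced subgraph, $\mathcal H$-induced subgraph, the decision version of minimum weight $k$-clique, and the decision version of MAX $H$-SUBGRAPH---lies in the class \LS{}, together with an explicit choice of universe family $\{U_n\}_{n\in\mathbb N}$, arities $\alpha,\beta$, and polynomial-time verification algorithm $V$. Second, I would invoke the theorem stating that every $A\in\LS{}$ admits, for each $\varepsilon>0$, a $\Delta(\varepsilon)$-polynomial formulation of complexity $O(s^{1+\varepsilon})$. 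Composing these two statements yields, for each problem in the list, a $\Delta$-polynomial formulation of complexity $O(s^{1+\varepsilon})$.

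The only discrepancy between $O(s^{1+\varepsilon})$ and the bound $s^{1+\varepsilon}$ stated in the corollary is the hidden constant, and I would remove it by a standard rescaling of the parameter: given $\varepsilon>0$, apply the \LS{} formulation theorem with $\varepsilon'=\varepsilon/2$ to obtain a formulation of complexity $O(s^{1+\varepsilon/2})$, and observe that $O(s^{1+\varepsilon/2})\le s^{1+\varepsilon}$ for all sufficiently large $s$; the finitely many remaining instances do not affect an asymptotic time bound (or can be absorbed by trivial padding). Since the family produced is still $\Delta(\varepsilon')$-explicit with constant degree, all requirements of the definition of a polynomial formulation are met.

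The one point that needs a little bookkeeping is the size parameter. For the graph problems the size is $s=n+m$, which is exactly the convention used both in the definition of \LS{} and in the formulation theorem, so the complexity bound transfers verbatim; for $k$-\SUM{} and Collinearity the instance is a set $S$ with $\beta=0$ and $s$ is the number of input objects, again matching the $\LS{}$ setup. I do not expect any genuine obstacle here: the corollary is a direct consequence of the two theorems, with the substantive content---verifying that these problems fall into \LS{}, and the block-decomposition construction of the formulation polynomial---already carried out in their proofs.
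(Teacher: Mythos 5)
Your proof is correct and follows exactly the route the paper intends: the corollary is an immediate composition of the theorem placing the listed problems in \LS{} with the theorem giving, for every $A\in\LS{}$ and every $\varepsilon>0$, a $\Delta(\varepsilon)$-polynomial formulation of $A$ of complexity $O(s^{1+\varepsilon})$. The remark about absorbing the hidden $O(\cdot)$ constant by applying the theorem with $\varepsilon/2$ is harmless bookkeeping that the paper leaves implicit; there is no gap.
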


\begin{corollary}
    For every $\varepsilon > 0$, the problems listed above are not $s^{1 + \varepsilon}$-SETH-hard under POSETH.
\end{corollary}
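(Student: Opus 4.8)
The plan is to chain together the three preceding results: the polynomial formulation of complexity $s^{1+\varepsilon'}$ from Corollary~\ref{cor:pf}, the observation (Theorem~\ref{thm:pf-potime}) that any polynomial formulation of complexity $t(s)$ yields a $\POTIME$ algorithm of the same complexity with an explicit oracle family, and finally Theorem~\ref{thm:potime-poseth}, which says that a $\POTIME$ algorithm running in time $t(s)^{1-\delta}$ for some $\delta>0$ rules out $t(s)$-SETH-hardness under POSETH. The only thing one has to engineer is a genuine gap between the exponent achievable in the formulation and the target exponent $1+\varepsilon$.

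Concretely, I would fix $\varepsilon>0$ and apply Corollary~\ref{cor:pf} with the smaller parameter $\varepsilon'=\varepsilon/2$: each listed problem $A$ admits a polynomial formulation of complexity $O(s^{1+\varepsilon/2})$, hence by Theorem~\ref{thm:pf-potime} there is an explicit polynomial family $\mathcal Q$ with $A\in\POTIME[O(s^{1+\varepsilon/2}),\mathcal Q]$. Now write $1+\varepsilon/2=(1+\varepsilon)(1-\delta)$ with $\delta=(\varepsilon/2)/(1+\varepsilon)>0$; then $O(s^{1+\varepsilon/2})=O(s^{(1+\varepsilon)(1-\delta)})$, and choosing $\delta'=\delta/2>0$ to absorb the hidden constant factor for all sufficiently large $s$ gives $A\in\POTIME[(s^{1+\varepsilon})^{1-\delta'},\mathcal Q]$. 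Applying Theorem~\ref{thm:potime-poseth} with the time bound $t(s)=s^{1+\varepsilon}$ then shows that $A$ is not $s^{1+\varepsilon}$-SETH-hard under POSETH, which is exactly the claim.

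I do not expect any real obstacle; the argument is a routine composition of results already proved in the paper. The single point deserving a line of care is the constant-factor bookkeeping: one must take the exponent in Corollary~\ref{cor:pf} \emph{strictly} below $1+\varepsilon$ (here $1+\varepsilon/2$ suffices, but any $1+\varepsilon'$ with $0<\varepsilon'<\varepsilon$ works) so that, after absorbing the $O(\cdot)$ from the polynomial formulation's complexity, the resulting $\POTIME$ running time is still of the shape $(s^{1+\varepsilon})^{1-\delta'}$ with a strictly positive $\delta'$, as demanded by the definition of $t(s)$-SETH-hardness and the hypothesis of Theorem~\ref{thm:potime-poseth}. An entirely analogous chain using the nondeterministic versions (PONTIME and PONSETH) would give the same conclusion under PONSETH, should one wish to state it.
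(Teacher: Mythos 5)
Your proposal is correct and follows essentially the same route as the paper: choose an exponent strictly below $1+\varepsilon$ in Corollary~\ref{cor:pf}, pass to a $\POTIME$ algorithm via Theorem~\ref{thm:pf-potime}, and invoke Theorem~\ref{thm:potime-poseth}. The only difference is cosmetic — the paper states it as a one-line contradiction, whereas you spell out the exponent arithmetic ($1+\varepsilon/2=(1+\varepsilon)(1-\delta)$) and the absorption of the $O(\cdot)$ constant, which is a reasonable bit of extra care rather than a new idea.
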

\begin{proof}
    Let a problem $A$ from the list be $s^{1 + \varepsilon}$-SETH-hard for some $\varepsilon > 0$. 
    Let us choose $0 < \varepsilon' < \varepsilon$.
    By Corollary \ref{cor:pf}, $A$ admits a polynomial formulation of complexity $s^{1 + \varepsilon'}$. Then, by Theorem \ref{thm:pf-potime}, $A \in \POTIME[O(s^{1 + \varepsilon'}), \mathcal{Q}]$ for some explicit polynomial family $\mathcal{Q}$ and we get a contradiction with Theorem \ref{thm:potime-poseth}.
\end{proof}

\section{Computations with small polynomial evaluation oracle: connections between nondeterministic complexity and arithmetic circuit complexity}
\label{sec:polyar}

Recall that in~\cite{DBLP:conf/soda/BelovaGKMS23}, it~is proved that 
for any $\alpha>1$, there exists $\Delta=\Delta(\alpha)$ such that
$k$-\SAT{}, \MAX{}-$k$-\SAT{}, Hamiltonian Path, Graph Coloring, Set Cover,
Independent Set, Clique, Vertex Cover, and $3d$-Matching problems
admit a~$\Delta$-polynomial formulation of~complexity~$\alpha^n$.
That~is, at~the expense of~increasing the degree of a~polynomial,
one can make the number of~variables in~this polynomial 
an~arbitrary small exponential function. In~this section, we~generalize this result and consider computations with polynomial evaluation oracle whose all oracle calls compute a~polynomial
that is not too~large.
We~show that 
\MAX{}-$k$-\SAT{} (for $k \ge 3$), 
{Set Cover with $n^{O(1)}$ sets of size at most $O(\log{n})$}, and 
Binary Permanent problems
can be~solved in~deterministic time $2^{(1-\varepsilon)n}$
by an~algorithm that makes oracle calls to~evaluating a~polynomial
whose number of~variables is~not too large. 
Using this, we~show that if~any of~these problem is not in~$\coNTIME[O(2^{(1-\delta)n})]$
for any~$\delta>0$, then, for any~$\gamma$, there exists an~explicit polynomial family that cannot be~computed by~arithmetic circuits of~size $O(n^\gamma)$.
It~should be~noted that, currently, for none
of~these problems it~is known how to~solve 
them faster than in~co-nondeterministic time~$2^n$.

\begin{theorem}
    \label{thm:qa}
    Let $A$ be~any of~the following problems:
        \MAX-$k$-\SAT{} (for $k \ge 3$), 
        {Set Cover with $n^{O(1)}$ sets of size at most $O(\log{n})$}, 
        or
        Binary Permanent.
    If $A$~cannot be~solved in~co-nondeterministic time $O(2^{(1-\delta)n})$, for any $\delta>0$, then, for any~$\gamma$, there exists an~explicit polynomial family that cannot be~computed by~arithmetic circuits of~size $O(n^{\gamma})$.
\end{theorem}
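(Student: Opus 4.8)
The plan is to argue by~contraposition. Fix~$\gamma$ and assume that, for this value of~$\gamma$, \emph{every} explicit family of~polynomials can be~computed by~arithmetic circuits of~size $O(n^{\gamma})$; we~will show that then $A$~can be~solved in~co-nondeterministic time $O(2^{(1-\delta)n})$ for some $\delta=\delta(\gamma)>0$, contradicting the hypothesis (for Binary Permanent, ``solved'' refers to~the graph of~the permanent function, i.e.\ the language $\{(A,v)\colon\operatorname{perm}(A)=v\}$). The co-nondeterministic algorithm will guess an~arithmetic circuit for the polynomial that $A$'s oracle algorithm evaluates, verify it, and then use the circuit to~simulate the oracle deterministically; a~computation path that passes the verification holds a~correct circuit and hence produces a~correct answer, while paths that fail the verification abstain. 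This places $A$ in~$\coNTIME$ (in~fact simultaneously in~$\NTIME$) with the claimed time bound.

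The first step is~to~show that for every $\beta>0$ there is $\Delta=\Delta(\beta)$ such that $A\in\POTIME[O(2^{\beta n}),\mathcal Q]$ for an~explicit degree-$\Delta$ family $\mathcal Q=\{Q_t\}$, where on~an~input of~size~$n$ the algorithm spends $O(2^{\beta n})$ time to~prepare a~\emph{single} oracle query, to~$Q_t$ with $t\le 2^{\beta n}$ variables, and the queried integers have absolute value $2^{n^{o(1)}}$. For \MAX{}-$k$-\SAT{} (reducing the optimization version to~the threshold version by~binary search) and for Set Cover this is~precisely the complexity-$\alpha^n$ polynomial formulation of~\cite{DBLP:conf/soda/BelovaGKMS23} instantiated with $\alpha=2^{\beta}$; the restriction to~$n^{O(1)}$ sets of~size $O(\log n)$ only makes that construction cheaper. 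Binary Permanent is~the genuinely new case: starting from Ryser's inclusion--exclusion formula, one~partitions the column set~$[n]$ into blocks, lists the partial contributions of~the blocks ($2^{\beta n}$ of~them in~total), and recombines them by~a~constant-degree polynomial.

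Given such an~oracle algorithm, the co-nondeterministic simulation runs as~follows. By~the assumption $\asize(Q_t)=O(t^{\gamma})$, so~by~Lemma~\ref{lemma:macptwo} together with Remark~\ref{rem:rho} (applicable since the queried integers are bounded by~$2^{n^{o(1)}}$), one~can guess an~arithmetic circuit $C_t$ of~size $O(t^{\gamma})$ and a~prime~$p_t$ and verify that $C_t$ computes $Q_t$ modulo a~sufficiently large~$p_t$ in~nondeterministic time $O(t^{\gamma+2\Delta+1})$. Whenever the verification succeeds, evaluating $C_t$ at~the query point (in~time $t^{\gamma}\cdot n^{O(1)}$) returns the true value of~$Q_t$ there, so~the answer read off is~correct; adding the $O(2^{\beta n})$ spent on~the query, the total running time is
\[2^{\beta(\gamma+2\Delta+2)n}\,.\]

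It~remains to~choose $\beta>0$ small enough that $\beta(\gamma+2\Delta(\beta)+2)<1$, which yields the required $\delta>0$ and hence the contradiction. This balancing is~the heart of~the argument and the reason we~restrict to~these particular problems (and to~the $O(\log n)$-set-size variant of~Set Cover): what is~needed is~a~polynomial formulation whose degree $\Delta(\beta)$ grows slowly enough relative to~$1/\beta$ --- roughly, $\Delta$ must stay below about half the number of~blocks --- so~that the $2\Delta$ term does not exhaust the time budget as $\beta\to0$. We~expect the main obstacle to~be exactly this: not the guess-and-verify machinery, which is~already available from Section~\ref{sec:macp}, but building, for each of~the three problems, a~block-structured polynomial formulation with a~sufficiently favorable degree-versus-complexity trade-off --- with Binary Permanent, not covered by~\cite{DBLP:conf/soda/BelovaGKMS23}, demanding the most care.
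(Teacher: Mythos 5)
Your high-level strategy --- guess-and-verify an arithmetic circuit, then use it to replace the polynomial-evaluation oracle and land in $\coNTIME$ --- is indeed what the paper does. The gap is in how you invoke the polynomial formulation, and it is fatal to your parameter balancing.

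You propose to take the $2^{\beta n}$-complexity, degree-$\Delta(\beta)$ polynomial formulation of $A$ itself and solve $A$ with a single oracle call to a polynomial on $t \le 2^{\beta n}$ variables. The guess-and-verify of Lemma~\ref{lemma:macptwo} then costs about $t^{\gamma+2\Delta+1}=2^{\beta(\gamma+2\Delta+1)n}$, and you need $\beta(\gamma+2\Delta(\beta)+2)<1$, i.e.\ roughly $\Delta(\beta) < \tfrac{1}{2\beta}$. But the block-decomposition formulations of~\cite{DBLP:conf/soda/BelovaGKMS23} (and the ones in this paper) use $\theta\approx1/\beta$ blocks and have degree $\Delta(\beta)=\Theta(\theta)=\Theta(1/\beta)$ with a constant that is \emph{not} below $1/2$, so $2\beta\Delta(\beta)\gtrsim 2$ and the exponent is $>1$ for every $\beta$. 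You flag this as the ``heart of the argument,'' but it is not a matter of constructing a more favorable formulation: for these problems no such formulation is available, and the paper does not attempt one.

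What the paper does instead is decouple the size of the oracle queries from the size of the formulation via a \emph{self-reduction}. It first reduces an instance of $A$ on $n$ bits to $2^{(1-\alpha)n}$ instances of a (possibly different) problem $A'$ on $\alpha n$ bits, and \emph{then} applies a polynomial formulation of complexity $2^{\gamma_f m}$ to each small instance of $A'$ of size $m=\alpha n$. This yields an algorithm running in time $2^{(1-\varepsilon)n}$ (close to $2^n$, not your $2^{\beta n}$) whose oracle calls have size only $2^{\gamma_f\alpha n}=2^{\beta\varepsilon n}$, with $\varepsilon$ an additional knob independent of $\beta$ and hence of $\Delta$. The guess-and-verify cost then scales as $2^{\beta\varepsilon(5+\gamma+2\Delta)n}$, so the dangerous $2\beta\Delta\approx 2$ term is damped by the extra factor $\varepsilon$, which one tunes down to compensate for large $\Delta$. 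The paper packages this as the abstract criterion in Theorem~\ref{thm:nondetarith} and then verifies it for each of the three problems with a tailored self-reduction (branching on variables for \MAX{}-$k$-\SAT{}; a reduction through a hybrid cover/partition counting problem for Set Cover; the recursion on the $F(S^{=1},S^{=0},S^{\ge1})$ counts for Binary Permanent, not the Ryser formula you suggest). Without that self-reduction step, the approach in your proposal does not close.
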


It~is interesting to~compare this result with 
a~result by~\cite{DBLP:journals/iandc/JahanjouMV18}
that says that refuting SETH implies Boolean circuit lower bounds:
both results derive circuit lower bounds, but 
\cite{DBLP:journals/iandc/JahanjouMV18} derives it~from
time upper bounds, whereas we~derive~it
from time lower bounds.

We~derive Theorem~\ref{thm:qa} from the following theorem.

\begin{theorem}
    \label{thm:nondetarith}
    Assume that a~problem~$A$ satisfies the following property:
    \begin{quote}
        for any $0<\beta<1$, there exists an~explicit family of~polynomials~$\mathcal Q$, such that for any $0<\varepsilon<1/2$, there exists a~deterministic algorithm
        that solves~$A$ in~time $O(2^{(1-\varepsilon)n})$
        and makes oracle calls to~computing $Q_s(x_1, \dotsc, x_s)$
        where $s \le 2^{\beta \epsilon n}$ and {$|x_i| \le 2^{O(s^2)}$} 
        for all $i \in [s]$.
    \end{quote}
    Then, if $A \not \in \coNTIME[O(2^{(1-\delta)n})]$, for any $\delta>0$, then, for any~$\gamma$, there exists an~explicit polynomial family that cannot be~computed by~arithmetic circuits of~size $O(n^\gamma)$.
\end{theorem}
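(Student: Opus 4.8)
The plan is to prove the contrapositive: assuming there is a constant $\gamma$ such that \emph{every} explicit family of polynomials is computable by arithmetic circuits of size $O(n^{\gamma})$, I will show that $A \in \coNTIME[O(2^{(1-\delta)n})]$ for some fixed $\delta>0$, contradicting the hypothesis. The construction follows the guess-and-verify paradigm already used in the proof of Theorem~\ref{thm:pnsethlowerbounds}: a co-nondeterministic algorithm first guesses and verifies small arithmetic circuits for the slices of the oracle family that the given deterministic algorithm might query, and then runs that deterministic algorithm, answering each oracle call by evaluating the corresponding guessed circuit.

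Concretely, I would fix parameters in the order $\beta$, then $\Delta$, then $\varepsilon$ (this order matters; see below). First choose a small $\beta\in(0,1)$; apply the hypothesis on $A$ to obtain an explicit — say $\Delta$-explicit with $\Delta=\Delta(\beta)$ — family $\mathcal Q=\{Q_s\}$; then choose a small $\varepsilon\in(0,1/2)$ and obtain the deterministic algorithm $\mathcal B$ that solves $A$ in time $O(2^{(1-\varepsilon)n})$ using only oracle calls to $Q_s$ with $s\le S:=2^{\beta\varepsilon n}$ and arguments of magnitude at most $\rho_s=2^{O(s^2)}$. By the assumption, $\asize(Q_s)=O(s^{\gamma})$. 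The co-nondeterministic algorithm has two stages. In the \emph{preprocessing stage}, for every $s\le S$ it calls Lemma~\ref{lemma:macptwo} with $\rho=\rho_s$ to nondeterministically guess and verify an arithmetic circuit $C_s$ of size $O(s^{\gamma})$ together with a prime $p_s=2^{O(s^2)}$ such that $C_s$ agrees with $Q_s$ on all integer inputs of magnitude at most $\rho_s$ and $|Q_s(x)|<p_s$ there. Since $\log\rho_s=O(s^2)$, Lemma~\ref{lemma:macptwo} runs in nondeterministic time $O(s^{\gamma}\cdot s^{2\Delta}\cdot\log^2(s\rho_s))=O(s^{\gamma+2\Delta+4})$ per $s$, so the whole stage takes nondeterministic time $O(S^{\gamma+2\Delta+5})=O(2^{\beta\varepsilon n(\gamma+2\Delta+5)})$. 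In the \emph{solving stage}, run $\mathcal B$ on the input, and whenever it queries $Q_s(x)$, evaluate $C_s$ at $x$ modulo $p_s$ gate by gate and recover the exact integer value using $|Q_s(x)|<p_s$; since each gate is an arithmetic operation on $O(s^2)$-bit numbers, answering one query to $Q_s$ costs $s^{O(1)}$ time, hence at most $S^{O(1)}$. Summing over the at most $O(2^{(1-\varepsilon)n})$ queries and adding $\mathcal B$'s own work gives total time $O(2^{(1-\varepsilon)n}+2^{(1-\varepsilon)n}\cdot S^{O(1)})=O(2^{(1-\varepsilon+O(\beta\varepsilon))n})$, where the constant hidden in $O(\beta\varepsilon)$ depends only on $\gamma$.

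Then I would set the constants so that all three contributions are $O(2^{(1-\delta)n})$ for one fixed $\delta>0$: taking $\beta$ small enough (depending only on $\gamma$ and the absolute RAM-arithmetic constant) makes the exponent $1-\varepsilon+O(\beta\varepsilon)$ bounded away from $1$, and — with $\Delta=\Delta(\beta)$ now a fixed constant — taking $\varepsilon$ small enough makes $\beta\varepsilon(\gamma+2\Delta+5)<1$, so the preprocessing exponent is also bounded away from $1$; one may then take, e.g., $\delta=\varepsilon/2$. For correctness, observe that a valid circuit for each $Q_s$ exists by assumption, so some nondeterministic branch passes all verifications; on such a branch the oracle simulation is exact and $\mathcal B$ outputs the correct answer, while any branch failing a verification is rejected. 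Defining acceptance in the standard way — a branch ``accepts'' precisely when all verifications pass and $\mathcal B$ outputs ``no'' — yields a genuine co-nondeterministic algorithm for $A$ of the claimed complexity (and the symmetric acceptance rule gives an $\NTIME$ algorithm of the same complexity). This contradicts $A\notin\coNTIME[O(2^{(1-\delta)n})]$ and finishes the proof.

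The main obstacle is the interaction of the three exponents, driven by the size of the oracle arguments: because the deterministic algorithm may feed $Q_s$ integers as large as $2^{O(s^2)}$, the verification modulus $p_s$ has $\Theta(s^2)$ bits and each circuit-gate evaluation costs $\mathrm{poly}(s)$ rather than $\mathrm{polylog}(s)$, so Remark~\ref{rem:rho} does not apply and one cannot afford circuits over slices of size close to $2^{n}$. What rescues the argument is exactly the restriction built into the hypothesis — only slices of size $s\le 2^{\beta\varepsilon n}$ are ever queried — so that both the $\mathrm{poly}(s)$ per-gate cost and the $s^{\gamma+2\Delta}$ circuit-size blow-up inflate the running time by only a $2^{O(\beta\varepsilon n)}$ factor, which is negligible once $\beta$ is small. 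The one genuinely delicate point is the order of quantifiers: $\Delta$ is a constant, but its value depends on $\beta$, so $\beta$ must be fixed before $\Delta$ (using only $\gamma$ and absolute constants), and $\varepsilon$ fixed after $\Delta$; with that ordering no parameter is asked to be small in terms of itself.
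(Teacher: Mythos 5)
Your proof is correct and follows essentially the same route as the paper's: contrapositive, a preprocessing stage that uses Lemma~\ref{lemma:macptwo} to guess-and-verify small arithmetic circuits for all $Q_s$ with $s\le 2^{\beta\varepsilon n}$, and a solving stage that replaces each oracle call in the deterministic algorithm by a circuit evaluation modulo the guessed prime, with the same $s^{\gamma+2\Delta+4}$-per-slice and $s^{\gamma+4}$-per-query accounting and the same constraint $\beta<1/(\gamma+4)$. If anything, your explicit ordering of the parameters ($\beta$ before $\Delta=\Delta(\beta)$ before $\varepsilon$, with $\delta=\varepsilon/2$) is a slightly cleaner way to make $0<\varepsilon<1/2$ compatible with the other constraints than the paper's presentation, and your acceptance rule (accept iff all verifications pass and $\mathcal B$ says ``no'') directly yields the $\coNTIME$ algorithm without the paper's explicit detour through $\neg A$.
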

\begin{proof}
    {First, note that if the property below holds for a problem $A$, then it holds for $\neg{A}$.}
    Assume that there exists~$\gamma$ such that every explicit family 
    of~polynomials has arithmetic circuit size~$O(n^\gamma)$. Below, 
    we~show that in~this case {$A$~can be~solved in~co-nondeterministic time $O(2^{(1-\delta)n})$ (or $\neg{A}$~can be~solved in~nondeterministic time $O(2^{(1-\delta)n})$)} for some $\delta=\delta(\gamma)$. To~do this, 
    we~take the corresponding algorithm with polynomial evaluation
    oracle {for $\neg{A}$}, nondeterministically guess arithmetic circuits for the corresponding polynomial, and replace every oracle call 
    with a~call to~the corresponding circuit.

    Let $0<\beta<1$ be~fixed and $\mathcal Q$~be the corresponding $\Delta$-explicit polynomial family. 
    For a~parameter~$\alpha$ to~be chosen later, for all $t \le 2^{\alpha n}$, find a~circuit $C_t$ of size $O(t^{\gamma})$ computing $Q_t$
    (for all points $(x_1, \dotsc, x_t)$ such that $|x_i| \le 2^{O(t^2)}$ for all $i \in [t]$). By~Lemma~\ref{lemma:macptwo}, it~can be~done in~nondeterministic time
    {
    \[O\left( 2^{\alpha n} \cdot 2^{\alpha \gamma n} \cdot 2^{2\alpha\Delta n} \cdot 
    \log^2\left( 2^{\alpha n} \cdot 2^{O\left(2^{2\alpha n}\right)}\right)
    \right)=O\left( 2^{\alpha n(5+\gamma+2\Delta)}\right) \, .\]}
    Take $\alpha=\frac{1}{6(5+\gamma+2\delta)}$ to~ensure that this 
    is~at most~$O(2^{n/2})$.

    Now, take $\varepsilon=\alpha/\beta$ (clearly, $0 < \varepsilon < 1/2$) and consider the algorithm from the problem statement that solves~$A$ in time $O(2^{(1-\varepsilon)}n)$
    with oracle calls of~size at~most $2^{\beta \varepsilon n}=2^{\alpha n}$. For each such oracle call~$Q_t$, we~have a~circuit $C_t$ of~size $O(t^{\gamma})$
    (that computes $Q_t$ correctly on~all inputs of~bounded absolute value). 
    
    Recall that the algorithm makes oracle calls for $Q_t(x_1, \dotsc, x_t)$ where $|x_i| \le 2^{O(t^2)}$ for all $i \in [t]$. 
    By~\eqref{eq:explicitpolyupper}, 
    $|Q_t(x_1, \dotsc, x_t)|<M_t$ 
    where $M_t=O((t\rho)^{2\Delta})={2^{O(t^2)}}$.
    To~compute the value of~$Q_t(x_1, \dotsc, x_t)$
    using the circuit~$C_t$, we~need to~bound the absolute value of~the intermediate results. To~this end,
    we~take a~prime~$2M_t \le p_t \le 4M_t$ and compute 
    the result of~every gate of $C_t(x_1, \dotsc, x_t)$
    modulo~$p_t$. One can generate~$p_t$
    in~nondeterministic time $O(\log^7M_t)=O(t^{14})$~\cite{aks04,lp19}. Generating such primes for all $t \le 2^{\alpha n}$ takes nondeterministic time at~most
    \[O\left(2^{\alpha n} \cdot 2^{14\alpha n}\right)={O(2^{15\alpha n})} \, .\]
    {Since $\alpha \le 1/30$, this is at~most $O(2^{n/2})$.}
    
    Using the circuit $C_t$ and the prime~$p_t$, one can compute the value of~$Q_t(x_1, \dotsc, x_t)$ in~time {$O(t^{\gamma+4})$}: the value (modulo~$p_t$) of~each of~$O(t^{\gamma})$ gates can be~computed (from the value 
    of~two predecessor gates) in~time $O(\log^2 p_t)=O(t^4)$.
    Thus, by~replacing every oracle call with a~call to~the corresponding circuit, we~get an~algorithm with running time
    \[O\left( 2^{(1-\varepsilon)n} \cdot 2^{\alpha (\gamma + 4) n}\right)=O(2^{n(1-\varepsilon+\alpha (\gamma + 4))}) \, .\]
    To~make this at~most $O(2^{(1-\delta)n})$,
    we~need to~ensure that 
    \[\varepsilon-\alpha (\gamma+4) > \delta \, .\]
    Plugging $\varepsilon=\alpha/\beta$, turns this inequality into
    \[\alpha \left( \frac{1}{\beta} - (\gamma+4) \right) > \delta \,.\]
    To~satisfy this, it~suffices to~take $\beta < 1/(\gamma + 4)$: then, the left-hand side is~positive and one can find the value of $\delta>0$
    such that the inequality is~satisfied.
\end{proof}


Now, we~are ready to~derive Theorem~\ref{thm:qa}
from Theorem~\ref{thm:nondetarith}.

\begin{proof}[Proof~of~Theorem~\ref{thm:qa}]
    For each of~the three problems from the theorem statement,
    we~show that it~satisfies the property from Theorem~\ref{thm:nondetarith}. Each of~the four proofs follows
    the same pattern.

    We~start by~showing that for any instance~$I \in A_n$
    and any $0<\alpha<1$, one can reduce~$I$ to~$2^{(1-\alpha)n}$
    instances of~size $\alpha n$ of~not necessarily the same problem~$A'$. 
    Then, we~show that for any $0 < \gamma < 1$, 
    the problem~$A'$
    admits a~polynomial formulation
    of~complexity $O(2^{\gamma n})$. This 
    allows~us to~construct the desired algorithm as~follows.

    Let $0<\beta<1$ be~fixed. Set 
    $\gamma=\frac{\beta}{\beta+1}$ (then, $\beta=\frac{\gamma}{1-\gamma}$)
    and consider the corresponding polynomial formulation~$\mathcal{Q}$ of~$A'$.

    Now, let $0 < \varepsilon < 1/2$ be fixed. 
    Set $\alpha=\frac{\varepsilon}{1-\gamma}$ 
    (then clearly $0 < \alpha < 1$).  
    Reduce an~instance $I \in A_n$ to $2^{(1-\alpha)n}$
    instance of~size~$\alpha n$ of the problem~$A'$.
    Solve each
    of~the resulting $2^{(1-\alpha) n}$ instances
    using an~oracle call to~$\mathcal Q$. Then, the running time of~this algorithm~is
    \[O\left(2^{(1-\alpha)n} \cdot 2^{\alpha \gamma n}\right)=
    O\left(2^{(1-\alpha(1-\gamma))n}\right)=O\left(2^{(1-\varepsilon)n}\right) \, .\]
    Also, the size of~each oracle call is~at~most
    \[2^{\alpha \gamma n}=2^{\frac{\varepsilon \gamma}{1-\gamma}n}=2^{\varepsilon \beta n}\,,\]
    as~desired.
    
\begin{itemize}
    \item \textbf{\MAX{}-$k$-\SAT{}.} For this problem everything is~particularly easy, as~it~is naturally self reducible (branch on~all but $\alpha n$ variables) and it~admits a~polynomial formulation of~complexity 
    $O(2^{\gamma n})$, for any $0<\gamma<1$, as~proved in~\cite{DBLP:conf/soda/BelovaGKMS23}.


    \item {\bf Set Cover with $n^{O(1)}$ sets of size at most $O(\log{n})$).} 
    Let the problem $A'$ be \#Set Partition problem: given a collection $\mathcal{S}$ of subsets of $[n]$ (with possible repetitions) find the number of partitions $\{S_1,\dots,S_k\} \subseteq \mathcal{S}$ such that $\bigsqcup_{i=1}^k S_i = [n]$.
    We first show that for every $\alpha > 0$ Set Cover can be reduced to $2^{(1 - \alpha)n}$ instances of \#Set Partition of size $\alpha n$ in time $O^*(2^{(1 - \alpha)n})$. 

    To do that, we introduce a more general problem \#Hybrid Cover Partition with parameters $n,m,k$ ($\#HCV_{n,m,k}$): given a collection $\mathcal{S}$ of subsets of $[n]$ (with possible repetitions) find the number of collections $\{S_1,\dots,S_k\} \subseteq \mathcal{S}$ such that $\bigcup_{i=1}^k S_i = [n]$ and each element in $[m]$ is present in exactly one of the $S_i$.  
    Note that \#Set Partition is exactly $\#HCV_{n,n,k}$.

    \begin{lemma}
        Set Cover with $n^{O(1)}$ sets of size at most $O(\log{n})$ can be reduced to $\#HCV_{n,m,k}$ for any $m\leq n$ with $n^{O(1)}$ sets of size $O(\log{n})$ in polynomial time.
    \end{lemma}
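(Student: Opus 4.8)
The plan is to turn a Set Cover instance into an equivalent family of $\#HCV_{n,m,k}$ instances by ``splitting'' every set along the coordinates of $[m]$. Given the collection $\mathcal{S}$ of subsets of $[n]$, each of size $O(\log n)$, I would build a new collection $\mathcal{S}''$ as follows: for every $S\in\mathcal{S}$ and every $S'\subseteq S\cap[m]$, put the set $(S\setminus[m])\cup S'$ into $\mathcal{S}''$ with multiplicity $n$; in addition, put $n$ copies of the empty set into $\mathcal{S}''$. Since $|S|=O(\log n)$, the set $S$ has only $2^{O(\log n)}=n^{O(1)}$ subsets of $S\cap[m]$, so $\mathcal{S}''$ consists of $n^{O(1)}$ sets, each of size $O(\log n)$, and is clearly computable in polynomial time. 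The design intent is that a set $S$ participating in a cover is allowed to be ``officially responsible'' only for a chosen subset $S'$ of the $[m]$-elements it contains (those it is designated to cover uniquely), while it still covers all of its elements outside $[m]$ unconditionally; the empty sets serve only as padding to reach a prescribed number of sets.

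Next I would prove the equivalence that, for every $1\le k\le n$,
\[\#HCV_{n,m,k}(\mathcal{S}'')>0 \iff \mathcal{S}\text{ has a subcollection of at most }k\text{ sets whose union is }[n].\]
For the ($\Leftarrow$) direction, take a minimal cover $S_{i_1},\dots,S_{i_j}$ with $j\le k$, assign each element of $[m]$ to the covering set of smallest index, let $S'_\ell\subseteq S_{i_\ell}\cap[m]$ be the elements assigned to $S_{i_\ell}$, and set $T_\ell=(S_{i_\ell}\setminus[m])\cup S'_\ell\in\mathcal{S}''$. Then $\bigcup_\ell T_\ell=[n]$ (the parts outside $[m]$ already union to $[n]\setminus[m]$, and every element of $[m]$ is assigned to some $\ell$), the sets $S'_\ell$ partition $[m]$, and appending $k-j$ copies of $\emptyset$ yields a size-$k$ sub-multiset of $\mathcal{S}''$ witnessing $\#HCV_{n,m,k}(\mathcal{S}'')>0$. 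For the ($\Rightarrow$) direction, every set occurring in a witness collection is contained in some member of $\mathcal{S}$ (and $\emptyset$ is contained in all of them), so choosing one such member per non-empty set of the witness produces at most $k$ sets of $\mathcal{S}$ whose union contains $\bigcup_\ell T_\ell=[n]$. Since the right-hand side is monotone in $k$ and so is the left-hand side (pad with a copy of $\emptyset$), solving Set Cover reduces to $n$ oracle queries: return the least $k\in\{1,\dots,n\}$ with $\#HCV_{n,m,k}(\mathcal{S}'')>0$, or report that $[n]$ is uncoverable if none exists. The same $\mathcal{S}''$ works for every $m\le n$ once $m$ is fixed, so no extra work is needed to handle ``any $m\le n$''.

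The only delicate point — and the one I expect to be the main obstacle — is the distinctness of the $T_\ell$ in the ($\Leftarrow$) direction, since $\#HCV$ selects a sub-collection of $\mathcal{S}''$: two original sets that agree outside $[m]$ produce the same variant whenever both receive no assigned $[m]$-element. Giving each variant multiplicity $n$ removes the issue outright, because a collection of $j\le k\le n$ (not necessarily distinct) variants is still a legal size-$k$ sub-multiset of $\mathcal{S}''$; alternatively, one can argue that under the smallest-index assignment in a minimal cover the variants are automatically distinct, since a cover set receiving no $[m]$-element must be ``essential'' for some element outside $[m]$, and that element then cannot lie in a second cover set. I would adopt the multiplicity trick for brevity. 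The remaining details — the bounds on the number and size of the produced sets, the polynomial running time of constructing $\mathcal{S}''$, and the bookkeeping of the $n$ queries — are routine.
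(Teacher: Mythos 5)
Your proof is correct and follows essentially the same route as the paper: you build the same collection — for each $S\in\mathcal{S}$ and each $S'\subseteq S\cap[m]$ you include $(S\setminus[m])\cup S'$, which is exactly the paper's $S\setminus T$ with $T=(S\cap[m])\setminus S'$ — and your forward direction uses the same greedy assignment of $[m]$-elements to the earliest covering set. The $n$-fold multiplicities and $n$ empty-set copies you add are a careful (and not unreasonable) hedge against the multiset-vs-set ambiguity in the $\#HCV$ definition, but they are not needed under the paper's reading, where $\mathcal{S}'$ is a multiset indexed by distinct $(i,T)$ pairs and the witness is a sub-multiset of size exactly $k$.
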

    \begin{proof}
        Let $\mathcal{S} = \{S_i\}_{i \in [n^{O(1)}]}$ be an instance of Set Cover.     
        We replace every set $S_i$ with $2^{|S_i \cap [m]|} = n^{O(1)}$ new sets: 
        $\{S_i^{T} := S_{i} \setminus T \colon T \subseteq S_i \cap [m]\}$, and obtain a collection $\mathcal{S}'$ which we consider as an instance of $\#HCV_{n,m,k}$. We show that $\mathcal{S}$ is a yes-instance of Set Cover iff $\#HCV_{n,m,k}(\mathcal{S}') > 0$.
    
        Let $S_{i_1}, \dots, S_{i_k}$ be a solution for $\mathcal{S}$, then $S_{i_1}, S_{i_2}^{[m] \cap S_{i_1}}, \dots, S_{i_k}^{[m] \cap (S_{i_1} \cup \dots \cup S_{i_{k - 1}})}$ is one of the solutions for $\mathcal{S}'$. In the backwards direction, if $S_{i_1}^{T_1}, \dots, S_{i_k}^{T_k}$ is a solution for $\mathcal{S}'$, then $S_{i_1}, \dots, S_{i_k}$ is a solution for $\mathcal{S}$.
    \end{proof}

    \begin{lemma}
        $\#HCV_{n,m,k}$ with $n^{O(1)}$ sets of size $O(\log(n))$ can be reduced to $2^{n-m}$ instances of \#Set Partition with the universe of size $m$ and $n^{O(1)}$ sets of size $O(\log(n))$ in time $O^*(2^{n - m})$. 
    \end{lemma}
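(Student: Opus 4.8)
The plan is to strip the requirement that the chosen sets cover every element of $[n]\setminus[m]$ by inclusion--exclusion, paying a factor of $2^{n-m}$, so that each resulting term becomes a genuine \#Set Partition instance on the universe $[m]$. Write the input as an indexed family $\mathcal{S}=\{S_1,\dots,S_N\}$ with $N=n^{O(1)}$ and $|S_i|=O(\log n)$. A collection counted by $\#HCV_{n,m,k}(\mathcal{S})$ is a $k$-subset $\{S_{i_1},\dots,S_{i_k}\}$ such that the restrictions $S_{i_j}\cap[m]$ form a partition of $[m]$ and, in addition, every element of $[n]\setminus[m]$ lies in some $S_{i_j}$; the first condition already encodes both disjointness on $[m]$ and coverage of $[m]$, so only the second condition involves the $n-m$ elements outside $[m]$.

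For $T\subseteq[n]\setminus[m]$ I would set $\mathcal{S}_T=\{S_i : S_i\cap T=\emptyset\}$ and let $\mathcal{S}_T|_{[m]}$ be the indexed family obtained by replacing each $S_i\in\mathcal{S}_T$ with $S_i\cap[m]$. The first step is to check that the number of $k$-subsets of $\mathcal{S}$ that partition $[m]$ and all of whose members avoid $T$ equals, via the index-preserving map $S_i\mapsto S_i\cap[m]$, the number of $k$-subsets of $\mathcal{S}_T|_{[m]}$ partitioning $[m]$, i.e.\ exactly $\#HCV_{m,m,k}(\mathcal{S}_T|_{[m]})$; by the identification of \#Set Partition with $\#HCV_{n,n,k}$ this is a \#Set Partition instance over the universe $[m]$ with parameter $k$. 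Its family has at most $N=n^{O(1)}$ sets, each of size at most $\max_i|S_i|=O(\log n)$ (empty projections $S_i\cap[m]=\emptyset$ are harmless, since the empty set may legitimately occur as a part), so the produced instances have the shape required by the lemma.

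Next I would apply inclusion--exclusion over the events $B_e$, for $e\in[n]\setminus[m]$, that $e$ is left uncovered: for $T\subseteq[n]\setminus[m]$ the intersection $\bigcap_{e\in T}B_e$ is precisely the event that all chosen sets avoid $T$, which yields
\[\#HCV_{n,m,k}(\mathcal{S})=\sum_{T\subseteq[n]\setminus[m]}(-1)^{|T|}\,\#HCV_{m,m,k}\bigl(\mathcal{S}_T|_{[m]}\bigr)\,.\]
So the reduction outputs the $2^{n-m}$ instances $\mathcal{S}_T|_{[m]}$, each built in $n^{O(1)}$ time, and from the oracle answers returns the signed sum above; since every answer fits in $n^{O(1)}$ bits, this bookkeeping takes $O^*(2^{n-m})$ total.

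The step needing the most care is the bijection-with-multiplicities: one must keep $\mathcal{S}$, and hence $\mathcal{S}_T|_{[m]}$, as an \emph{indexed} family rather than a set, so that two indices with equal $[m]$-projections are still counted separately, matching exactly how $\#HCV_{n,m,k}$ enumerates $k$-subsets of $\mathcal{S}$. Once that is pinned down, verifying the inclusion--exclusion identity and the size and running-time bounds is routine.
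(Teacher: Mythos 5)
Your proof is correct and is essentially the same argument as the paper's: the paper applies the branching identity $\#HCV_{n,m,k}(\mathcal{S}) = \#HCV_{n-1,m,k}(\mathcal{S}_1) - \#HCV_{n-1,m,k}(\mathcal{S}_2)$ once per element of $[n]\setminus[m]$, and your signed sum over $T\subseteq[n]\setminus[m]$ is exactly the unrolled form of that recursion, with the leaf corresponding to $T$ being $\mathcal{S}_T|_{[m]}$ with sign $(-1)^{|T|}$. Your write-up also makes explicit a point the paper leaves implicit (tracking the family by index so that coinciding projections onto $[m]$ are counted with multiplicity), which is a sound and useful clarification.
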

    \begin{proof}
        Let $\mathcal{S}$ be an instance of $\#HCV_{n,m,k}$ and let us assume that $n > m$.  
        Then $\#HCV_{n,m,k}(\mathcal{S}) = \#HCV_{n-1,m,k}(\mathcal{S}_1) - \#HCV_{n-1,m,k}(\mathcal{S}_2)$, where $\mathcal{S}_1 = \{S \setminus \{n\}| S \in \mathcal{S}\}$,  $\mathcal{S}_2 = \{S | S \in \mathcal{S}, n \not\in S\}$.
        Using this equality we can reduce calculation of $\#HCV_{n,m,k}$ to two calculations of $\#HCV_{n - 1,m,k}$. After $n - m$ branchings we will end up with $2^{n - m}$ instances of $\#HCV_{m,m,k}$. Note that each collection of sets in those instances has at most $|\mathcal{S}| = n^{O(1)}$ sets of size at most $O(\log(n))$, and that $\#HCV_{m,m,k}$ is \#Set Partition with the universe of size $m$.
    \end{proof}

    Combining the two lemmas and setting $m = \alpha n$, we get that Set Cover can be reduced to $2^{(1 - \alpha)n}$ instances of \#Set Partition of size $\alpha n$ (note that the number of sets $n^{O(1)} = (\alpha n)^{O(1)}$, and the sizes of sets $O(\log n) = O(\log (\alpha n))$) in time $O^*(2^{(1 - \alpha)n})$.

    We now present a polynomial formulation for \#Set Partition of complexity $2^{\gamma n}$, for any $0 < \gamma < 1$.
    
    \begin{lemma}
        For every $0 < \gamma < 1$, \#Set Partition with the universe of size $n$ and $n^{O(1)}$ sets of size $O(\log(n))$ admits a polynomial formulation of complexity $2^{\gamma n}$.
    \end{lemma}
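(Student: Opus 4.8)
The plan is to adapt the construction of the $\alpha^n$-complexity polynomial formulation of Set Cover from~\cite{DBLP:conf/soda/BelovaGKMS23}, combining it with a size-tracking generating variable that turns ``cover'' into ``partition'' and with an inclusion--exclusion that turns the resulting decision polynomial into a \emph{counting} polynomial. The starting point is a counting identity. Put $w:=\max_{S\in\mathcal S}|S|=O(\log n)$ and, for $W\subseteq[n]$, set $b_W(y):=\sum_{S\in\mathcal S,\,S\subseteq W}y^{|S|}$, a univariate polynomial of degree at most~$w$ (empty sets are removed from $\mathcal S$ in advance, so the series below makes sense). Then the number $N(\mathcal S)$ of ordered tuples $(S_1,\dots,S_k)\in\mathcal S^k$, over all $k\ge 0$, that partition $[n]$ equals
\[
N(\mathcal S)=\sum_{W\subseteq[n]}(-1)^{\,n-|W|}\,[y^n]\frac{1}{1-b_W(y)}\,:
\]
expanding $(1-b_W)^{-1}=\sum_k b_W(y)^k$, the coefficient $[y^n]b_W(y)^k$ counts ordered $k$-tuples of sets contained in~$W$ whose sizes sum to~$n$, and the signed sum over~$W$ keeps exactly those whose union is $[n]$, which together with the size condition forces disjointness. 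Writing $c_W:=[y^n](1-b_W(y))^{-1}$, note that $c_W$ depends only on the coefficient vector of $b_W$, an element of $\{0,1,\dots,|\mathcal S|\}^w$, and is computable from it in polynomial time via the order-$w$ linear recurrence for $[y^m](1-b_W)^{-1}$.

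Next I would meet the complexity bound. Fix $t=t(\gamma):=\lceil 2/\gamma\rceil$ and split $[n]$ into $t$ consecutive blocks $B_1,\dots,B_t$, each of size at most $n/t\le\gamma n/2$, so that each block has $O(2^{\gamma n/2})$ subsets; write $W=W_1\sqcup\dots\sqcup W_t$ with $W_\ell\subseteq B_\ell$, so that $(-1)^{n-|W|}=\prod_\ell(-1)^{|B_\ell|-|W_\ell|}$. Using $[S_j\subseteq W]=\prod_\ell[S_j\cap B_\ell\subseteq W_\ell]$ inside $c_W=\sum_k\sum_{(S_1,\dots,S_k):\,S_j\subseteq W,\ \sum_j|S_j|=n}1$, the whole quantity $N(\mathcal S)$ rewrites as a sum --- over the number of parts and over the tuple of block-restrictions of the parts --- of a \emph{product of $t$ per-block factors}, and this is the expression the polynomial will enumerate. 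Because every set has size $O(\log n)$ and $|\mathcal S|=n^{O(1)}$, the list of block signatures $\{(S\cap B_1,\dots,S\cap B_t):S\in\mathcal S\}$ has only $n^{O(1)}$ entries, so the ordered part-tuples can be aggregated into multisets of block signatures; arranged carefully, this keeps the degree at the constant $\Delta(\gamma)=O(1/\gamma)$, the number of variables at $t\cdot O(2^{\gamma n/2})\cdot n^{O(1)}\le 2^{\gamma n}$ (for $n$ large), and the number of monomials at $2^{O(n)}\le s^{\Delta(\gamma)}$.

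It then remains to check explicitness and the requirements on $\phi_n$. The variables are indexed by a block~$\ell$, a subset $W_\ell\subseteq B_\ell$, and a bounded amount of signature-and-size bookkeeping; each holds a small count that $\phi_n$ produces by iterating over the subsets of a single block, for total time $2^{\gamma n}$. All values are at most $|\mathcal S|^{\,n}=2^{O(n\log n)}$, hence below $2^{s^{o(1)}}$; all coefficients are $\pm$ simple multinomials, hence $s^{O(1)}$-bounded; the degree is constant; so the family is $\Delta(\gamma)$-explicit. Substituting $\phi_n(\mathcal S)$ reproduces $N(\mathcal S)$, which is nonzero exactly when $[n]$ is partitionable and equals the desired count after the standard corrections for ordered-versus-unordered counting and for the removed empty sets.

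The step I expect to be the main obstacle is the factorization claimed above: making the expansion of $c_W$ break genuinely into a product over blocks, and making the aggregation of ordered part-tuples into signature-multisets compatible with that product, all while keeping the degree, the number of variables, and the number of monomials simultaneously within budget. The factorization is available only because the ``interface'' between consecutive blocks --- the restrictions of the parts that straddle a block boundary --- is describable by polynomially much data, which is precisely where the hypotheses ``sets of size $O(\log n)$'' and ``$n^{O(1)}$ sets'' are used; with unrestricted sets the signature list and the bookkeeping would be exponentially large, and no polynomial formulation of complexity $2^{\gamma n}$ could be obtained by such means.
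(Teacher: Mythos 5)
Your construction takes a genuinely different route from the paper, and I believe it does not go through as stated. The paper does not use a global inclusion--exclusion over $W\subseteq[n]$ with fixed blocks. Instead it chops a \emph{data-dependent} canonical chunking: for a partition whose parts are sorted by minimum element, greedy prefix sums determine a decomposition $A_1\sqcup B_1\sqcup\dots\sqcup A_q\sqcup B_q=[n]$ ($q\le 2\theta$) where each $A_j$ has size about $n/\theta$ and, crucially, exactly \emph{one} part, namely $B_j$, straddles the boundary out of chunk~$j$. The monomial is then $x_{k-\sum k_j}\cdot\prod_j y_{B_j}\cdot z_{A_j,B_j,k_j-1}$ where $z_{A,B,k'}$ counts sub-partitions of $A$ into $k'$ parts (with a min-element side condition), computable by a $2^{|A|}=2^{n/\theta}$-time dynamic program. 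The sorted ordering is what makes the straddling set unique per chunk boundary and the degree constant.

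The gap in your argument is exactly the factorization you flag as the main obstacle, and I don't think the two hypotheses rescue it. For a fixed ordered tuple $(S_1,\dots,S_k)$, the inclusion--exclusion factor $\sum_W(-1)^{n-|W|}\prod_j[S_j\subseteq W]$ does split as $\prod_\ell\bigl(\sum_{W_\ell}(-1)^{|B_\ell|-|W_\ell|}\prod_j[S_j\cap B_\ell\subseteq W_\ell]\bigr)$, but the outer sum over $k$ and over tuples does not commute into a product of $t$ per-block factors, because the choice of parts is a single global object, the size constraint $\sum_j|S_j|=n$ is global, and a part may intersect several blocks. In particular, your claim that ``the restrictions of the parts that straddle a block boundary are describable by polynomially much data'' is false for a fixed block boundary: since blocks have size $\Theta(n)$ and parts have size $O(\log n)$, a single partition can have $\Theta(n/\log n)$ parts crossing one boundary, and the set of possible crossing patterns is exponentially large. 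The hypotheses ``$n^{O(1)}$ sets of size $O(\log n)$'' bound the number of distinct signatures, but not the number or the joint pattern of straddlers in one partition. Aggregating ordered tuples into multisets of $n^{O(1)}$ signature types also does not yield constant degree: a multiplicity vector over $n^{O(1)}$ types is not a product of $O(1/\gamma)$ variables. So the step ``arranged carefully, this keeps the degree constant'' is precisely where a new idea is needed, and the paper's new idea is the sorted-greedy chunking with a single boundary set per cut; without it (or an equivalent device) the inclusion--exclusion route does not by itself produce a constant-degree polynomial of the right complexity.
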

    \begin{proof}
        Let us fix the constant $0 < \gamma < 1$, and let $\theta$ be a large enough constant.
        Let $S_{i_1}, \dots, S_{i_k}$ be a partition of $[n]$, and let $S_{i_j}$ be sorted by the value of its minimum element. For now we consider only non-empty $S_{i_j}$. We can assume that every $|S_{i_j}| \le \frac{n}{2 \theta}$.
    
        Let $\ell_0 = 0$.
        Let $\ell_1$ be the first index such that $|\bigsqcup\limits_{j = 1}^{\ell_1} S_{i_j}| > n / \theta$.
        Let $A_1 := \bigsqcup\limits_{j = 1}^{\ell_1 - 1} S_{i_j}$ and $B_1 = S_{i_{\ell_1}}$.
        Let $\ell_2$ be the first index such that $|\bigsqcup\limits_{j = \ell_1 + 1}^{\ell_2} S_{i_j}| > n / \theta$.
        Let $A_2 := \bigsqcup\limits_{j = \ell_1 + 1}^{\ell_2 - 1} S_{i_j}$ and $B_2 = S_{i_{\ell_2}}$.
        Continuing this process we obtain sets $A_1, B_1, \dots, A_q, B_q$ for some $q \le 2\theta$.
        Let $k_j = \ell_j - \ell_{j - 1}$ for every $j \in [q]$.

        Note, that for every solution of \#Set Partition its partition $A_1 \sqcup B_1 \sqcup \dots \sqcup A_q \sqcup B_q = [n]$ is unique, so to find the number of solutions of \#Set Cover we can find number of solutions for every possible partition and sum them.
        
        For every $q \le 2\theta$ and every partition $A_1 \sqcup B_1 \sqcup \dots \sqcup A_q \sqcup B_q = [n]$ such that $\forall j$ $\frac{n}{2 \theta} \le |A_j| \le \frac{n}{\theta}$, $|B_j| \le \frac{n}{2 \theta}$, $|A_j \sqcup B_j| > \frac{n}{\theta}$ and all those sets are sorted by the value of their minimum elements,
        we construct a polynomial that calculates the number of solutions corresponding to that partition.
    
        Let us construct such a polynomial.
        We go through all possible $k_1, \dots, k_q$ such that $\sum k_j \le k$, and we add monomial $x_{k - \sum k_j} \cdot \prod\limits_{j \in [q]} y_{B_j} \cdot z_{A_j, B_j, k_j - 1}$, where $x_{k'}$ 
        is the number of ways to choose $k'$ empty sets, $y_{B}$ is the number of sets in 
            $\mathcal{S}$ that are equal to $B$, and $z_{A, B, k'}$ is the number of partitions of $A$ into $k'$ sets from $\mathcal{S}$ such that the minimum element of each used set is less than the minimum element in $B$.   
    
        \begin{lemma}
            Each variable $z_{A, B, k'}$ can be calculated in time $O^*(2^{n / \theta})$.
        \end{lemma}
        \begin{proof}
            Let $b$ be the minimum element of $B$. For every $X \subseteq A$ and $0 \le k \le k'$ we compute a dynamic programming $dp[X][k]$ that equals to the number of partitions of $X$ into $k$ sets from $\mathcal{S}$ such that the minimum element of each used set is less than $b$. Then, $z_{A, B, k'} = dp[A][k']$. 
            
            Let $dp[\emptyset][0] = 1$, let $dp[X][k] = 0$ if the minimum element $x$ of $X$ is greater 
            than $b$, and let $dp[X][k] = \sum\limits_{\substack{S \in \mathcal{S} \colon x \in S, S \subseteq X}} dp[X \setminus S][k - 1]$. This dynamic programming can be computed in time $O^*(2^{|A|}) = O^*(2^{n / \theta})$.
        \end{proof}
    
        There are $n^{O(1)}$ variables $x_{k'}$, each of which can be calculated in polynomial time. There are $O(\binom{n}{n / 2\theta})$ variables $y_{B}$ each of which can be also calculated in polynomial time. And there are $\binom{n}{n / \theta} \binom{n}{n / 2\theta} n^{O(1)}$ variables $z_{A, B, k'}$ each of which can be calculated in time $O^*(2^{n / \theta})$. So we can compute all the variables in time $O^*(2^{\gamma n})$.
    
        The final polynomial has degree at most $1 + 4 \theta$ and can be constructed in time $O^*(\binom{n}{n / \theta}^{2 \theta}) = O^*(2^{\theta n})$.
    \end{proof}

    \item \textbf{Binary Permanent.}

It's known that computation of Binary Permanent is closely related to counting the number of bipartite matchings. Consider a bipartite graph $G(L, R, E)$ with $(u, v) \in E \iff a_{u, v} = 1$, then the permanent of $A$ equals to number of perfect matchings in $G$.

Following the proof strategy of the previous theorems we will show how to reduce Binary Permanent to a number of smaller subproblems which admit polynomial formulations.

For bipartite graph $G(L, R, E)$, let $\Phi(G)$ be the set of mappings from $L$ to $R$ such that for every $\phi \in \Phi$ and $u \in L$, $(u, \phi(u)) \in E$. In particular, the number of bijective mappings in $\Phi$ is a number of perfect matchings in $G$ which is a permament of matrix $A$.

For a mapping $\phi$, we call $v \in R$ \emph{covered} if there is at least one $u \in L$ such that $\phi(u) = v$, and \emph{not covered}, otherwise.

To split the problem into smaller ones we introduce function $F(S^{=1}, S^{=0}, S^{\geq 1})$. For pairwise disjoint subsets $S^{=1}, S^{=0}$ and $S^{\geq 1}$ of $R$, $F(S^{=1}, S^{=0}, S^{\geq 1})$ is a set of mappings $\phi \in \Phi$ such that:
\begin{itemize}
    \item each $v \in S^{=1}$ is covered by exactly one $u \in L$ (i.e. there is exactly one $u \in L$ such that $\phi(u) = v$);
    \item each $v \in S^{=0}$ is not covered;
    \item each $v \in S^{\geq 1}$ is covered;
    \item each $v \in R \setminus S^{=1} \setminus S^{=0} \setminus S^{\geq 1}$ is either covered or not.
\end{itemize}

To reduce evaluation of the permanent to smaller subproblems we will prove the following lemma.

\begin{lemma}
    Let $|L| = |R| = n$, then for every $\alpha > 0$ evaluation of the number of perfect matchings in $G$ can be reduced to $2^{(1 - \alpha)n}$ evaluations of $|F(S^{=1}, S^{=0}_i, \emptyset)|$ with $|S^{=1}| = \alpha n$. Note, that for all $|F(S^{=1}, S^{=0}_i, \emptyset)|$, $S^{=1}$ is the same.
\end{lemma}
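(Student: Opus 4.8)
The plan is to reduce the count of perfect matchings of $G$ to a signed sum of $2^{(1-\alpha)n}$ values $|F(S^{=1},S^{=0}_i,\emptyset)|$, using one cardinality observation followed by inclusion--exclusion. As a warm-up I would record that, since $|L|=|R|=n$, a mapping $\phi\in\Phi$ that covers every vertex of $R$ exactly once is exactly a bijection $L\to R$; hence $\operatorname{perm}(A)=|F(R,\emptyset,\emptyset)|$.

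The key step is to show that the permanent also equals $|F(S^{=1},\emptyset,T)|$ for \emph{any} $S^{=1}\subseteq R$ with $|S^{=1}|=\alpha n$, where $T=R\setminus S^{=1}$ (so $|T|=(1-\alpha)n$). Indeed, $\phi$ is a function on the $n$-element set $L$, so $\sum_{v\in R}|\phi^{-1}(v)|=n$. If $\phi\in F(S^{=1},\emptyset,T)$, then the $\alpha n$ vertices of $S^{=1}$ contribute exactly $1$ each to this sum and the $(1-\alpha)n$ vertices of $T$ contribute at least $1$ each; since the total is $n$, every vertex of $T$ is in fact covered exactly once, so $\phi$ is injective and therefore bijective. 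The converse inclusion is immediate, so $\operatorname{perm}(A)=|F(S^{=1},\emptyset,T)|$.

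Next I would trade the constraint ``$S^{\ge 1}=T$'' for a family of ``$S^{=0}=W$'' constraints by inclusion--exclusion. Let $\Psi\subseteq\Phi$ be the set of $\phi$ that cover every vertex of $S^{=1}$ exactly once, and for $\phi\in\Psi$ let $N(\phi)\subseteq T$ be its set of uncovered $T$-vertices. Then $|F(S^{=1},\emptyset,T)|=|\{\phi\in\Psi:N(\phi)=\emptyset\}|$, and inclusion--exclusion over the events ``$v$ is uncovered'', for $v\in T$, gives
\[|\{\phi\in\Psi:N(\phi)=\emptyset\}|=\sum_{W\subseteq T}(-1)^{|W|}\,|\{\phi\in\Psi:W\subseteq N(\phi)\}|=\sum_{W\subseteq T}(-1)^{|W|}\,|F(S^{=1},W,\emptyset)|,\]
because requiring $S^{=1}$ covered exactly once, every vertex of $W$ uncovered, and nothing about $T\setminus W$ is precisely the definition of $F(S^{=1},W,\emptyset)$. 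Enumerating the subsets $W$ of $T$ as $S^{=0}_1,\dots,S^{=0}_{2^{(1-\alpha)n}}$ yields the claimed $2^{(1-\alpha)n}$ evaluations, all with the same $S^{=1}$ of size $\alpha n$, and assembling the signed sum costs $O^*(2^{(1-\alpha)n})$.

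I do not expect a genuine obstacle here; the only subtle point is the cardinality argument that promotes ``covered at least once'' to ``covered exactly once'', which relies on the identity $|S^{=1}|+|T|=n=|L|$. Everything else is a textbook inclusion--exclusion together with a count of the $2^{(1-\alpha)n}$ terms, and this lemma is just the first of the three steps (reduction to smaller instances; polynomial formulation of $|F(S^{=1},S^{=0},\emptyset)|$ of complexity $O(2^{\gamma n})$; invocation of Theorem~\ref{thm:nondetarith}) that mirror the treatment of \MAX-$k$-\SAT{} and Set Cover.
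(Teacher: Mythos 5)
Your proof is correct and uses essentially the same approach as the paper: the same cardinality observation (that $|S^{=1}|+|T|=n=|L|$ upgrades "covered at least once" to "covered exactly once", so $|F(S^{=1},\emptyset,R\setminus S^{=1})|$ is the permanent), followed by inclusion--exclusion over $T$; the paper phrases the inclusion--exclusion as a one-vertex-at-a-time recurrence $|F(S^{=1},S^{=0},S^{\geq 1})|=|F(S^{=1},S^{=0},S^{\geq 1}\setminus\{v\})|-|F(S^{=1},S^{=0}\cup\{v\},S^{\geq 1}\setminus\{v\})|$ applied $(1-\alpha)n$ times, whose full unrolling is exactly your signed sum $\sum_{W\subseteq T}(-1)^{|W|}|F(S^{=1},W,\emptyset)|$.
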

\begin{proof}
    Let us see how to reduce evaluation of $|F(S^{=1}, S^{=0}, S^{\geq 1})|$ to smaller subproblems. Consider any $v \in S^{\geq 1}$. We prove the following identity.  $F(S^{=1}, S^{=0}, S^{\geq 1}) = F(S^{=1}, S^{=0}, S^{\geq 1} \setminus \{v\}) \setminus F(S^{=1}, S^{=0} \cup \{v\}, S^{\geq 1} \setminus \{v\})$ or $|F(S^{=1}, S^{=0}, S^{\geq 1})| = |F(S^{=1}, S^{=0}, S^{\geq 1} \setminus v)| - |F(S^{=1}, S^{=0} \cup \{v\}, S^{\geq 1} \setminus \{v\})|$.

    $F(S^{=1}, S^{=0}, S^{\geq 1})$ is a number of mappings from $\Phi$ such that all $u \in S^{=1}$ are covered exactly once, all $u \in S^{=0}$ are not covered, all $u \in S^{\geq 1}$ are covered at least once, and there are no restrictions for $u \in R \setminus S^{=1} \setminus S^{=0} \setminus S^{\geq 1}$. $F(S^{=1}, S^{=0}, S^{\geq 1} \setminus \{v\})$ is similar but without any restrictions for $v$, so the difference between $F(S^{=1}, S^{=0}, S^{\geq 1} \setminus \{v\})$ and $F(S^{=1}, S^{=0}, S^{\geq 1})$ are the mappings from $\Phi$ satisfying common restrictions for $u \neq v$ and having $v$ uncovered which is the set $F(S^{=1}, S^{=0} \cup \{v\}, S^{\geq 1} \setminus \{v\})$.

    Let $S^{=1}$ be the arbitrary subset of $R$ of size $2^{\alpha n}$ and consider $F(S^{=1}, \emptyset, R \setminus S^{=1})$. This is the set of mappings in $\Phi$ which covers vertices from $S^{=1}$ exactly once and vertices from $R \setminus S^{=1}$ at least once. This is exactly the number of bijective mappings in $\Phi$ since these mappings are surjective (since every $v \in R$ belongs to either $S^{=1}$ or $S^{\geq 1}$) and $|L| = |R|$. So, the $|F(S^{=1}, \emptyset, R \setminus S^{=1})|$ is the number of perfect matchings in $G$. Now we can apply identity $|F(S^{=1}, S^{=0}, S^{\geq 1})| = |F(S^{=1}, S^{=0}, S^{\geq 1} \setminus v)| - |F(S^{=1}, S^{=0} \cup \{v\}, S^{\geq 1} \setminus \{v\})|$ $(1 - \alpha)n$ times. Every time the number of summands doubles, size of $S^{\geq 1}$ reduces by $1$ and $S^{=1}$ is the same in all summands, so after $(1 - \alpha)n$ steps we will end up with $2^{(1 - \alpha)n}$ summands with empty $S^{\geq 1}$ and same $S^{=1}$.
\end{proof}

Following the lemma, we can use evaluation of $|F(S^{=1}, S^{=0}_i, \emptyset)|$ as $A'$ problem, reducing permanent evaluation to $2^{(1 - \alpha) n}$ instances of $F$ in time $O^*(2^{(1 - \alpha) n})$. Now we show, that $F$ admits efficient polynomial formulation.

\begin{lemma}
    For every $0 < \gamma < 1$, there is a polynomial formulation for $|F(S^{=1}, S^{=0, i}, \emptyset)|$ with $|S^{=1}| = \alpha n$ of complexity $O(2^{\gamma n})$.
\end{lemma}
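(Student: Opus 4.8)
The plan is to follow the blueprint of the \#Set Partition formulation above (and of the $\alpha^n$-formulations in~\cite{DBLP:conf/soda/BelovaGKMS23}): rewrite the count $|F(S^{=1},S^{=0},\emptyset)|$ as a sum of products that \emph{factor over individual vertices}, cut both vertex classes into a constant number $\theta=\theta(\gamma)$ of blocks, and take the per-block factors as the variables. Put $m=|S^{=1}|$. First I would strip off the unconstrained part: deleting the forbidden images $S^{=0}$ (this only restricts the allowed images of $\phi$), a mapping $\phi\in F$ splits $L$ into $L'=\phi^{-1}(S^{=1})$, matched bijectively onto $S^{=1}$, and $L\setminus L'$, whose vertices map freely to a neighbour outside $S^{=1}\cup S^{=0}$; hence
\[|F(S^{=1},S^{=0},\emptyset)|=\sum_{L'\subseteq L,\ |L'|=m}\operatorname{pm}(L',S^{=1})\prod_{u\in L\setminus L'}d(u),\qquad d(u):=\bigl|N(u)\setminus S^{=1}\setminus S^{=0}\bigr|,\]
where $d(u)$ is computable in polynomial time and $\operatorname{pm}(L',S^{=1})$ is the number of edge-respecting bijections $L'\to S^{=1}$. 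Applying Ryser's formula on the $S^{=1}$-side and exchanging sums gives
\[|F|=\sum_{W\subseteq S^{=1}}(-1)^{m-|W|}\Bigl[z^{m}\Bigr]\prod_{u\in L}\bigl(d(u)+z\,|N(u)\cap W|\bigr),\]
so every vertex $u\in L$ now contributes an independent linear factor, and the only couplings left are the choice of $W$, the cardinality condition $|L'|=m$, and the disjointness hidden in the $[z^m]$ coefficient.

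Next, fix a large constant $\theta$, cut $S^{=1}$ into blocks $B_1,\dots,B_\theta$ and $L$ into blocks $D_1,\dots,D_\theta$, write $W=W_1\sqcup\dots\sqcup W_\theta$ with $W_j\subseteq B_j$, expand $[z^m]\prod_u(\cdots)=\sum_{k_1+\dots+k_\theta=m}\prod_p[z^{k_p}](\cdots)$ over the $L$-blocks, and expand each $|N(u)\cap W|=\sum_j|N(u)\cap W_j|$. This yields
\[\Bigl[z^{k_p}\Bigr]\prod_{u\in D_p}\Bigl(d(u)+z\sum_{j=1}^{\theta}|N(u)\cap W_j|\Bigr)=\sum_{\substack{S\subseteq D_p\\ |S|=k_p}}\ \prod_{u\in D_p\setminus S}d(u)\ \sum_{h\colon S\to[\theta]}\ \prod_{j=1}^{\theta}\ \prod_{u\in h^{-1}(j)}|N(u)\cap W_j|,\]
which suggests the variables $\eta_{p,S}:=\prod_{u\in D_p\setminus S}d(u)$ (for $S\subseteq D_p$) and $\zeta_{j,U,W}:=\prod_{u\in U}|N(u)\cap W|$ (for $U\subseteq D_p$, $W\subseteq B_j$), each computable in polynomial time and of absolute value $2^{\mathrm{poly}(n)}$ (far below the permitted $2^{s^{o(1)}}$). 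The formulation is
\[P=\sum_{W_1,\dots,W_\theta}(-1)^{m-\sum_j|W_j|}\sum_{\substack{S^{(1)}\subseteq D_1,\dots,S^{(\theta)}\subseteq D_\theta\\ \sum_p|S^{(p)}|=m}}\sum_{h^{(1)},\dots,h^{(\theta)}}\ \prod_{p=1}^{\theta}\Bigl(\eta_{p,S^{(p)}}\prod_{j=1}^{\theta}\zeta_{j,(h^{(p)})^{-1}(j),W_j}\Bigr),\]
a fixed polynomial (the instance enters only through the variable values); each monomial is a product of $\theta$ of the $\eta$'s and $\theta^2$ of the $\zeta$'s, so $\deg P=\theta+\theta^2=O_\gamma(1)$, and its coefficients are $\pm1$ (monomials in which some block $B_j$ receives no vertex cancel, since $\sum_{W_j\subseteq B_j}(-1)^{|W_j|}=0$, and on the remaining monomials the triple $(\vec W,\vec S,\vec h)$ is read off the monomial). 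Since each variable family has size $\theta^{O(1)}2^{O(n/\theta)}$, choosing $\theta$ large enough in terms of $\gamma$ makes the number of variables $\le 2^{\gamma n}$; the remaining bookkeeping (monomial count, construction time, the function $s(n)$, the encoding $\phi_n$) is exactly as in the \#Set Partition lemma.

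The step I expect to be the genuine obstacle is the one the Ryser expansion is designed to sidestep: the ``each vertex of $S^{=1}$ is covered exactly once'' constraint is global and does not decompose along a bounded number of blocks, so handled naively it drives either the degree (to $\Theta(m)$) or the number of variables (to $2^{\Theta(m)}$) out of range. The resolution is to pay for it with a summation over the $2^m$ subsets $W$ performed \emph{inside} the polynomial — legitimate because it raises the degree only to $O(1/\gamma)$ — and to absorb the global cardinality condition $|L'|=m$ into the per-block condition $\sum_p k_p=m$ through the coefficient-extraction variable, mirroring the role of the ``number of empty sets'' variable in the \#Set Partition proof. I would also check that all three layers for Binary Permanent compose with the right exponents, i.e.\ that an $n\times n$ binary permanent turns, after the two reduction lemmas and this formulation, into $2^{(1-\alpha)n}$ oracle calls of size at most $2^{\beta\varepsilon n}$ in total time $2^{(1-\varepsilon)n}$, so that the hypothesis of Theorem~\ref{thm:nondetarith} is satisfied.
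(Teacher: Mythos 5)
Your proposal is correct, but it takes a genuinely different route from the paper's proof. The paper does not go through Ryser's formula at all: it introduces, for each $\phi\in F(S^{=1},S^{=0}_i,\emptyset)$, a \emph{trace} consisting of break points $p_1<\dots<p_{\theta-1}$ that cut $L$ greedily into intervals $L_1,\dots,L_\theta$ (each carrying roughly $\alpha n/\theta$ preimages of $S^{=1}$) together with the block index $b(v)$ of the preimage of every $v\in S^{=1}$; the count then factors over traces as $\prod_k |G_{L_k}(W(k),\dots)|$, where $W(k)$ is the set of $v$ with $b(v)=k$, and the single family of variables $|G_K(\cdot,\cdot,f)|$ (number of partial edge-respecting injections into $S^{=1}$, with a boundary flag $f$) is evaluated by subset-sum dynamic programming in time $O^*(2^{|S^{=1}|/\theta})$. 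Your approach instead fixes a \emph{static} partition of both $L$ and $S^{=1}$ into $\theta$ intervals, uses Ryser's inclusion--exclusion to turn the hard ``covered exactly once'' bijection count into a sum of per-vertex linear factors, and splits the one remaining global cardinality constraint via a formal coefficient extraction $[z^m]$; your variables $\eta_{p,S}$ and $\zeta_{j,U,W}$ are then computable in \emph{polynomial} time, with no exponential-time DP at all. The two proofs buy different things: the paper's is structurally parallel to its \#Set Partition formulation and keeps the degree at exactly $\theta$ with coefficients $+1$; yours is shorter and more modular (Ryser does the heavy lifting), at the cost of degree $\theta+\theta^2$ and coefficients in $\{-1,0,1\}$. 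Two small points worth tightening in a written-up version: (i) you should drop the trivial factors $\zeta_{j,\emptyset,W_j}$ from the monomials \emph{before} invoking the cancellation $\sum_{W_j\subseteq B_j}(-1)^{|W_j|}=0$ — if kept as formal variables these are distinct and nothing cancels, though the coefficient bound $\pm1$ still holds since the tuple is recoverable; (ii) you should note explicitly that a nonempty $U$ determines its block $D_p$ (the $D_p$ are disjoint), which is what makes the tuple $(\vec W,\vec S,\vec h)$ recoverable from a non-cancelling monomial and keeps the coefficients in $\{-1,0,1\}$.
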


\begin{proof}
    Choose $\theta$ to be defined later.
    Consider some mapping $\phi \in F(S^{=1}, S^{=0}_i, \emptyset)$. For every $v \in S^{=1}$ there is exactly one $u \in L$ such that $\phi(u) = v$. We will denote $u$ as preimage of $v$ in $\phi$.

    To find the number of mappings in $F(S^{=1}, S^{=0}_i, \emptyset)$ we will distribute them into non-intersecting classes, find number of mappings in each class and then just sum these values.

    Consider some $\phi \in F(S^{=1}, S^{=0}_i, \emptyset)$, for each $v \in S^{=1}$ find $\phi^{-1}(v)$. Now every element of $L$ is either preimage for some (and only one!) element in $S^{=1}$ or not a preimage for any element. Our goal is to partition $L$ into $\theta$ subsets $L_1, L_2, \ldots, L_{\theta}$ such that every subset (except for probably the last) has $n' = \frac{\alpha n}{\theta}$ elements that are preimages for some elements from $S^{=1}$. This can be done by a simple greedy algorithm, just choose $L_1$ as first $p_1$ elements from $L$ such that there are $n'$ base elements and make $p_1$ as small as possible. After that choose $L_2$ as elements from $p_1 + 1$ to $p_2$ containing $n'$ base elements and so on. Finally, $L_{\theta}$ will contain elements from $p_{\theta - 1}$ to $|L|$. After this transformation, each element $v \in S^{=1}$ has index of subset of $L$ of its base, call it $b(v)$.

    Finally, we call the tuple $(p_1, p_2, \ldots, p_{\theta - 1}, b(S^{=1}[1]), b(S^{=1}[2]), \ldots, b(S^{=1}[|S^{=1}|]))$ the trace of mapping $\phi$. It's clear that every $\phi$ has a unique trace, so it's enough to find number of mappings for each trace and just sum them.  

    For $K \subseteq L$ and $f \in \{0, 1\}$ denote $G_K(S^{=1}, S^{=0}, f)$ as a set of mappings $\phi$ from $K$ to $R$ such that
    \begin{itemize}
        \item for every $u \in K$ $(u, \phi(u)) \in E$;
        \item for every $v \in S^{=1}$ there is exactly one $u \in K$ such that $\phi(u) = v$;
        \item for every $v \in S^{=0}$ there are no $u \in K$ such that $\phi(u) = v$;
        \item if $f = 1$, then $K$ is non-empty and for vertex $w$ with the maximum number in $K$ holds $\phi(w) \in S^{=1}$;
        \item if $f = 0$, then there are no additional restrictions.
    \end{itemize}

    For every trace, consider $W(k)$ to be a set of $v \in R$ such that $b(v) = k$. In this case, the number of mappings with a given trace is \[\left(\prod_{k = 1}^{\theta - 1} |G_{L_k}(W(k), S^{=0}_i \cup (S^{=1} \setminus W(k)), 1)|\right) \cdot |G_{L_\theta}(W(\theta), S^{=0}_i \cup (S^{=1} \setminus W(\theta)), 0)|,\] so our polynomial is just a sum of these monomials for all possible traces.

\begin{lemma}
    Each variable $|G_{K}(S^{=1}, S^{=0}, f)|$ can be calculated in time $O^*(2^{(|S^{=1}|)})$.
\end{lemma}
\begin{proof}
    For every $0 \leq i \leq |K|$ let $K_i$ be the set of the first $i$ elements in $K$.

    For every $0 \leq i \leq |K|$, $T^{=1} \subseteq S^{=1}$ and $f \in \{0, 1\}$ we compute a dynamic programming $dp[i][T^{=1}][f]$ which is a number of mappings from $K_i$ to $R$ such that
    \begin{itemize}
        \item for every $u \in K_i$ $(u, \phi(u)) \in E$;
        \item for every $v \in T^{=1}$ there is exactly one $u \in K_i$ such that $\phi(u) = v$;
        \item for every $v \in S^{=0}$ there are no $u \in K_i$ such that $\phi(u) = v$;
        \item if $f = 1$, then $i > 0$ and for a single vertex $\{u\} \in K_i \setminus K_{i - 1}$ $\phi(u) \in S^{=1}$;
        \item if $f = 0$, then there are no additional restrictions.
    \end{itemize}

    It's clear that $dp[0][T^{=1}]$[f] equals to $1$ if $T^{=1} = \emptyset$ and $f = 0$ and equals $0$ otherwise. By definition, $dp[|K|][S^{=1}][f] = |G_{K}(S^{=1}, S^{=0}, f)|$.

    Consider $dp[i][T^{=1}][f]$ for some $i > 0$ and let $u$ be the single element in $K_i \setminus K_{i - 1}$. Let $N(u)$ be the set of vertices $v \in R$ such that $(u, v) \in E$.
    
    If $f = 1$, $u$ can be mapped only to some element from $T^{=1}$, so
    \[dp[i][T^{=1}][1] = \sum_{v \in N(u) \cap T^{=1}} dp[i - 1][T^{=1} \setminus v][0].\]

    If $f = 0$, $u$ can be either mapped to some element from $T^{=1}$ (the number of such mappings is calculated above) or to some element from $N(u) \cap (R \setminus S^{=1} \setminus S^{=0})$, so
    \[dp[i][T^{=1}][0] = \sum_{v \in N(u) \cap T^{=1}} dp[i - 1][T^{=1} \setminus v][0] + |N(u) \cap (R \setminus S^{=1} \setminus S^{=0})| \cdot dp[i - 1][T^{=1}][0].\]

    The number of states in dynamic programming is $2^{|S^{=1}|} \cdot n$ and each value can be calculated in $O(n)$ time, so $|G_{K}(S^{=1}, S^{=0}, f)|$ can be calculated in time $O^*(2^{(|S^{=1}|)})$.
\end{proof}

There are $\frac{n(n + 1)}{2}$ ways to choose $L_k$ for $G_{L_k}(W(k), S^{=0}_i, f)$ and $\binom{n}{\alpha n / \theta}$ ways to select a $W(k)$, so there are $\frac{n(n + 1)}{2} \binom{n}{\alpha n / \theta}$ variables $G$, each of which can be computed in time $O^*(2^{\alpha n / \theta})$. So we can compute all the variables in time $O^*(2^{\gamma n})$.

There are $\binom{n}{\theta}$ ways to select $(p_1, p_2, \ldots, p_{\theta})$ for trace and $\binom{\alpha n}{\alpha n / \theta}^{\theta}$ ways to select $(b(S^{=1}[1]), b(S^{=1}[2]), \ldots, b(S^{=1}[|S^{=1}|]))$ for trace since it is a partition of a set $S^{=1}$ of size $\alpha n$ into $\theta$ sets of size $\alpha n / \theta$, so there are $\binom{n}{\theta} \binom{\alpha n}{\alpha n / \theta}^{\theta} = O^*(2^{\theta n})$ traces and $O^*(2^{\theta n})$ monomials. So, the final polynomial has degree $\theta$ and can be constructed in time $O^*(2^{\theta n})$.
\end{proof}

\end{itemize}
   
\end{proof}

\section*{Acknowledgements} 
We~are grateful to~Amir Abboud for suggesting 
to~state explicitly the POSETH conjecture as~well~as
for Marvin Künnemann and Michael Saks for asking
whether polynomial formulations could help 
to~show hardness of~proving SETH hardness of~problems
from~\P{}. We~thank Alexander Golovnev for 
providing~us with a~useful feedback on~an~early
draft.

\bibliographystyle{alpha} 
\bibliography{main.bib}

\end{document}